\newcolumntype{C}[1]{>{\centering\arraybackslash}m{#1}}
\theoremstyle{plain}        
\theoremstyle{plain}        
\theoremstyle{plain}		\newtheorem{prop}{Proposition}
\theoremstyle{plain}		
\theoremstyle{plain}		
\theoremstyle{plain}        \newtheorem{cor}{Corollary}
\theoremstyle{plain}		\newtheorem{remark}{Remark}
\theoremstyle{plain}		\newtheorem{definition}{Definition}
\theoremstyle{plain}        \newtheorem{ass}{Assumption}
\theoremstyle{plain}        \newtheorem{example}{Example}
\theoremstyle{plain}        \newtheorem{defn}{Definition}
\theoremstyle{plain}        \newtheorem{corollary}{Corollary}
\begin{document}
    \tikzstyle{VertexStyle} = [shape            = ellipse,
                               minimum width    = 6ex,%
                               draw]

    \tikzstyle{EdgeStyle}   = [->,>=stealth']

\title{Heterogeneous Treatment Effects and\\Causal Mechanisms\thanks{We thank Scott Abramson, Neal Beck, Matt Blackwell, Andrew Little, Molly Offer-Westort, and Scott Tyson; seminar audiences at New York University, Princeton, and Berkeley; and participants at the NYU Abu Dhabi Theory in Methods Workshop and PolMeth XL for helpful feedback.}}
\date{\today}
\author{Jiawei Fu\footnote{Assistant Professor, Duke University. \url{jiawei.fu@duke.edu}} \and Tara Slough\footnote{Associate Professor, New York University. \url{tara.slough@nyu.edu}}}

\maketitle
\thispagestyle{empty}
\begin{abstract}
The credibility revolution advances the use of research designs that permit identification and estimation of causal effects. However, understanding which mechanisms produce measured causal effects remains a challenge. The dominant current approach to the quantitative evaluation of mechanisms relies on the detection of heterogeneous treatment effects (HTEs) with respect to pre-treatment covariates. This paper develops a framework to understand when the existence of such heterogeneous treatment effects can support inferences about the activation of a mechanism. We show first that this design cannot provide evidence of mechanism activation without additional, generally implicit, exclusion assumptions. Further, even when these assumptions are satisfied, the presence of HTEs supports the inference that mechanism is active but the absence of HTEs is generally uninformative about mechanism activation. We provide novel guidance for interpretation and research design in light of these findings.

\end{abstract}

\thispagestyle{empty}

\clearpage
\doparttoc 
\faketableofcontents 

\setcounter{page}{1}
\doublespacing

\begin{bibunit}
Social scientists often make claims about \emph{how} or \emph{why} a treatment affects a given outcome. Researchers might tell us, for example, not only that an outreach program increased COVID-19 reporting rates, but that it did so \emph{by} building trust in government. Or researchers might claim that access to local media increases knowledge of local politics because of the content of news coverage, \emph{not} because of local political advertising. Claims of this type are often made on the basis of heterogeneous treatment effects (HTEs): the effect of the outreach program on COVID-19 reporting is larger in places that initially distrusted government \citep{haimetal2021}, for example, or the effect of local media on knowledge of local politics does not change over the election cycle \citep{moscowitz2021}. In fact, a majority of recent articles that report HTEs interpret them as evidence about \emph{how} or \emph{why} a treatment affects a given outcome.

\label{comment:theory}But the theoretical foundation for making such claims lags far behind empirical practice. While we have strong theoretical foundations for making claims about treatment effects themselves, theory provides little guidance about the relationship between HTEs and tests of mechanisms. We provide that guidance, developing a theoretical framework that clarifies the assumptions that researchers need to make in order to use HTEs as evidence for or against potential mechanisms. Our analysis reveals that current practice is often misleading---but also that, as is so often the case, making implicit assumptions explicit can correct such errors in future work.

This framework is important because political scientists so often use HTEs to make claims about causal mechanisms. Surveying the 2021 volumes of three leading journals in political science---the \emph{American Journal of Political Science} (\emph{AJPS}), the \emph{American Political Science Review} (\emph{APSR}), and the \emph{Journal of Politics} (\emph{JoP})---we find that a majority (56\%) of quantitative studies estimate HTEs and that, conditional on reporting any HTEs, the vast majority of articles (82\%) interpret them as providing information about mechanisms. Taken together, these figures indicate that almost half (46\%) of recent quantitative empirical articles in these journals use HTEs to assess mechanisms.\footnote{Table \ref{tab:class2} further documents that the share of studies that use of HTE for mechanism detection is similar across all quantitative research designs/identification strategies in common usage.} Beyond our analysis, \citet[][p. 3]{blackwelletal2024} show that HTEs are the modal way that authors in these journals analyze causal mechanisms.

\begin{table}
\resizebox{\textwidth}{!}{
\begin{tabular}{l|ccc|cc|c} \hline
&\multicolumn{3}{c|}{Number of articles:} & \multicolumn{2}{c|}{Prevalence of HTEs} & Importance \\ \cline{2-7}
& & Quantitative & Reporting&Pr(Report HTEs $\mid$& Pr(Mechanism test $\mid$  & Pr(In abstract $\mid$\\
Journal (Volume) & Total & empirical & HTEs & Quant. empirical) & Report HTE) & Mech. test)  \\ \hline

\emph{AJPS} (65) & 61& 41 & 24 & 0.59 &  0.83 & 0.65\\
\emph{APSR} (115) & 102 & 75 & 42 & 0.56 & 0.90  & 0.63\\
\emph{JoP} (83) & 142 & 106 & 59 & 0.56 &0.76  & 0.82\\\hline
Total & 305 & 222 & 125 & 0.56 & 0.82 &  0.72\\ \hline
\end{tabular}}
\caption{Authors' classification of articles published in three leading political science journals in 2021.}\label{tab:class}
\end{table}

Clearly, the use of HTEs for mechanism attribution is quite common. However, it may be the case that these tests are viewed as secondary in importance to the ``main effect'' or the treatment effect in the full sample. While there is likely variation in the extent to which readers value main effects versus HTEs, authors commonly emphasize these mechanism tests as important results. The right column of Table \ref{tab:class} shows that conditional on relying on HTEs for mechanism evaluation, a majority (72\%) of articles mention these results about mechanisms in the abstract. Given word constraints, we view this as evidence that authors place non-trivial weight on HTE-based tests of mechanisms as important results of their analyses. 

We ask an essential but as-yet-unanswered question: under what conditions do HTEs provide evidence of mechanism activation? We do so by extending the workhorse causal mediation framework \citep{imai2010general}. We define a \emph{mechanism} as an underlying process that influences experience in order to produce a (causal) effect when activated \citep{sloughtyson2023}. In order to use HTEs to detect mechanisms, empiricists rely on a measured \emph{moderator}, or pre-treatment variable, that is thought to predict the degree to which treatment activates a mechanism and/or the degree to which the mechanism affects on the outcome of interest. Our framework provides a minimal structure necessary to link moderators to mechanisms to understand what can be learned from the presence or absence of HTEs. Our results characterize the conditions under which HTEs---specifically, a difference in conditional average treatment effects (CATEs) at different levels of a moderator---are sufficient to show that a specific mechanism is active. 

Our framework first elucidates the assumptions that are invoked for mechanism attribution. Specifically, if a covariate moderates the effect of a mechanism, it cannot also be a moderator for any other mechanism(s). If it were, there would be no way to determine which mechanism is responsible for the observed heterogeneity. Our first identification result shows that a difference in CATEs is generically equivalent to the difference in the conditional average indirect effect attributable to a mechanism if and only if exclusion assumptions hold. This means that the covariate does not moderate the effects of other mechanisms. Comparing these assumptions to those invoked by other methods for the quantitative study of mechanisms, namely the assumption of sequential ignorabililty in causal mediation analysis, the exclusion assumptions  are neither (logically) stronger nor weaker than sequential ignorability. This means that using HTEs for mechanism detection is not more or less agnostic than mediation analysis. However, one crucial benefit of HTEs is that they can provide information about mechanisms without measurement of mediators \citep[][p. 785]{imai2011unpacking}. 

Learning about a mechanism from HTEs requires more than exclusion assumptions. We introduce the concept of a mechanism detector variable (MDV), a covariate that predicts a stronger activation of a mechanism or a stronger effect of a mechanism on an outcome. Under the exclusion assumptions, the existence of HTEs at different levels of a covariate reveals that the covariate is an MDV. This is, in turn, sufficient to infer that the mechanism in question is active for at least one unit in the sample. This accords with current interpretation of mechanism tests that rely on HTEs. However, if HTEs do not exist for a covariate that is a candidate MDV, we do not learn whether the mechanism is active or inert. The lack of heterogeneity could be a consequence of misspecification of the theory, a mechanism that produces the same effect for all units, or an inert/inactive mechanism. This limits our ability to rule out the activation of a mechanism using HTEs.

Finally, we show how choices about which outcomes to measure and how to measure them can limit the use of HTEs as a test of mechanisms. Specifically, the exclusion assumptions that we articulate imply that the effects of a mechanism on an outcome should be additively separable from the effects of other mechanisms on that outcome. But common non-linear transformations of the outcome variable---e.g., binning, logging, or winsorization---violate this property of additive separability, generating heterogeneity that is not informative about mechanism activation.\footnote{\label{fn:functionalform}Our focus on additive separability stems from the widespread current practice of comparing conditional treatment effects across subgroups. Different measures of mechanistic influence will generally imply different functional form assumptions.} This finding suggests that the way that we measure the effects of a mechanism can limit our ability to attribute those very effects to the mechanism. It reveals the need to be more explicit about the relationship between mechanisms and measured outcomes than is common practice.

This paper makes three principal contributions. First, we introduce a new framework to understand the theoretical relationship between causal mechanisms and treatment effect heterogeneity. A large methodological literature provides guidance on the estimation of heterogeneous treatment effects (or ``interaction effects'') \citep[e.g.,][]{bramboretal2017,berryetal2009,hainmuelleretal20,grimmeretal2017,atheyetal2019}. More recent contributions the use of HTEs for extrapolation, prediction, or targeting of treatments \citep[e.g.,][]{egamihartman2020,huang2024sensitivity,devauxegami2022,kitagawatetenov2018,atheywager2021}. Yet, because these contributions are primarily statistical, they do not facilitate theoretical analysis of the relationship between mechanisms and the (measured) treatment effects that they produce. Our primary results on the theoretical properties of HTEs---and the problems with existing practice---therefore complement our understanding important statistical issues associated with the estimation of HTEs, namely limited statistical power \citep[e.g.,][]{mcclellandjudd376} and multiple-comparisons problems \citep{gerbergreen2012,leeshaikh2014,finketal2014}. 

Second, we expand a growing literature on the theoretical implications of empirical models (TIEM) \citep{bdmtyson2020, ashworth2021theory,ashworthetal2023,abramson2022we,slough2022phantom}. We make two central interventions to this literature. First, our framework makes explicit links between a causal mediation framework that is used more prominently by empiricists and (formal) theoretical models. This analysis complements recent work by \citet{blackwelletal2024}, who make explicit the strong assumptions that underpin the use of treatment effects on intermediate outcomes to discern causal mechanisms. Second, we introduce questions about how measured outcomes relate to theoretical constructs. While measurement is central to recent TIEM work on evidence accumulation in a cross-study environment \citep{sloughtyson2023ajps, sloughtyson2023, slough2025sign}, it has not been widely explored in the single-study environment. 

Finally, we provide practical guidance for empirical researchers who want to learn about which mechanisms generate observed effects. We illustrate this guidance concretely by analyzing four recent empirical studies about how partisan affinity (or bias) condition voter responses to corruption revelation \citep{anduizaetal2013,ariasetal2019,defigueredoetal2023,eggers2014}.
Our assumptions and results reveal a minimal set of attributes of an applied theory that can support the use of HTEs to learn about mechanisms. Two of these attributes, the relationships between (1) a covariate of interest and other mechanisms and (2) measured outcomes and theoretical objects of interest, are generally not discussed in applied work. Second, we show how interpretation of HTEs can be improved, returning to the statistical problems that are well known in this literature. Third, we discuss how our analysis can be used to inform prospective research design. Finally, we consider the merits of invoking additional assumptions or statistical models to address some of the issues we identify. Collectively, these suggestions allow practitioners to accurately use---or, when indicated, avoid---HTEs as a quantitative test of mechanisms.

\section{Motivating Example: Corruption Revelation and Voting}

Despite widespread unpopularity of corruption by public officials in public opinion polls, voters in many democracies routinely re-elect politicians who engage in corruption. One explanation for the prevalence of public corruption in many democracies is that corruption by specific politicians is not observed by voters. If voters were to receive information that an incumbent or candidate were corrupt, then they should exhibit less support for the candidate. Scholars have evaulated the empirical merits of this argument using field experiments \citep[e.g.,][]{dunningetal2019}, randomized corruption audits \citep[e.g.,][]{ferrazfinan2008}, survey experiments \citep[e.g.,][]{anduizaetal2013}, and observational studies \citep[e.g.,][]{eggers2014}. Existing meta-studies document heterogeneity in voter responses to such information \citep{incerti2019,slough2024}. Here, we examine whether such heterogeneity is informative about a voter (dis)taste for corruption mechanism.

Specifically, consider a randomized experiment in which the treatment reveals that an incumbent in a constituency has engaged in corruption. We will say that $c_i=1$ if voter $i$ receives the corruption information treatment and $c_i=0$ if they do not receive the information. Voters value multiple attributes of politicians. First, they dislike corruption, albeit to different degrees, where $\lambda_i\in(0,1)$ captures variation in corruption aversion across voters. 
Second, voters value partisan alignment with politicians. Specifically, voter $i$  may be aligned $(a_i=1)$, independent/neutral ($a_i=0$), or unaligned $(a_i=-1)$ with an incumbent politician. Finally, voters idiosyncratically value valence characteristics of a politician (e.g., their personality or personal background), which we represent with the random variable $\varepsilon_i$. We assume that $\varepsilon_i$ is independent of $\lambda_i$ and $a_i$, and follows a standard normal distribution. Each voter's utility from a vote for the incumbent is given by:
\begin{align*}
u_i &= -\lambda_i c_i + a_i + \varepsilon_i
\end{align*}

Corruption information enters voters' assessment of the incumbent through their distaste for corruption, $\lambda_i$. One can think of the product $\lambda_i c_i$ as akin to a mediator that captures the degree to which this distaste is activated by the information treatment $c_i$. Ultimately voters choose between the incumbent and a challenger. Without loss of generality, we normalize a voter's utility from a vote for the challenger to be 0. Therefore, voters will vote for the incumbent if $u_i \ge 0$.

\subsection{From Theory to Empirical Research Design}

Mapping this model onto the empirical research design, we consider two measures of voting outcomes. The first outcome, $y_{i1}$, measures voters' (expected) utility from the incumbent. It is obviously difficult and rare to measure utility from a candidate directly. \label{comment:likert}However, one could, in principle, ask a voter to evaluate their incumbent on a 0-100 scale (or elicit willingness-to-pay for the incumbent's reelection). As in the model, the voter's expected utility from a vote for the incumbent is given by:
\begin{align}\label{eq:utility}
y_{i1}(c) &= -\lambda_ic_i+a_i+\varepsilon_i
\end{align}
The second outcome, $y_{i2}$, measures each voter's (self-reported) vote choice for the incumbent. Vote choice is a commonly measured outcome in literature on voter behavior. This outcome is given by:
\begin{align}
y_{i2}(c) &= \begin{cases}
1 &\text{ if } -\lambda_ic_i+a_i+\varepsilon_i \geq 0\\
0 &\text{ else }
\end{cases}
\end{align}

The data generating process of  the model is depicted in Figure \ref{si:fig_dag}. As we report in Table \ref{tab:class}, empirical researchers often turn to estimation of heterogeneous treatment effects with respect to a pre-treatment covariate to assess the activation of a mechanism. In the context of an experiment, heterogeneity is assessed through the estimation of conditional average treatment effects (CATEs) at different levels of a pre-treatment covariate, $X_k$. Given outcomes $j\in \{1,2\}$:
\begin{align}
CATE(y_j, X_k)= E[y_{ij}(c_i = 1)-y_{ij}(c_i = 0)|X_{ik}= x]
\end{align}
We will say that treatment effects are heterogeneous if for some $x, x' \in X_k$ where $x\neq x'$, $CATE(y_j, X_k = x)- CATE(y_j, X_k = x') \neq 0$. Importantly, whether treatment effects are \emph{homogeneous} or \emph{heterogeneous} in a given covariate, is fundamentally a qualitative classification. This means that even though researchers are using a quantitative metric to evaluate mechanisms, they are fundamentally making a \emph{qualitative} inference about mechanism activation.

Recall that standard interpretations in the empirical literature use the presence of HTEs as evidence that a mechanism is active and the absence of HTEs to assert that a mechanism is inert. Given our model of voter behavior, we ask: do heterogeneous treatment effects provide evidence that the relevant mechanism is indeed voter distaste for corruption? To develop intuitions, we evaluate four HTE combinations using moderators  $X_1=\Lambda$ and $X_2= A$, where $\Lambda$ is the set of all possible values of $\lambda_i$ (corruption aversion) and $A$ is the set of all possible values of $a_i$ (partisan alignment with the incumbent), and outcomes $y \in \{y_1, y_2\}$. Remark \ref{remark1} shows that for the outcome measuring a voter's expected utility from a vote for the incumbent ($y_1)$, heterogeneity in CATEs correctly provides evidence that the mechanism is voter distaste for corruption, not some type of corruption-induced partisan-realignment or biased learning that depends on partisanship \citep[e.g., motivated reasoning in ][]{littleetal2022} (for example) that are not present in our model\label{r2:motreasoning}.

\begin{remark}\label{remark1}For the outcome measuring voter preferences, $y_1$,
    
    (a)  Given $\lambda > \lambda' \in \Lambda$,  $|CATE(y_1, X_1 = \lambda)|>|CATE(y_1, X_1 = \lambda')| $.
    
    (b)  Given $a \neq a' \in A$,  $CATE(y_1, X_2 =a)-CATE(y_1, X_2 =a') = 0$.
    
    (c)   If $CATE(y_1, X_k=x )-CATE(y_1, X_k=x') \neq 0$, then $X_k=X_1=\Lambda$. 

\noindent(All proofs in appendix.)
\end{remark}

Researchers will detect heterogeneity with the corruption aversion moderator $\lambda$ (by (a)). Voters whose corruption aversion  $\lambda$ is larger value the incumbent less (upon receiving the information) than voters with lower corruption aversion. HTEs are not observed for the partisan alignment (non)-moderator $a$ (by (b)) because ideological alignment (not the mechanism) and distaste for corruption (the mechanism) are additively separable in voters' utility (in \eqref{eq:utility}). Here, (c) shows that researchers are unlikely to mis-attribute the mechanism through conventional interpretation of HTEs. 

However, for the vote choice for the incumbent, $y_2$, the results from Remark \ref{remark1} change. First, researchers may observe HTEs for different levels of partisan alignment, $a$, as well as for different levels of corruption aversion, $\lambda$. A na\"{i}ve interpretation might suggest that the effect of the corruption information \emph{does} work through some channel involving ideological re-alignment or biased learning in addition to a channel involving distaste for corruption. 

\begin{remark}\label{remark2}For the outcome measuring voter choice $y_2$,
    
    (a)  For $\lambda > \lambda' \in \Lambda$,  $|CATE(y_2,  \lambda)|>|CATE(y_2,  \lambda')|$.
    
    (b)  For $a \in A$,  $|CATE(y_2, a=-1)|<|CATE(y_2, a=1)|<|CATE(y_2, a=0)|$.
    
    (c)   If $CATE(y_2, X_k=x)-CATE(y_2, X_k=x' ) \neq 0$, then $X_k = X_1=\Lambda$ or $X_k =X_2=A$.
    
\end{remark}

Voters with stronger aversion to corruption exhibit larger treatment effects on their voting behavior (by (a)). Interestingly, even though distaste for corruption is the unique mechanism (in the model) through which information affects vote choice, we now observe heterogeneous treatment effects across voters of different partisan alignments. Specifically, the treatment effect is largest among neutral/independent voters ($a=0$) voters, while strongly aligned partisans ($a=-1,1$) exhibit smaller changes in voting behavior, by (b). Moreover, the effect is greater for incumbent-aligned voters than for challenger-aligned voters. This pattern is driven by a measurement concern: ceiling and floor effects that emerge when utility is transformed into vote choice. Aligned voters are substantially more likely to vote for or against the incumbent regardless of new information, while moderates hover around the decision threshold, making them more responsive to changes in observed corruption.

Importantly, the smaller swings observed among strong partisans are not due to the presence of an additional mechanism. Rather, it is because their prior disposition already places them near the upper or lower bounds of the probability scale in terms of propensity to vote for the incumbent. Researchers who are unaware of the underlying mechanism might mistakenly conclude that partisan voters are ``less responsive" to treatment and, as a result, wrongly infer that they are engaged in biased learing/insensitive to information or that they are less prone to partisan realignment on the basis of corruption information.

More concretely, we see HTEs in partisan affiliation ($a$) for vote choice because voters make a binary choice between the incumbent and challenger. But this binary choice means that the distaste for corruption mechanism and the partisan alignment predictor (which does not moderate the mechanism) are no longer additively separable with respect to vote choice. Consequently, researchers are apt to detect HTEs in partisan alignment even when it does not moderate any mechanism. \label{e:naive}A na\"{i}ve interpretation of this test would lead to a Type-I error in our inference about the activation of a mechanism involving partisan alignment.

\label{comment:sim}While Remarks \ref{remark1} and \ref{remark2} rely on theoretical analysis of a model, one might ask whether an empirical researcher would be likely to detect this heterogeneity in their data. We therefore conduct a Monte Carlo simulation in Appendix \ref{app:sim} that uses our theoretical model to guide the data generation process from a hypothetical experiment. We vary: (1) (average) voter support for the incumbent in the electorate ($\overline{y_2 (0)} \in \{0.1,0.2,...,0.9\}$) by varying the mean of the valence distribution; and (2) sample size in the hypothetical experiment ($n \in \{100, 500, 1,000\})$. Consistent with our theoretical results, we observe  HTEs in the corruption aversion covariate for both outcomes (voter utility and vote choice) and HTEs in partisan affiliation only for the vote choice outcome. More importantly, with the finite sample sizes that we simulate, we show that for high and low values of voter support for the incumbent in the electorate, we are \emph{more likely (better powered) to detect HTEs in partisan affiliation} than in corruption aversion for the vote choice outcome. This finding holds across multiple estimators of the difference in CATEs (treating $a_i$ linearly or as a factor variable). Thus, we cannot solely rely on the statistical properties of our designs to help us screen heterogeneity that is informative about mechanism activation from heterogeneity that is not.

This example yields three important observations that we develop by proposing a new framework: 
\begin{enumerate}
\item The use of HTEs does, in some cases (i.e., Remark \ref{remark1}), provide information about mechanism activation. This accords with current practice.
\item The use of HTEs to measure mechanism activation relies on assumptions about the relationship between moderators and mechanisms of interest which are typically implicit. 
\item The contrast between Remarks \ref{remark1} and \ref{remark2} in which the theory (and thus mechanism) is fixed but outcomes differ shows that the use of HTEs for assessing mechanism activation depends on the measurement of outcomes of interest. 
\end{enumerate}

\section{Framework}

\subsection{Defining HTEs}

Our framework is built upon the potential outcomes framework or Neyman-Rubin causal model \citep{neyman1923applications,rubin1974estimating}. \label{comment:z}We denote a randomly assigned treatment by $Z \in \{z, z'\}$.\footnote{All results hold for observational studies in which treatment is conditionally independent of potential outcomes, given a conditioning set of covariates, $X$.} In order to consider HTEs with respect to pre-treatment moderators, denote the vector of pre-treatment covariates by $X=(X_1,X_2,...,X_K) \in \mathbb{R}^K$. Some of $X_k$ are \emph{measured}. To improve readability and emphasize the key elements of the framework, we omit subscript $i$ from the notation throughout the framewrok and results sections.

A valid mediator, or mechanism representation, should: (1) be affected by treatment, $Z$, and (2) have a non-zero effect on the outcome. Further, the effect of treatment on a mediator or the mediator's effect on the outcome could vary with some covariate(s), $X_k$. We define a mediator as a function denoted by $M(Z; X)$.\footnote{Note that potential outcomes are often written as a function of only the manipulated treatment, e.g., $M(Z)$ and $Y(Z)$, to reflect ``no causation without manipulation'' \citep[][p. 959]{holland1986}. We choose to denote covariates $X$ as arguments to both potential outcomes in order to clarify the structure of covariates and mediators in our potential outcomes.} It represents the potential outcomes of causal mediator given treatment $Z$ and covariates $X$. While we employ mediators to clarify the link between HTEs and mediation, these mediators (or mechanisms) do not need to be measured in order to analyze HTEs. Finally, we denote a potential outcome by $Y(Z, M; X)$. 

We consider the practice of quantifying heterogeneous treatment effects by estimating and comparing CATEs, as documented in Table \ref{tab:class}. Following this convention, we consider HTEs with respect to pre-treatment moderators, i.e., for some variable $X_k$, where $k \in \{1,2,...,K\}$. 

\begin{definition}[Conditional Average Treatment Effect] \label{def:cate}Consider pre-treatment covariate $X_k$. Given that $z \neq z' \in Z$, the conditional average treatment effect (CATE) of $Z$ on $Y$ when $X_k = x$ is:
\[CATE^Y(X_k = x) = E_{X_{\neg k}} [Y(Z = z, M; X)-Y(Z = z',M; X)| X_k = x]\] 
\end{definition}
Note that the CATE of a treatment $Z$ is defined with respect to (potential) outcome variable $Y$. We will index estimands by the outcome variable of interest throughout. Given this definition, there exist HTEs when CATEs of $Z$ on $Y$ differ at different values of a covariate $X_k$:


\begin{definition}[Heterogeneous treatment effects]\label{def:hte} HTEs on $Y$ exist with respect to pre-treatment covariate $X_k$ if $CATE^Y(X_k = x) \neq CATE^Y(X_k = x')$ for some $x\neq x' \in X_k$.\footnote{More precisely, the probability measure for the set that contains such $x$ and $x'$ is non-zero.} 
\end{definition}

Definitions \ref{def:cate}-\ref{def:hte} formalize the current practice of comparing CATEs to evaluate whether treatment effects exhibit heterogeneity. However, while our potential outcomes implicitly depend on a mediator, $M$, there is not yet a link to an underlying mechanism.

\subsection{Causal Mediation and Indirect Effects}\label{sec:mediation}

We now develop the mapping between analysis of treatment effect heterogeneity and causal mediation, which seeks to quantify the effect of one or more mechanisms. Specifically, consider a decomposition of the total effect (on a unit, $i$) into direct and indirect effects \citep{imai2013identification}.  Suppose that there exist two mediators (or mechanisms), indexed by $M_1, M_2$. Given two treatment values, $z, z' \in Z$, the total effect of $Z$ on $Y$ is:
\begin{align}
TE^Y(z, z'; X) &= Y(z, M_1(z; X), M_2(z; X); X)- Y(z', M_1(z'; X), M_2(z'; X); X)
\end{align}

Our notation varies slightly from conventional presentations of mediation that only consider one mechanism (mediator) \citep{imai2010general}. In the main text, we describe the case with two mechanisms since it is straightforward to generalize this to the special case of one mechanism or to a setting with more than two mechanisms. Further, as above, we continue to index causal effects by the outcome variable, here $Y$. Treatment effects on $Y$ may consist of direct ($DE^Y$) and indirect ($IE^Y_1$ and $IE^Y_2$) effects, as follows: \footnote{See \citet{acharya2016explaining} for the identification of the controlled direct effect.}
\begin{align}
DE^Y(z, z'; X) = Y(z, M_1(z; X),  M_2(z; X); X)- Y(z', M_1(z; X), M_2(z; X); X)\\
IE^Y_1(z, z'; X) = Y(z', M_1(z; X), M_2(z; X) ; X)- Y(z', M_1(z'; X),M_2(z; X) ; X)\\
IE^Y_2(z, z'; X) = Y(z', M_1(z'; X) ,M_2(z; X) ; X)- Y(z', M_1(z'; X), M_2(z'; X) ; X)
\end{align}

The direct effect, $DE^Y(z, z';X)$ represents the direct effect of $Z$ on $Y$ holding both mediators $M$ at potential outcomes $M(z; X)$. Our preferred interpretation of the direct effect is the composite effect of any other mechanisms (aside from $M_1$ and $M_2$). However, the direct effect could also include unmediated effects of treatment on an outcome. In our motivating example, there is only one mechanism---voter distaste for observed corruption---so the direct effect (all other mechanisms) is zero. This is evident because the treatment has zero effect if were fix the mediator, $\lambda_ic_i$, to a given level. The indirect effect of mechanism $j \in \{1,2\}$ measures the effect on the outcome that operates by changing the potential outcome of mediator $M_j$. In our example, the indirect effect measures the effect that passes through the distaste mechanism. 

As is standard, we can re-write the total effect as follows:\footnote{Here, we assume that the direct and indirect effects do not vary at different levels of Z. See more discussion by \citet{imai2011unpacking}. All results in the paper hold with other decompositions, albeit with slightly different interpretations. See additional discussions of general cases in the \ref{app:multi} and correlated mechanisms in \ref{app:mediation}.}
\begin{align}\label{eq:te}
TE^Y(z, z';X) &= DE^Y(z, z';X) +  IE^Y_1(z,z'; X) + IE^Y_2(z,z'; X), 
\end{align}
which is defined at the unit, or individual level. If we evaluate expectations over $X$, we obtain:
\begin{align}
ATE^Y(z, z') &= E_{X}[Y(z,M_1(z;X),M_2(z;X);X)- Y(z',M_1(z';X),M_2(z';X);X)] \\
&= E_{X}[DE^Y(z, z'; X) +  IE^Y_1(z,z'; X)+IE^Y_2(z,z'; X)] \label{equ:ade}\\
&:=  ADE^Y(z, z')+ AIE^Y_1(z,z')+AIE^Y_2(z,z')\label{equ:aie}
\end{align}

We use $ADE^Y$ and $AIE^Y_j$ to denote average direct effect and average indirect effect of mechanism $j$ for outcome $Y$, respectively. Throughout the paper, we assume the expectation in \eqref{equ:ade} is well-defined. 

\subsection{HTEs and the Identification of Indirect Effects}\label{ss:identification}

How do HTEs---differences in CATEs---relate to the indirect effects that are estimated within the mediatiation framework? To show this relationship, consider a decomposition of a CATE into a conditional ADE and two conditional AIEs following \eqref{equ:aie}\footnote{Conditional $ADE^Y(z,z';X_k=x)$ and $AIE_j^Y(z,z';X_k = x)$ evaluate expectations over all $X$ but fixing $X_k$ at $x$.}:
\begin{align*}
CATE^Y(X_k=x)= ADE^Y(z,z';X_k=x) + \sum_{j=1}^2 AIE^Y_j(z,z';X_k = x)
\end{align*}
One can then express the difference in CATEs, at two distinct levels of $X_k$ as:
\begin{align}
\begin{aligned}
CATE^Y(X_k=x)-CATE^Y(X_k = x')=&[ ADE^Y(z,z';X_k=x) - ADE^Y(z,z';X_k=x')]  \\ &  +   \sum_{j=1}^2 [AIE^Y_j(z,z';X_k = x)- AIE^Y_j(z,z';X_k = x')]   \label{eq:catedif}
\end{aligned}
\end{align}

From \eqref{eq:catedif}, it is clear that HTEs could arise from differences in direct and/or indirect effects. Suppose that we were interested in evaluting the activation of mechanism 1 with respect to outcome $Y$. Expression \eqref{eq:catedif} shows that we cannot automatically attribute observed heterogeneity to mechanism 1. Instead, to link a difference in CATEs (HTEs) to a difference in indirect effects, we need to assume that the conditional direct effect $ADE^Y(z,z';X_k = x)$ and any conditional indirect effect(s) of the other mechanism(s) $AIE^Y_2(z, z';X_k = x)$ do not vary at different levels of $X_k$, as stated in Assumptions \ref{cond:irre1}-\ref{cond:irre2}. Assumption \ref{cond:irre1} formalizes an exclusion assumption posited by \citet{imai2011unpacking}, who focus on the case of a single mechanism.\footnote{Specifically, \citet[][p. 784]{imai2010general} write ``if the size of the ADE does not depend on the pretreatment covariate ... a statistically significant interaction term implies that the [average causal mediation effect] is larger for one group ... than for another group.''\label{fn:imai}} 

\begin{ass}[Exclusion I]\label{cond:irre1}
Given $z,z' \in Z$ and $x,x' \in X_k$, $X_k$ is excluded from the direct effect such that $ADE^Y(z, z'; X_k = x)=ADE^Y(z, z'; X_k = x')$.
\end{ass} 

\begin{ass}[Exclusion II]\label{cond:irre2}
	Given $z,z' \in Z$ and $x,x' \in X_k$, $X_k$ is excluded from the indirect effect of the other mechanism 2: $AIE^Y_{2}(z, z'; X_k=x)=AIE^Y_{2}(z, z'; X_k=x')$.
\end{ass}

Assumptions \ref{cond:irre1}-\ref{cond:irre2} constrain the relationship between a moderator, $X_k$, other mechanisms (e.g., $M_2$) and any direct effect of treatment.\footnote{\label{fn:implicitmed}Other methods for mechanism detection invoke distinct exclusion assumptions. For example, the scaling stage of implicit mediation uses instrumental variables analysis to estimate causal mediation effects \citep{bullockgreen2021}. In contrast to our analysis of HTEs, the exclusion assumptions underpinning implicit mediation restrict the mediators that might be activated by a treatment. See also \citet{fu2024extracting} for a related discussion on mediation analysis with HTEs.} Figure \ref{fig:assdag} illustrates these assumptions graphically. \label{comment:dag}While this figure resembles a directed acyclic graph (DAG), we depart from conventional presentation of DAGs as vertices (nodes) and edges (arrows) because there are edges that point to other edges (rather than vertices). We make this departure because our assumptions impose greater structure on the possible causal moderation (or lack thereof) than is assumed in traditional DAGs. This departure is not new. \citet{nilssonetal2021} notes that there is no standardized representation of causal moderation in DAGs, so our graphs are informed by an existing proposal for the representation of these effects by \citet{weinberg2007}. This notation allows us to accurately convey the structure of causal moderation.

Once Assumptions \ref{cond:irre1} and \ref{cond:irre2} are invoked, it is straightforward to see that difference in CATEs reduces to differences in $AIE^Y_1$ at different levels of $X_k$, which we state formally in Proposition \ref{prop:ident}. Assumptions \ref{cond:irre1}-\ref{cond:irre2} are generically necessary because it is possible that $ADE^Y(x)-ADE^Y(x') \neq 0$ and $AIE^Y_2(x)-AIE^Y_2(x')  \neq 0$ exactly offset each other. But under generic parameter values, the probability of this knife-edge event is zero.

\begin{figure}
\begin{center}
\begin{tikzpicture}
\node at (0,0) (z) {$Z$};
\node at (3,1.1) (m) {$M_1$};
\node at (3,0) (m2) {$M_2$};
\node at (6, 0) (y) {$Y$};
\node at (1,2.5) (x1) {$X_k$};
\draw[->] 
    (z) edge [bend left = 22.5] (m)
    (m) edge [bend left = 22.5] (y)
    (z) edge [bend right = 35] (y);
\draw[->, dashed, color = blue]
    (x1) edge [bend right = 20] (1, -.5);
\draw[->, dash dot, color = red]
    (x1) edge [bend left = 20] (2, 0.1);
\draw[->, dash dot, color = red]
    (x1) edge [bend left = 20] (4, 0.1);

\draw[->] 
    (x1) edge [bend left = 30] (y);
\draw[->] 
    (x1) edge [bend left = 10] (m2);
\draw[->] (x1)--(1.25, .95);
\draw[->] (z)--(m2);
\draw[->] (m2)--(y);
\end{tikzpicture}
\end{center}
\caption{Assumption \ref{cond:irre1} rules out the blue dashed path. Assumption \ref{cond:irre2} rules out both of the red dot-dashed paths. All black solid paths are permissible under Assumptions \ref{cond:irre1} and \ref{cond:irre2}.} \label{fig:assdag}
\end{figure}


\begin{prop} \label{prop:ident}
	Assumptions \ref{cond:irre1}-\ref{cond:irre2} are sufficient and generically necessary for $CATE^Y(X_k=x)-CATE^Y(X_k = x')= AIE^Y_1(z,z';X_k = x) - AIE^Y_1(z,z';X_k = x')$.
\end{prop}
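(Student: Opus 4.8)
The plan is to read off the result directly from the decomposition \eqref{eq:catedif}, which already writes $CATE^Y(X_k=x)-CATE^Y(X_k=x')$ as the sum of three pieces: the gap in conditional average direct effects, the gap in conditional $AIE_1$, and the gap in conditional $AIE_2$. The target identity is exactly the statement that the first and third pieces sum to zero, so the content of the proposition is a statement about \emph{when} that happens.

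For \textbf{sufficiency}, I would simply substitute: Assumption \ref{cond:irre1} forces $ADE^Y(z,z';X_k=x)-ADE^Y(z,z';X_k=x')=0$ and Assumption \ref{cond:irre2} forces $AIE^Y_2(z,z';X_k=x)-AIE^Y_2(z,z';X_k=x')=0$; plugging these into \eqref{eq:catedif} leaves precisely $AIE^Y_1(z,z';X_k=x)-AIE^Y_1(z,z';X_k=x')$. This direction is immediate and needs no genericity. For \textbf{(generic) necessity} I would argue the contrapositive. Suppose the target identity holds but at least one of Assumptions \ref{cond:irre1}--\ref{cond:irre2} fails. Writing $g_1$ for the direct-effect gap $ADE^Y(z,z';X_k=x)-ADE^Y(z,z';X_k=x')$ and $g_2$ for the other-mechanism gap $AIE^Y_2(z,z';X_k=x)-AIE^Y_2(z,z';X_k=x')$, \eqref{eq:catedif} and the assumed identity give $g_1+g_2=0$, while the failure of an assumption gives $g_1\neq 0$ or $g_2\neq 0$; since they sum to zero this means \emph{both} are nonzero and exactly cancel. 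To make ``generically'' precise, I would index the structural primitives---the potential-outcome map $Y(z,\cdot,\cdot;X)$, the mediator maps $M_1(\cdot;X),M_2(\cdot;X)$, and the law of $X$---by a parameter $\theta$ ranging over a space $\Theta$ carrying a reference measure (e.g. absolutely continuous with respect to Lebesgue measure in the finite-dimensional case). Then $g_1=g_1(\theta)$ and $g_2=g_2(\theta)$ are measurable functions of $\theta$, and the offending event is $\{\theta:\ g_1(\theta)+g_2(\theta)=0,\ g_1(\theta)\neq 0\}$. Exhibiting a single $\theta$ with $g_1(\theta)\neq 0$ and $g_1(\theta)+g_2(\theta)\neq 0$ (for instance, the motivating corruption model with $y_2$, where $g_2\neq0$ arises from the non-separable binary-choice transformation while the direct effect is zero) shows $g_1+g_2$ is not identically zero on the region $\{g_1\neq 0\}$, so this event sits inside the zero set of a non-degenerate (indeed real-analytic, if the primitives are taken analytic in $\theta$) function and hence has reference-measure zero. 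Thus outside a measure-zero subset of $\Theta$ the identity forces $g_1(\theta)=g_2(\theta)=0$ individually, i.e. both exclusion assumptions hold.

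The main obstacle is the necessity direction, and within it the ``generic'' qualifier: the proposition is \emph{false} pointwise---the knife-edge cancellation $g_1=-g_2\neq 0$ genuinely can occur---so the proof cannot avoid committing to a parameterization and a notion of ``almost every parameter'' and then showing the cancellation locus is non-generic. The cleanest way to discharge this is to observe that $g_1$ and $g_2$ are governed by \emph{distinct} structural objects (the contrast $Y(z,\cdot,\cdot;X)-Y(z',\cdot,\cdot;X)$ for $g_1$, versus the moderation of $M_2$ by $X_k$ for $g_2$), so one can perturb one without constraining the other; formally this makes $g_1+g_2$ non-constant and its zero set a positive-codimension subset. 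I would also flag that if one wishes to state necessity without any genericity caveat, the honest fallback is the weaker conclusion that Assumptions \ref{cond:irre1}--\ref{cond:irre2} are jointly minimal among exclusion restrictions delivering the identity; but the genericity framing already used in the surrounding text (``the probability of this knife-edge event is zero'') is the simplest faithful statement, and I would keep it.
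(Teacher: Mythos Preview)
Your approach is essentially the paper's: both read the result off the decomposition \eqref{eq:catedif}, obtain sufficiency by direct substitution of Assumptions \ref{cond:irre1}--\ref{cond:irre2}, and handle necessity by observing that the only obstruction is the knife-edge cancellation $g_1=-g_2\neq 0$, which is non-generic. The paper's treatment of genericity is considerably more informal than yours---it simply asserts that $\Delta ADE^Y$ and $\Delta AIE^Y$ ``can take any real numbers'' and that exact offset has probability zero---so your parameterization by $\theta\in\Theta$ with a reference measure is a welcome tightening of the same idea rather than a different route. The paper does prove a mildly stronger statement (any non-zero linear transformation $L(Y)$ and $J\geq 2$ mechanisms), but the argument is identical.

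One small slip: your proposed witness, the corruption model with $y_2$, does not work. That model has a single mechanism and no direct effect, so $g_1=g_2=0$ for \emph{every} outcome (including $y_2$); the heterogeneity in $a$ you see in Remark \ref{remark2} lives entirely in $AIE_1^{y_2}$, not in $g_1$ or $g_2$. A correct witness is any DGP in which $X_k$ genuinely moderates the direct effect, e.g.\ $Y = Z\cdot X_k + M_1(Z;X_{\neg k})$, which gives $g_1=(z-z')(x-x')\neq 0$ and $g_2=0$. This does not affect the structure of your argument, only the example.
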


Proposition \ref{prop:ident} clarifies that a difference in CATEs does not identify either conditional AIE of mechanism 1 ($AIE^Y_1(z, z'; X_k = x)$ or $AIE^Y_1(z, z'; X_k = x')$) in the absence of further assumptions. Rather, the difference in CATEs identifies a \emph{difference} in conditional AIE's. Thus, identification of this difference is not sufficient to identify indirect effects, as is the goal in (standard) mediation analysis. However, it is straightforward to see that if the difference in AIEs is not equal to zero, there must exist some unit for whom the AIE is not equal to zero. A non-zero difference in AIEs is therefore a sufficient condition for the activation of the relevant mechanism for at least one unit. This identification result motivates a more precise version of our research question: ``Under what conditions are HTEs with respect to a covariate $X_k$ sufficient to show that there exists some unit for which $IE^Y_1(z,z'; X_k) \neq 0$?'' 

To understand our later results, it is useful to see how Assumptions \ref{cond:irre1} and \ref{cond:irre2} could be violated. The first and most obvious violation would be that a covariate $X_k$ moderates multiple mechanisms (or one mechanism and the direct effect). This is clear from Figure \ref{fig:assdag}. A second and less obvious violation is that the additive separability of the (indirect) effects of mechanisms breaks down. This would mean that if some $X_k$ moderates the effect of mechanism $1$, then it must also moderate the effect of any other mechanism that ``interacts with'' or whose (indirect) effect depends on the effect of mechanism $1$. In Figure \ref{fig:assdag}, this would occur if, for example, there was an interaction between the effects of $M_1$ and $M_2$ on the outcome $Y$. 

It is important to note that mediation analysis does not invoke Assumption \ref{cond:irre1} or \ref{cond:irre2}, and instead invokes an assumption of sequential ignorability \citep{imai2010identification}. There is no logical ordering of the two types of assumptions: the exclusion assumptions do not imply sequential ignorability, nor does sequential ignorability imply the exclusion assumptions. This means that HTEs cannot said to be a more or less agnostic test of mechanism activation than mediation. In some applications, one set of assumptions may be more plausible or defensible than the other, but we cannot make a general claim about the strength of these distinct sets of assumptions. We provide a broader discussion comparing the use of HTEs to mediation analysis in \ref{app:mediation}.

\subsection{Connecting Mechanisms to Measured Variables}

\label{comment:rv}While our identification results show that HTEs \emph{can} provide information about mechanism activation for some unit(s), the framework does not yet elucidate the relationship between a mechanism and measured variables. To understand why it is critical to develop this relationship beyond our identification results, note that a randomly-generated covariate that is independent of all variables in a research design (or system) satisfies Assumptions \ref{cond:irre1} and \ref{cond:irre2} by construction. Here, we would not expect to observe HTEs at different levels of the randomly-generated variable. But this lack of heterogeneity should not be informative about the substantive mechanism(s) at play.

\label{comment:metaphysical}We view a mechanism as an underlying process that responds to some activation and produces a given set of outputs. In the context of our running example, the voter distaste mechanism is activated by the observation of corruption information about the incumbent. It produces a number of outputs---a voter's assessment of the incumbent and their voting decision---among other (unmodeled) possibilities. None of these objects---a voter's information, their utility, or their voting decision---needs be inherently quantitative (though they could be). As social scientists, we choose how to measure and operationalize each of these objects: we normalize the voter information treatment to a binary (0/1) scale for convenient estimation of treatment effects; we choose some type of Likert scale to measure voter assessments/utility; and we choose self-reported vote choice or aggregated voting results to measure vote decisions. These operationalizations facilitate our ability to measure or quantify the effect of a mechanism on an outcome using, for example, a difference in CATEs. This implies that our ability to observe a mechanism's effect depends fundamentally on \emph{how} we choose to measure it \citep{sloughtyson2023}. Our concern here is therefore the link between a substantive mechanism and the measured variables in our framework. 

\subsubsection{Outcome variables and mechanisms}

First, consider the relationship between measured outcomes and the outputs of a mechanism. A given mechanism produces multiple possible outputs; a researcher chooses to operationalize and measure some subset of those outputs as outcome variables. But not all outcome variables relate to the underlying mechanism in the same way. For example, in our motivating example the mechanism---voter distaste for observed corruption---affects a voter's utility from the incumbent ($y_1$). The second outcome, vote choice, is a deterministic but non-linear function of utility given by the function $y_2(c) = \mathbb{I}[y_1(c) \ge 0]$. We refer to $y_2(c)$ as a \textbf{transformed (potential) outcome} relative to $y_1$.

\begin{definition}\label{def:outcomes}
Given a (potential) outcome $Y(Z,M;X)$. Let $\widetilde{Y}(Z,M;X) = h(Y)$, where $h(\cdot)$ is a non-linear function. Then we call $\widetilde{Y}(Z,M;X)$ a \textbf{transformed (potential) outcome} relative to $Y(Z,M;X)$.
\end{definition}

\label{def:h}Formally, for a function $h(\cdot)$, we define the increment $\delta(t;d)=h(t+d)-h(t)$. We say that the function $h$ is non-linear if there exists nondegenerate $t, t'$, and $d \neq 0$, such that the increment is different, $\delta(t;d) \neq \delta(t';d)$.\footnote{We assume that $h(t+d)$ is well defined.} This is a general definition that does not assume differentiability of the function $h(\cdot)$. 

The non-linear transformation is important because it affects the validity of the exclusion assumptions. Specifically, suppose that we believed that the exclusion assumptions held for a given covariate, $X_k$, mechanism, $M_1$, and outcome $Y$.\footnote{Note that there is no guarantee there exists an outcome for which Assumptions \ref{cond:irre1}-\ref{cond:irre2} hold in a given system of mechanisms.} Given our definition of the non-linear function $h(\cdot)$, $ADE^{\widetilde{Y}}(z,z';X_k=x)=\mathbb{E}[\delta(Y(z',M_1,M_2;X_k = x);Y(z,M_1,M_2;X_k = x)-Y(z',M_1,M_2;X_k = x))]$. However, in general, $ADE^{\widetilde{Y}}(z,z';X_k=x) \neq ADE^{\widetilde{Y}}(z,z';X_k=x')$ without other specific assumptions about the functional form of $h(\cdot)$. This means that even if Assumption \ref{cond:irre1} holds for $Y$, it generally will not hold for the transformed outcome $\widetilde{Y}$. A similar result holds for Assumption \ref{cond:irre2} and $AIE^{\widetilde{Y}}_2$. The logic for these observations is that the non-linear transformation ``breaks'' the additive separability of the mechanism from (1) other mechanism(s) and (2) predictors of the outcome. This mechanically generates additional causal moderation with different mechanisms without changing the underlying processes through which treatment affects the outcome. Given our distinction between substantive mechanisms and measurement of the effects they produce, we view the introduction of heterogeneity via a change in outcome variables as distinct from the presence of heterogeneity that exists in how mechanisms present in the world. 

When are transformed outcomes used in applied work? Three cases are quite common (but non-exhaustive). The first is akin to our running example: a treatment induces a change in information or utility that then affects some discrete choice of strategy. The second holds that a treatment changes an actor's attitude ($Y)$. But since attitudes are latent, survey researchers measure changes in the attitude by employing a Likert scale of the form: 
\begin{align}\label{eq:likert}
    h(Y) &= \begin{cases}
        1 & Y \in (-\infty, c_1]\\
        2 & Y \in (c_1, c_2]\\
        \vdots & \\
        Q & Y \in (c_{Q-1}, \infty),
    \end{cases}
\end{align}
in which $c_t$ denotes increasing thresholds in a latent attitude. Third, a researcher may employ non-linear transformations of an outcome to demonstrate the robustness of results to measurement choices. Here, they may bin a count variable to capture the extensive margin of some behavior or winsorize or logarithmize a skewed outcome etc. In any of these cases, even if Assumptions \ref{cond:irre1}-\ref{cond:irre2} were to hold for the orignal outcome (utility, attitudes, or the raw variable), they will not hold for the transformed outcome.

\subsubsection{Moderators and mechanisms}

Second, consider the role of a measured covariate, $X_k$, which we seek to use to detect the activation of focal mechanism $M_1$. A mechanism detector variable for $M_1$ is a covariate that produces different conditional AIEs at different values. To economize notation, we will denote $AIE_j^Y(X_k =x)=AIE_j^Y(z,z'; X_k = x, X_{\neg k})$ as the average indirect effect of mechanism (mediator) when $X_k = x$. 

\begin{definition}\label{def:MDV}A pre-treatment covariate $X_k$ is a \textbf{mechanism detector variable} for mechanism $j$ with respect to outcome $Y$ if for some $x, x' \in X_k$, $AIE^Y_j(X_k = x) \neq AIE^Y_j(X_k = x')$.  
\end{definition}

We then denote $\textbf{X}^{MDV} \subseteq \{X_1,X_2,...,X_L\}, L\leq K$, as the (possibly empty) set of covariates that satisfy Definition \ref{def:MDV} for the mechanism $j$. Intuitively, if $X_k \in \textbf{X}^{MDV}$, then covariate $X_k$ can serve as an indicator for a mechanism/mediator of interest.\footnote{A slightly stronger version of Definition \ref{def:MDV} holds when $Y$ is continuously differentiable with respect to $M$, $Z$, and $X_k$. In this case, Definition \ref{def:MDV} can be expressed as $\frac{\partial}{\partial X_k}\left(\frac{\partial Y}{\partial M}\frac{\partial M}{\partial Z}\right)\neq 0$ for some $x \in X_k$.} Under our definition of MDVs, it could be the case that $X_k$ moderates the effect of the treatment on the mediator. Interestingly, it could also be the case that $X_k$ moderates the effect of the mediator on the outcome. Both possibilities are depicted in Figure \ref{fig:MDVdag}.  Researchers using HTEs to investigate mechanisms seek to detect evidence of a mechanism using treatment-by-covariate interactions with a proposed MDV. In other words, given a proposed MDV $X_k$, we want to learn whether $X_k \in \textbf{X}^{MDV}$.

\begin{figure}
\begin{center}
\begin{tikzpicture}
\node at (0,0) (z) {$Z$};
\node at (2,0) (m) {$M$};
\node at (4, 0) (y) {$Y$};
\node at (1,1) (x1) {$X_{k}$};
\draw[->] (z)--(m);
\draw[->] (x1)--(1, 0.05);
\draw[->] (m)--(y);

\node at (7,0) (z1) {$Z$};
\node at (9,0) (m1) {$M$};
\node at (11, 0) (y1) {$Y$};
\node at (10,1) (x2) {$X_{k}$};
\draw[->] (z1)--(m1);
\draw[->] (x2)--(10, 0.05);
\draw[->] (m1)--(y1);
\end{tikzpicture}
\end{center}
\caption{The two Panels depict the causal structure of two MDVs for mechanism $M$ graphically. Both Panels are consistent with Definition \ref{def:MDV}. } \label{fig:MDVdag}
\end{figure}

\section{Results}
We consider the conditions under which HTEs (or lack thereof) are informative about the activation of a mechanism, $M_1$, for some unit(s) in a sample. To do so, we analyze four exhaustive and mutually exclusive cases that vary the (1) existance of HTEs and (2) whether an outcome is non-linearly transformed or not. In each case, we will assume that Assumptions \ref{cond:irre1}-\ref{cond:irre2} hold for the outcome $Y$. This follows the identification result in Proposition \ref{prop:ident}. If we were not to invoke these assumptions, we could not link differences in CATEs (HTEs) to differences in the AIEs of a mechanism. 

\subsection{Case 1: HTEs exist for outcome $Y(Z)$}

Proposition \ref{prop1} analyzes the case in which HTEs exist (are observed) and Assumptions \ref{cond:irre1} and \ref{cond:irre2} are assumed to hold for the outcome, $Y$. It shows that HTEs can provide evidence that a covariate is an MDV for some mechanism of interest, $M_1$. Recall that if $X_k$ is a MDV for mechanism 1, $AIE^Y_1(X_k = x)\neq AIE^Y_1(X_k = x')$ for some $x, x' \in X_k$. This is sufficient to provide evidence that mechanism $M$ is active for at least one unit.

\begin{prop}\label{prop1}Suppose Assumptions \ref{cond:irre1}-\ref{cond:irre2} hold with respect to $X_k$ for outcome $Y$. If HTEs exist with respect to $X_k$, then $X_k \in \textbf{X}^{MDV}$ for mechanism $M_1$.
\end{prop}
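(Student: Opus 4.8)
The plan is to chain together the CATE decomposition in \eqref{eq:catedif}, the two exclusion assumptions, and Definition \ref{def:MDV}. In fact, once Proposition \ref{prop:ident} is in hand the statement is essentially an immediate corollary: the proof of Proposition \ref{prop1} requires only unwinding the relevant definitions and keeping track of quantifiers carefully.

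First I would unpack ``HTEs exist with respect to $X_k$'' via Definition \ref{def:hte}: there is a pair $x \neq x' \in X_k$ (of positive probability measure) with $CATE^Y(X_k=x) \neq CATE^Y(X_k=x')$. Fix such a pair. Applying \eqref{eq:catedif} expresses the difference $CATE^Y(X_k=x) - CATE^Y(X_k=x')$ as the difference in conditional average direct effects plus the two differences in conditional average indirect effects; Assumption \ref{cond:irre1} kills the direct-effect term and Assumption \ref{cond:irre2} kills the indirect-effect term for $M_2$ (this is exactly the content of Proposition \ref{prop:ident}), leaving $CATE^Y(X_k=x) - CATE^Y(X_k=x') = AIE^Y_1(z,z';X_k=x) - AIE^Y_1(z,z';X_k=x')$. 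Since the left-hand side is nonzero by the choice of $x,x'$, we obtain $AIE^Y_1(X_k=x) \neq AIE^Y_1(X_k=x')$ for that same pair, which is precisely the defining condition in Definition \ref{def:MDV} for $X_k$ to be a mechanism detector variable for $M_1$ with respect to $Y$. Hence $X_k \in \textbf{X}^{MDV}$, and the argument is complete.

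I do not expect a genuine obstacle here, since the analytic content has been front-loaded into Proposition \ref{prop:ident}. The one point meriting explicit mention is that the \emph{same} witnessing pair $(x,x')$ that delivers nonzero CATE heterogeneity must be the pair fed into Definition \ref{def:MDV}; because both Definition \ref{def:hte} and Definition \ref{def:MDV} are existential over pairs of levels of $X_k$, this matching is automatic, but stating it avoids any sleight of hand. It is also worth remarking, for context, that the converse fails absent Assumptions \ref{cond:irre1}--\ref{cond:irre2}: a covariate could be an MDV for $M_1$ while producing no HTEs if the indirect-effect difference for $M_1$ were exactly offset by a direct-effect difference or by $M_2$'s indirect-effect difference, which is why (by the ``generically necessary'' half of Proposition \ref{prop:ident}) the exclusion assumptions are doing real work in this direction of the implication.
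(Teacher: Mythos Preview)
Your proposal is correct and follows essentially the same route as the paper: decompose the CATE difference via \eqref{eq:catedif}, apply Assumptions \ref{cond:irre1}--\ref{cond:irre2} to eliminate the direct-effect and $M_2$ indirect-effect terms, and read off Definition \ref{def:MDV} from the surviving nonzero $AIE^Y_1$ difference. The paper's appendix version additionally carries along an arbitrary nonzero linear transformation $L(Y)$ in place of $Y$, but the argument is otherwise identical, and your explicit attention to the witnessing pair $(x,x')$ is, if anything, cleaner than the paper's presentation.
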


This conforms to standard interpretations that the HTEs provide evidence that a mechanism is active. Nevertheless, this finding relies critically upon the validity of exclusion assumptions, following Proposition \ref{prop:ident}. If one were to detect heterogeneity in an empirical study in this class, the heterogeneity would indeed provide evidence that the postulated mechanism, $M_1$, is active for some units.

\subsection{Case 2: HTEs do not exist for outcome $Y(Z)$}

We now consider the converse: the case when there exist no HTEs with respect to $X_k$ and Assumptions \ref{cond:irre1} and \ref{cond:irre2} are assumed to hold for the outcome, $Y$. 

\begin{prop}\label{prop2}Suppose Assumptions \ref{cond:irre1}-\ref{cond:irre2} hold with respect to $X_k$ for outcome $Y$. If no HTEs exist with respect to $X_k$, at least one of the following must be true:
\begin{enumerate}
    \item $X_k \notin \textbf{X}^{MDV}$ for mechanism $M_1$.
    \item  No MDV exists for mechanism $M_1$.
\end{enumerate}
\end{prop}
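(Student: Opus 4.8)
The plan is to argue by contraposition: I will show that if \emph{both} items fail---i.e., $X_k \in \textbf{X}^{MDV}$ for mechanism $M_1$ \emph{and} some MDV for $M_1$ exists (which is automatic once the first holds, so the real content is just negating item 1)---then HTEs must exist with respect to $X_k$. Actually, since item 1 failing already means $X_k \in \textbf{X}^{MDV}$, and this immediately makes item 2 false as well (a witness exists, namely $X_k$), the two items are not independent: the disjunction ``item 1 or item 2'' is logically equivalent to ``item 1,'' i.e., to $X_k \notin \textbf{X}^{MDV}$. So what must really be proven is the contrapositive of Proposition \ref{prop1}: if $X_k \in \textbf{X}^{MDV}$ for $M_1$ (and Assumptions \ref{cond:irre1}--\ref{cond:irre2} hold for $Y$), then HTEs exist with respect to $X_k$.

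First I would invoke Proposition \ref{prop:ident}: under Assumptions \ref{cond:irre1}--\ref{cond:irre2} for outcome $Y$, we have the identity
\begin{align*}
CATE^Y(X_k=x)-CATE^Y(X_k=x') = AIE^Y_1(z,z';X_k=x) - AIE^Y_1(z,z';X_k=x')
\end{align*}
for every pair $x \neq x' \in X_k$. Next, I would use the hypothesis that item 1 fails, i.e., $X_k \in \textbf{X}^{MDV}$ for $M_1$: by Definition \ref{def:MDV}, there exist $x, x' \in X_k$ with $AIE^Y_1(X_k=x) \neq AIE^Y_1(X_k=x')$. Substituting into the displayed identity gives $CATE^Y(X_k=x) - CATE^Y(X_k=x') \neq 0$ for those same $x, x'$, which is exactly the statement (Definition \ref{def:hte}) that HTEs exist with respect to $X_k$. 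This contradicts the hypothesis of Proposition \ref{prop2} that no HTEs exist, so item 1 must hold; and since item 1 holding trivially makes item 2 available as a weaker (in fact vacuously co-satisfied) alternative, the disjunction in the statement is established.

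I do not anticipate a genuine obstacle here, since the result is essentially the logical contrapositive of Proposition \ref{prop1} repackaged as a disjunction; the only subtlety worth flagging in the write-up is the measure-theoretic caveat attached to Definitions \ref{def:hte} and \ref{def:MDV} (the relevant set of $(x,x')$ must have positive probability measure). I would note that this is harmless: the MDV condition and the HTE condition carry the \emph{same} positive-measure qualifier, and the pointwise identity from Proposition \ref{prop:ident} holds for all $x \neq x'$, so a positive-measure set on which the $AIE^Y_1$ differences are nonzero maps to a positive-measure set on which the CATE differences are nonzero. One presentational choice to make is whether to state the proof directly (assume no HTEs, then show $X_k \notin \textbf{X}^{MDV}$, hence item 1 holds) or via the contrapositive of Proposition \ref{prop1}; the direct version is cleaner for the reader and is what I would write.
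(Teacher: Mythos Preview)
Your proposal is correct and follows essentially the same contrapositive argument as the paper: negate both items (which, as you observe, reduces to $X_k \in \textbf{X}^{MDV}$), invoke the exclusion assumptions to equate the CATE difference with the $AIE^Y_1$ difference, and conclude that HTEs exist. The only cosmetic difference is that the paper re-derives the CATE decomposition inline rather than citing Proposition~\ref{prop:ident}, and states the result for a general nonzero linear transformation $L(Y)$; your explicit remark about the logical redundancy of item~2 given item~1 is a nice clarification that the paper leaves implicit.
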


Proposition \ref{prop2} shows that a lack of HTEs provides less information with regard to mechanism activation than is generally asserted. Under the exclusion assumptions, there are two reasons why HTEs may not exist with respect to a covariate, $X_k$. First, it may be the case that $X_k$ is not a MDV for mechanism $M_1$. In this sense, we have misspecified the theoretical relationship between a given covariate and a mechanism. Second, it may be the case that no MDV exists for mechanism $M_1$. As we discuss in Corollary \ref{cor1}, there are two possible reasons why a MDV would not exist for mechanism $M_1$. Importantly, we show that this could happen with an active or an inert mechanism $M_1$.

\begin{cor}\label{cor1} If no MDV exists for a mechanism $M_1$, there are two possibilities:
    
    (1) Mechanism $M_1$ is not active. 
    
    (2) Mechanism $M_1$ is active, but produces the same effect for all units so there exists no $X_k$ for which $AIE^Y_1(X_k = x) \neq AIE^Y_1(X_k = x')$. 
\end{cor}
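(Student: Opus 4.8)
The plan is to prove Corollary \ref{cor1} by unpacking the definition of ``no MDV exists'' and then performing a case split on whether mechanism $M_1$ is active. First I would record that, by Definition \ref{def:MDV} and the definition of the set $\textbf{X}^{MDV}$, the hypothesis ``no MDV exists for $M_1$'' is equivalent to the statement that for every pre-treatment covariate $X_k$ and every $x, x' \in X_k$ we have $AIE^Y_1(X_k = x) = AIE^Y_1(X_k = x')$; that is, the conditional average indirect effect of $M_1$ is invariant to the level of each covariate. With this reformulation in hand the corollary becomes a statement about which underlying configurations of unit-level indirect effects are compatible with that invariance.

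Next I would split on activation, which is an exhaustive dichotomy. If $M_1$ is not active --- i.e., $IE^Y_1(z,z';X) = 0$ for every unit --- we are in possibility (1), and this branch is manifestly consistent with the hypothesis, since then $AIE^Y_1(X_k=x)=0$ for all $X_k$ and $x$, so no covariate can satisfy Definition \ref{def:MDV}. If instead $M_1$ is active, then by definition there is at least one unit (value of $X$) with $IE^Y_1(z,z';X)\neq 0$, so the indirect effect is not identically zero; combined with the reformulated hypothesis this is exactly possibility (2) --- the mechanism operates, yet no covariate detects any variation in its effect, which is precisely the covariate-invariance recorded in the displayed ``so'' clause of the corollary. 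Because the two branches are exhaustive, these cover all cases. I would also connect this back to Proposition \ref{prop2}: Corollary \ref{cor1}(1)--(2) enumerate exactly the ways the second bullet of that proposition can be realized, and in particular make clear that an active mechanism can fail to generate HTEs.

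The main subtlety --- and where I would be most careful --- is the passage between ``the conditional AIE is constant along every covariate $X_k$'' and the informal phrase ``produces the same effect for all units.'' These are not literally equivalent without an extra premise, since unit-level heterogeneity in $IE^Y_1(z,z';X)$ could in principle average out within the level sets of every individual coordinate while still varying across full covariate profiles. I would resolve this by reading ``the same effect for all units'' at the level of the estimand that HTEs can actually speak to --- the conditional AIE indexed by a single covariate, exactly as the parenthetical ``so there exists no $X_k$ for which $AIE^Y_1(X_k = x) \neq AIE^Y_1(X_k = x')$'' makes explicit --- and/or by invoking the modeling convention that $X$ collects the unit's relevant heterogeneity, so that any heterogeneity a researcher could condition on would manifest as an MDV and is therefore excluded by hypothesis. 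Once that reading is fixed, the corollary follows immediately; the rest is purely definitional bookkeeping.
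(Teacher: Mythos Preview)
Your proposal is correct and follows essentially the same approach as the paper: both treat the corollary as a definitional observation, splitting exhaustively on whether $M_1$ is active and noting that inactivity trivially implies no MDV while activity combined with ``no MDV'' is precisely possibility (2). The paper's own discussion is brief and informal, so your more careful unpacking---including the subtlety you flag about reading ``same effect for all units'' at the level of the conditional AIE rather than unit-level---is a welcome elaboration rather than a departure.
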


Case (1) of Corollary \ref{cor1} is implied by the definition of MDV. If a mechanism is inert---thereby producing an indirect effect of zero for all units---there cannot exist any MDVs, measured or unmeasured. In contrast, in Case (2), a mechanism can be active but it produces the same indirect effect for all units. Recall that a non-zero difference in average indirect effects at different levels of a covariate $X_k$ is a \emph{sufficient} condition for mechanism activation. However, it is not a \emph{necessary} condition for mechanism activation. These results show that, in contrast to standard interpretation, a \emph{lack} of heterogeneity cannot tell us about whether a mechanism is active. Moreover, our theory could be misspecified, meaning that our postulated MDV, $X_k$ is not actually a MDV. An assessment of HTEs with respect to a single moderator cannot distinguish between these three possibilities. Nor can we assign probabilities to these (non-mutually exclusive) explanations without stronger assumptions.


\subsection{Case 3: HTEs exist for transformed outcome $\widetilde{Y}(Z)$}

To understand what HTEs reveal with respect to a (non-linearly) transformed outcome $\widetilde{Y}(Z)$, it is useful to introduce one final concept.  We will denote $\textbf{X}^{R} = \{X_1,X_2,...,X_K\}$, as the set of all possible pre-treatment covariates with  non-zero effects on the outcome, $Y$.\footnote{Formally, if $X_k \in \textbf{X}^{R}$, then there exist $x \neq x' \in X_k$ such that $Y(Z, M(Z; X_k=x,X_{\neg k}); X_k=x, X_{\neg k}) \neq Y(Z, M(Z; X_k=x',X_{\neg k}); X_k=x', X_{\neg k})$.} Covariates in $\textbf{X}^R$ can be thought of as ``relevant'' for predicting outcome $Y$. It is also useful to let $\textbf{X}$ be the set of all possible pre-treatment covariates. It is clear that for any outcome, $Y$, and mechanism $M$, $\textbf{X}^{MDV} \subseteq \textbf{X}^R \subseteq \textbf{X}$. Typically, these subsets will be proper. 

We now return to our main question of interest: what do HTEs reveal with regard to mechanisms? Proposition \ref{prop3} considers the case when there are HTEs in a covariate $X_k$. Here, we can learn that $X_k \in \textbf{X}^R$, but this is not informative about whether $X \in \textbf{X}^{MDV}$, since $\textbf{X}^{MDV} \subseteq \textbf{X}^R$. Why do we observe heterogeneity in covariates in $\textbf{X}^R$ that are not MDVs? The non-linear transformation $h(\cdot)$ generically means that outcome $\widetilde{Y}$ does not satisfy Assumptions \ref{cond:irre1} and \ref{cond:irre2}, even if $Y$ satisfies both assumptions. In this case, the difference in CATEs no longer identifies a difference in the conditional AIEs of interest! Indeed, the difference in CATEs for the transformed outcome is produced by the (sum of) differences in conditional $ADE^{\widetilde{Y}}$'s and conditional $AIE^{\widetilde{Y}}$'s for all mechanisms. Violation of the identifying assumptions means that we can no longer attribute HTEs to the mechanism of interest, $M_1$. This means that (absent stronger assumptions), we cannot learn about the activation of $M_1$ from HTEs on the transformed outcome. 

\begin{prop}\label{prop3}
Suppose Assumptions \ref{cond:irre1}-\ref{cond:irre2} hold with respect to $X_k$ for outcome $Y$. Let observed outcome $\widetilde{Y}$ be a transformed outcome relative to $Y$. If HTEs exist with respect to $X_k$ for outcome $\widetilde{Y}$, then $X_k \in \textbf{X}^{R}.$
\end{prop}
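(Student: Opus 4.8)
I would prove the contrapositive: assume $X_k \notin \mathbf{X}^R$ and show that no HTEs with respect to $X_k$ exist for $\widetilde{Y}$, contradicting the hypothesis. Unpacking the footnoted definition of $\mathbf{X}^R$, the statement $X_k \notin \mathbf{X}^R$ says that, holding $X_{\neg k}$ and $Z$ fixed, the natural potential outcome is invariant to $X_k$: $Y(z, M(z; X_k = x, X_{\neg k}); X_k = x, X_{\neg k}) = Y(z, M(z; X_k = x', X_{\neg k}); X_k = x', X_{\neg k})$ for every $z \in Z$, every value of $X_{\neg k}$, and every $x, x' \in X_k$. Equivalently, $Y(z, M(z;X);X)$ is a function of $(z, X_{\neg k})$ alone; write it $\phi(z, X_{\neg k})$. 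Note that Assumptions~\ref{cond:irre1}--\ref{cond:irre2} play no role in this direction of the argument; they are carried along only to keep the statement parallel to Cases~1--2.

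\textbf{Pushing the invariance through the transformation.} Since $\widetilde{Y} = h(Y)$ with $h$ a deterministic (if non-linear) function, $\widetilde{Y}(z, M(z;X);X) = h(\phi(z, X_{\neg k}))$ is again a function of $(z, X_{\neg k})$ alone. Hence the unit-level treatment effect on the transformed outcome, $TE^{\widetilde{Y}}(z,z';X) = h(\phi(z, X_{\neg k})) - h(\phi(z', X_{\neg k}))$, does not depend on $X_k$. (If one prefers to work through the decomposition used in the surrounding discussion, the same point can be phrased via the increment $\delta(t;d) = h(t+d) - h(t)$ from Definition~\ref{def:outcomes}: each conditional term $ADE^{\widetilde{Y}}(z,z';X_k=x)$ and $AIE_j^{\widetilde{Y}}(z,z';X_k=x)$ is obtained by applying $\delta$ to differences of $Y$-potential outcomes, all of which are $X_k$-invariant here, so every summand in the decomposition of $CATE^{\widetilde{Y}}$ is constant across levels of $X_k$.)

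\textbf{Taking the conditional average and concluding.} By Definition~\ref{def:cate}, $CATE^{\widetilde{Y}}(X_k = x) = E_{X_{\neg k}}[\,TE^{\widetilde{Y}}(z,z';X) \mid X_k = x\,]$. Because the integrand is a fixed function of $X_{\neg k}$ not involving $X_k$, it follows that $CATE^{\widetilde{Y}}(X_k = x) = CATE^{\widetilde{Y}}(X_k = x')$ for all $x, x' \in X_k$, so by Definition~\ref{def:hte} there are no HTEs with respect to $X_k$ for $\widetilde{Y}$. This contradicts the hypothesis of the proposition, and therefore $X_k \in \mathbf{X}^R$.

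\textbf{Main point requiring care.} The proposition is close to definitional, and the only step that is not a one-line consequence of $h$ being a function is the final one: moving from ``the integrand does not involve $X_k$'' to ``the conditional expectation given $X_k = x$ does not vary with $x$.'' This is immediate under the reading of $E_{X_{\neg k}}[\cdot \mid X_k = x]$ used throughout the framework (treating pre-treatment covariates as mutually independent, or at least assuming that conditioning on $X_k$ does not shift the distribution of $X_{\neg k}$); with genuine dependence between $X_k$ and $X_{\neg k}$ one could otherwise manufacture a spurious difference in conditional averages even though $X_k$ has no causal role. I would therefore state this covariate-distribution assumption explicitly at this point rather than leave it implicit.
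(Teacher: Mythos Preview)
Your proof is correct and follows the same contrapositive strategy as the paper: assume $X_k \notin \mathbf{X}^R$ and show that the CATEs for $\widetilde{Y}$ coincide at all levels of $X_k$. The paper's version phrases the key step distributionally---asserting that $X_k \notin \mathbf{X}^R$ implies $X_k$ is independent of $Y$, and hence the conditional law of $\widetilde{Y}$ given $(Z,X_k)$ drops $X_k$---whereas you work directly with the potential-outcome invariance from the footnoted definition and push it through $h$. These are the same argument at two levels of abstraction. Your final paragraph is more careful than the paper on one point: the move from ``the integrand is a function of $X_{\neg k}$ only'' to ``the conditional expectation given $X_k=x$ is constant in $x$'' does require that conditioning on $X_k$ not shift the distribution of $X_{\neg k}$, and the paper's claim that $X_k \notin \mathbf{X}^R$ yields statistical independence of $X_k$ and $Y$ quietly uses the same assumption. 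You are also right that Assumptions~\ref{cond:irre1}--\ref{cond:irre2} do no work in this direction; the paper notes as much after its proof.
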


\subsubsection{Case 4: HTEs do not exist for transformed outcome $\widetilde{Y}(Z)$}

We now return to a final case of our theoretical analysis by asking when we are examining a transformed outcome that may be affected by mechanism $M_1$, what can we learn from a \emph{lack} of HTEs? Proposition \ref{prop4} indicates that in this case, we can infer that $X_k \in \textbf{X}$. This is obviously a vacuous result. We already know that $X_k \in \textbf{X}$ since $X_k$ is a covariate and $\textbf{X}$ is the set of all covariates. However, the proposition tells us that we cannot assert that the absence of HTEs implies that $X_k$ is not a MDV. The logic here resembles the previous case. Non-linear transformation $h(\cdot)$ violates Assumptions \ref{cond:irre1} and \ref{cond:irre2}, which means that we cannot attribute a lack of heterogeneity in CATEs to a lack of differences in a given conditional AIEs. We purposely state a vacuous result to emphasize how little can be ascertained about mechanisms from the lack of HTEs when outcomes are transformed non-linearly. 

\begin{prop}\label{prop4}
Suppose Assumptions \ref{cond:irre1}-\ref{cond:irre2} hold with respect to $X_k$ for outcome $Y$. Let observed outcome $\widetilde{Y}$ be a transformed outcome relative to $Y$. If HTEs do not exist with respect to $X_k$ for outcome $\widetilde{Y}$, then $X_k \in \textbf{X}.$
\end{prop}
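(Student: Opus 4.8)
The plan is to dispatch the stated inclusion in one line and then devote the argument to justifying why it is the \emph{strongest} conclusion available. For the inclusion itself: $\textbf{X}$ is by definition the set of all possible pre-treatment covariates, and $X_k$ is one of the coordinates of the covariate vector $X = (X_1,\ldots,X_K)$; hence $X_k \in \textbf{X}$ with no hypotheses invoked. The substantive content of the proposition is the negative claim built into its statement: in contrast to Propositions \ref{prop1}--\ref{prop3}, the absence of HTEs for $\widetilde{Y}$ does not license membership in $\textbf{X}^{MDV}$ for $M_1$, nor even in $\textbf{X}^{R}$.

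To establish that, I would first record the decomposition \eqref{eq:catedif} for the transformed outcome. Writing $\widetilde{Y} = h(Y)$ with $h$ non-linear in the sense of Definition \ref{def:outcomes}, the same algebra that produces \eqref{eq:catedif} gives
\[
CATE^{\widetilde{Y}}(X_k = x) - CATE^{\widetilde{Y}}(X_k = x') = \big[ADE^{\widetilde{Y}}(x) - ADE^{\widetilde{Y}}(x')\big] + \sum_{j=1}^{2}\big[AIE^{\widetilde{Y}}_j(x) - AIE^{\widetilde{Y}}_j(x')\big].
\]
The key point, already argued in the text preceding Definition \ref{def:MDV}, is that even though Assumptions \ref{cond:irre1}--\ref{cond:irre2} hold for $Y$, the non-linearity of $h$ generically breaks them for $\widetilde{Y}$, so that $ADE^{\widetilde{Y}}(x) \neq ADE^{\widetilde{Y}}(x')$ and $AIE^{\widetilde{Y}}_2(x) \neq AIE^{\widetilde{Y}}_2(x')$ in general. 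Consequently ``no HTEs for $\widetilde{Y}$'' is just the scalar equation that the three bracketed differences sum to zero, which can be realized in qualitatively distinct ways.

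I would then enumerate those ways to show no refinement of $X_k \in \textbf{X}$ follows. (i) All three differences vanish, consistent with $X_k$ being irrelevant for $Y$, i.e. $X_k \notin \textbf{X}^{R}$. (ii) The $AIE^{\widetilde{Y}}_1$ difference is non-zero but is exactly offset by the $ADE^{\widetilde{Y}}$ and/or $AIE^{\widetilde{Y}}_2$ differences, consistent with $X_k \in \textbf{X}^{MDV}$ for $M_1$ with respect to the untransformed outcome $Y$, hence with $M_1$ active. (iii) $X_k$ moderates only the direct effect or the other mechanism on $\widetilde{Y}$, with those terms cancelling, consistent with $X_k \in \textbf{X}^{R}\setminus \textbf{X}^{MDV}$. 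Since at least one of (i)--(iii) is compatible with the hypotheses and they disagree about membership in $\textbf{X}^{MDV}$ and in $\textbf{X}^{R}$, nothing beyond $X_k \in \textbf{X}$ can be asserted, which is precisely (and deliberately) what the proposition claims. To make (ii) more than a logical possibility, I would exhibit a small parametric witness: a variant of the running corruption model with $h$ the threshold map $y_2 = \mathbb{I}[y_1 \ge 0]$ and the law of $(\lambda, a, \varepsilon)$ tuned so that the ceiling/floor distortions of Remark \ref{remark2} net out at the two chosen levels of $X_k$, so that $X_k$ genuinely is an MDV for the distaste mechanism on $y_1$ yet shows a zero CATE gap on $y_2$.

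The main obstacle is case (ii): checking that the cancellation case is actually \emph{realizable} rather than merely unexcluded---producing parameters for which a bona fide MDV for $M_1$ on $Y$ leaves $CATE^{\widetilde{Y}}$ flat in $X_k$. The clean route is to reuse the genericity language already deployed for Proposition \ref{prop:ident}: the map from model parameters to the triple of bracketed differences is non-degenerate, so its zero set is non-trivial and in particular meets the region where the $AIE_1$ difference is non-zero; one selects any point there. If an explicit witness is preferred, the threshold-model construction above supplies one after a short computation with the normal CDF.
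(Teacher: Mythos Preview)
Your one-line dispatch of the inclusion---$X_k$ is a coordinate of the covariate vector, $\textbf{X}$ is the set of all covariates, so $X_k\in\textbf{X}$ tautologically---is exactly the paper's proof. The paper writes only: ``The result follows simply because if $X_k \notin \textbf{X}$, then $X_k = \emptyset$,'' and nothing more.

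Where you diverge is in scope. You treat the proposition as implicitly asserting that $X_k\in\textbf{X}$ is the \emph{strongest} available conclusion, and you then devote the bulk of the argument to substantiating that negative claim via the decomposition for $\widetilde{Y}$, the three-case enumeration, and the proposed witness construction. The paper does not prove this within the proposition at all; it merely remarks afterward that ``Simple examples can show that $X_k$ can belong to $\textbf{X}^{MDV}$ and $\textbf{X}^{non-MDV}$'' without exhibiting them, and then moves on to a refinement (Proposition \ref{propa4}) under added structure on $h$. So your treatment is more thorough on the vacuity point than the paper's own, at the cost of taking on the burden you correctly flag as the main obstacle---actually producing the case-(ii) witness. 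The paper simply asserts such examples exist and does not discharge that burden, so if you intend only to match the paper, your first paragraph already suffices.
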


Often we make assumptions about the mapping $h$. For example, the mapping in \eqref{eq:likert} imposes assumptions about how latent attitudes translate into Likert-scale responses. When we are willing to make such assumptions, we can refine Proposition \ref{prop4} slightly. Specifically, in Proposition \ref{propa4}, we show that if Assumptions \ref{cond:irre1} and $\ref{cond:irre2}$ hold for an absolutely continuous $Y$, which is transformed by \eqref{eq:likert} (for any $Q \geq 2$ categories), if HTEs do not exist with respect to $X_k$, then $X_k \notin \textbf{X}^R$. Because $\textbf{X}^{MDV}\subseteq \textbf{X}^R$ we know then that $X_k \notin \textbf{X}^{MDV}$ if $X_k \notin \textbf{X}^R$. But as in Proposition \ref{prop2} and Corollary \ref{cor1}, there are multiple possible explanations: our theory about how $X_k$ relates to mechanism $M_1$ could be wrong or no MDV exists for mechanism $M_1$. These possibilities mean that we cannot make an inference about mechanism (non)-activation from the absence of HTEs with respect to $X_k$.

\subsection{Summary of results}

In sum, our propositions characterize four cases into which we can classify attempts to ascertain mechanism activation from HTEs, as described in Table \ref{tab:results}. This table suggests that in addition to the exclusion assumptions needed for identification of the difference of conditional average indirect effects of a mechanism (Assumptions \ref{cond:irre1} and \ref{cond:irre2}), HTEs serve as a useful test of the activation of a mechanism when evaluated for an outcome that satisfies these exclusion assumptions. However even with these outcomes, we can make an inference mechanism activation in the presence of heterogeneity, but not in the absence of heterogeneity. In the next section, we consider how these results should inform applied theories and research design. 

\begin{table}
\resizebox{\textwidth}{!}{
\begin{tabular}{cccp{11cm}} \hline
Case & Outcome type & HTEs? & What can be inferred about mechanism $M_1$? \\ \hline \hline 
1 & $Y$ & Yes & Difference in conditional indirect effect of mechanism $M_1$ is not zero $\Rightarrow$ Mechanism $M_1$ is \textbf{active} for at least one unit.\\ \hline
2 & $Y$& No & Three possible explanations (not mutually exclusive):\\
& & & \hspace{1em}(a) Covariate $X_k$ does not moderate the effect of mechanism $M_1$, regardless of whether the mechanism is active or not. \\ 
& & & \hspace{1em}(b) Mechanism $M_1$ is active and produces the same effect for all units. (No covariate moderates its indirect effect.) \\
& & & \hspace{1em}(c) Mechanism $M_1$ is inert (produces no effect) for all units. \\ 
&& & $\Rightarrow$ \textbf{No information about activation} of mechanism $M$ without further assumptions. \\ \hline 
3 & $\widetilde{Y}$ & Yes & Covariate $X_k$ predicts outcome $\widetilde{Y} \Rightarrow$ \textbf{No information about the activation} of mechanism $M_1$. \\ \hline
4 & $\widetilde{Y}$ & No & Without further assumptions, provides no information about relationship between $X_k$ and $\widetilde{Y}$. \textbf{No information about activation} of mechanism $M_1$.\\ \hline
\end{tabular}}
\caption{Summary of the interpretation of results. All cases assume that Assumptions \ref{cond:irre1} and \ref{cond:irre2} hold for covariate $X_k$, mechanism $M$, and outcome $Y$. Outcome $\widetilde{Y}$ is a transformed outcome relative to $Y$.}\label{tab:results}
\end{table}

\section{Applications and Guidance for Research Design}\label{sec:guidance}

How should these results guide future efforts to test mechanisms using heterogeneous treatment effects? In this section, we use a set of four published applications to illustrate how these considerations should inform interpretation and prospective research design. In contrast to many methodological papers, our focus is on how theorized mechanisms underlying treatment effects link to a specific estimand: a difference in CATEs.\footnote{These discussions generalize to a difference in conditional local average treatments (LATEs) or difference in conditional average treatment effects on the treated (CATTs).} There are multiple ways to estimate this difference. For a randomized binary treatment, $Z_i$ and binary moderator $X_{ik}$---a candidate MDV---the most straightforward estimator of this difference in CATEs is $\beta_3$ in the OLS regression equation:
\begin{align*}
Y_i &= \beta_0 + \beta_1 Z_i + \beta_2 X_{ik} + \beta_3 Z_i X_{ik} + \varepsilon_i
\end{align*}
For guidance on estimation, we refer readers to excellent treatments of estimation of HTEs (or interaction effects) including \citet{bramboretal2017,berryetal2009,hainmuelleretal20}. Additionally, see expositions of related inferential problems stemming from limited statistical power of interaction effects \citep[e.g.,][]{mcclellandjudd376} and the threat of multiple comparisons when HTEs in multiple moderators are estimated \citep[e.g.,][]{gerbergreen2012,leeshaikh2014,finketal2014}.

In line with our motivating example, we select the four applications that examine empirically partisan alignment (or bias) as a possible moderator for the effect of corruption information about an incumbent candidate. Appendix \ref{ss:context} describes the contexts and designs of these studies. Figure \ref{fig:dags} provides a summary of the theories invoked in each study. Panel (a)---which is analogous to our running model (while omitting valence)---suggests that information about corruption gives way to an analogous distaste for corruption mechanism. But voters also value the partisanship of the incumbent. Distaste for corruption (the mechanism) and partisanship enter voter preferences for the incumbent as distinct and additively separable considerations of voters. Panel (b) is equivalent to Panel (a), but includes a second mechanism linking corruption information to voter utility: a motivation of voters to coordinate votes within their precinct. Panel (c) returns to a single mechanism, voter distaste for corruption, but suggests that partisan alignment also conditions how voters process corruption information, suggesting the presence of motivated reasoning. Finally, Panel (d) includes only voter distaste for corruption and suggests that the degree to which voters are averse to corruption is a function of their partisan leanings. We justify our representation of each cited theoretical account in Appendix \ref{ss:theory}. We note that the three papers with a single mechanism largely focus on voter distaste for observed corruption and emphasize the role of partisanship in moderating this effect. In contrast, \citet{ariasetal2019} focus on the second mechanism---voter coordination---whereas we will focus (largely) on the distaste mechanism in our analysis. We note that, unlike our model, these papers do not focus on corruption aversion as a moderator, though such variation is generally consistent with the theories they articulate (see Appendix \ref{ss:theory}).

\begin{figure}
\begin{subfigure}[t]{.48\linewidth}
\begin{center}
\begin{tikzpicture}
\node (z) at (1.5, .5) {$c$}; 
\node (lambda) at (2.25, 1.5) {$\lambda$};
\node (m) at(3, .5) {$M$};
\node (a) at (1.5,-.5) {$a$};
\node (y1) at (5, 0) {$y_1$}; 
\node (y2) at (6, 0) {$y_2$};
\draw[->] (z)--(m);
\draw[->] (m)--(y1);
\draw[->] (y1)--(y2);
\draw[->] (a)--(y1);
\draw[->] (lambda)--(2.25,.65);
\end{tikzpicture}
\end{center}
\caption{Distaste for corruption ($M$) and partisan alignment ($a$) are distinct and additively separable arguments in voters' preferences for the incumbent ($y_1$), as in the running example. \\ \textbf{Source:} \citet{eggers2014}}
\end{subfigure}
  \hspace{.5em}
  \begin{subfigure}[t]{.48\linewidth}
  \begin{center}
\begin{tikzpicture}
\node (z) at (.5, .5) {$c$}; 
\node (lambda) at (1.75, 1.75) {$\lambda$};
\node (m) at(3, 1) {$M$};
\node (r) at (3, 0){$R$};
\node (a) at (1.5,-1.5) {$a$};
\node (y1) at (5, 0) {$y_1$}; 
\node (y2) at (6, 0) {$y_2$};
\draw[->] (z)--(m);
\draw[->] (m)--(y1);
\draw[->] (z)--(r);
\draw[->] (r)--(y1);
\draw[->] (y1)--(y2);
\draw[->] (a)--(y1);
\draw[->] (lambda)--(1.75, .8);
\end{tikzpicture}
\end{center}
\caption{There are two mechanisms, distaste for corruption ($M$) and voter coordination motives ($R$). Partisan alignment ($a$) is distinct from both mechanisms and additively separable arguments in voters' preferences for the incumbent ($y_1$), as in the running example. \\ \textbf{Source:} \citet{ariasetal2019}}
\end{subfigure}
\par\bigskip
\par\bigskip
\begin{subfigure}[t]{.48\linewidth}
\begin{center}
\begin{tikzpicture}
\node (z) at (1.5, .5) {$c$}; 
\node (lambda) at (2.25, 1.5) {$\lambda$};
\node (m) at(3, .5) {$M$};
\node (a) at (1.5,-.5) {$a$};
\node (y1) at (5, 0) {$y_1$}; 
\node (y2) at (6, 0) {$y_2$};
\draw[->] (z)--(m);
\draw[->] (m)--(y1);
\draw[->] (y1)--(y2);
\draw[->] (a)--(y1);
\draw[->] (lambda)--(2.25, .6);
\draw[->] (a)--(2.25, .4);
\end{tikzpicture}
\end{center}
\caption{Partisan alignment (bias) affects voter processing of corruption information via motivated reasoning and enters voters' preferences for the incumbent through a distinct channel.\\ \textbf{Source}: \citet{anduizaetal2013}}
\end{subfigure}
  \hspace{.5em}
\begin{subfigure}[t]{.48\linewidth}
\begin{center}
\begin{tikzpicture}
\node (z) at (1.5, -.5) {$c$}; 
\node (lambda) at (2.25, .5) {$\lambda$};
\node (m) at(3, -.5) {$M$};
\node (a) at (1.5,1.5) {$a$};
\node (y1) at (5, 0) {$y_1$}; 
\node (y2) at (6, 0) {$y_2$};
\draw[->] (z)--(m);
\draw[->] (m)--(y1);
\draw[->] (y1)--(y2);
\draw[->] (a)--(y1);
\draw[->] (lambda)--(2.25, -.4);
\draw[->] (a)--(lambda);
\end{tikzpicture}
\end{center}
\caption{Partisan alignment ($a$) conditions corruption aversion ($\lambda$), which moderates distaste for corruption ($M$). Alignment also enters voter's preferences for the incumbent ($y_1$) through a distinct channel. \\ \textbf{Source}:  \citet{defigueredoetal2023}}
\end{subfigure}
\caption{Four theoretical accounts of how partisan alignment (or bias) and information relate to voter preferences for the incumbent and vote choice. $c$ refers to corruption information; $\lambda$ is a voter's corruption aversion; $M$ is voters' distaste for observed corruption; $a$ is voters' partisan bias toward the incumbent; $y_1$ is voter utility (preference) for the incumbent; and $y_2$ is vote choice for the incumbent.}\label{fig:dags}
\end{figure}

Our framework suggests four avenues for improved use of HTEs for mechanism detection which we explore below. First, it suggests a set of necessary theoretical considerations/arguments. Second, it suggests improvements in the interpretation of observed HTEs. Third, it offers recommendations for prospective research design. Finally, it provides guidance on when stronger theoretical assumptions may be useful to impose. 

\subsection{Three essential theoretical considerations} 

Our framework identifies three attributes of a theory that are needed to support any analysis of causal mechanisms using HTE. First, researchers must \textbf{identify a set of candidate mechanisms}. The examples we cite are clear in positing a set of candidate mechanisms. Three---Panels (a), (c), and (d)---focus on a single mechanism: distaste for an incumbent's corruption. One---Panel (b)---instead suggests that distaste and voter coordination motives are two distinct mechanisms through which corruption information affects voter utility. The applications we discuss are quite clear about the mechanisms thought to underlie the effects that they measure. 

Second, in order to use HTE to learn about a given mechanism, however, our analysis shows that one must (1) \textbf{invoke exclusion assumptions} and (2) \textbf{identify candidate MDVs}. In theories with a single mechanism---as in Panels (a), (c), and (d) of Figure \ref{fig:dags}---it is only necessary to defend Assumption \ref{cond:irre1} for a given covariate. In the case of the partisan alignment covariate, this means that we should justify the \emph{absence} of a direct effect of corruption information that varies in partisan alignment. In Panel (b), when there are multiple mechanisms, one would need to invoke Assumptions \ref{cond:irre1} and \ref{cond:irre2} for a given covariate, e.g., partisan alignment. As represented in the papers (and therefore in Figure \ref{fig:dags}), the exclusion assumptions should hold for the voter utility outcome ($y_1$) in each case. We note that because these studies focus on a small (and well-articulated) set of mechanisms, assessing whether Assumptions \ref{cond:irre1} and (where relevant) \ref{cond:irre2} are plausible is relatively straightforward. In other types of studies that propose a larger set of candidate mechanisms, (1) additional exclusion assumptions must be invoked for each candidate mechanism; and (2) justification of these exclusion assumptions becomes more difficult because there are more plausible violations of these assumptions.

The different theories vary in whether partisan alignment, $a$, should serve as an MDV for voter utility, $y_1$. In the first two theories, depicted in Panels (a) and (b), partisan alignment is \emph{not} a candidate MDV. We can see this because there is no arrow from $a$ to the edge between $c$ and $M$ or the edge between $M$ and $y_1$, or (in Panel (b)), the voter coordination mechanism ($R$). In contrast, $a$ serves as an MDV for the distaste mechanism in Panels (c) and (d). We can see this from the direct arrow from $a$ to the edge between $c$ and $M$ in Panel (c)---which captures voters' motivated reasoning about the corruption disclosure---and from $a$ to $\lambda$ to the edge between $c$ and $M$ in Panel (d), which captures the idea that partisan affiliation shapes corruption aversion, which in turn conditions the degree to which corruption information ($c$) gnerates a distaste for observed corruption. 

Third, authors must consider the \textbf{relationship between outputs of the candidate mechanisms and measured outcomes}. The empirical studies associated with Panels (a), (b), and the field experiment in (d) measure aggregate vote share at the level of the constituency (Panel a) or precinct (Panels b and d). Within our framework, aggregate vote share can be thought of as $\sum_i y_{i2}/n$, where $n$ is the number of voters.\footnote{In Panel (a), \citet{eggers2014} also provides a survey-based measure of $y_2$ from the British Election Survey by analyzing self-reported vote choice. Some studies also consider vote share as a fraction of registered voters. This would require a model that allows voters to abstain, but yields substantively similar results.} This means that we should expect HTEs in partisan alignment for each of these studies. But recall that of these three studies, partisan alignment is only an MDV under the theory in (d) in which alignment conditions a voter's corruption aversion. In contrast, the survey-based studies in Panel (c) and the survey experiment component of Panel (d) measure outcomes that resemble voter utility through Likert scales. In both relevant Panels, the authors expect partisan affiliation to moderate the effect of corruption information on distaste for corruption (albeit for different reasons). So in contrast to the our model (and the examples in Panels (a) and (b)), we would expect to observe HTE in partisan affiliation for these outcomes if the distaste for corruption mechanism is active. 

\subsection{Improving the interpretation of HTEs as mechanism tests} 

By summarizing our results, Table \ref{tab:results} provides a guide for interpreting the presence or absence of HTEs as tests of a mechanism. Importantly, they point to an asymmetry in what we can infer about mechanisms from the presence versus absence of HTEs. Specifically, when exclusion assumptions hold, the presence of HTEs provides evidence that a mechanism is active. In contrast, the absence of HTEs does not rule out a mechanism or show that it is inert without further assumptions.

Our results are theoretical or identification results, meaning that they can be interpreted as what would obtain if we had an infinite sample. But, in practice, empiricists operate in a world with finite---and often relatively small---samples. This introduces statistical problems as well. In the context of HTEs, known limits to the statistical power for interaction effects (or terms) reduces our ability to detect HTEs that do exist. In other words, we risk many false-negatives in inferences related to the existence of heterogeneity. Because a lack of HTEs provides less informative about mechanism activation, low power suggests that applied researchers often operate in a world in which heterogeneity analysis is unlikely to provide information to support inferences about mechanisms.\footnote{Selective reporting of significant results complicates the situation further. In this case, evidence in favor of treatment-effect heterogeneity is more likely to be a false-positive, which increases the the probability that researchers infer that a mechanism is active when it is not.} 

In the context of the studies that we examine, authors are admirably transparent about these limitations. For example, \citet[][p. 735]{defigueredoetal2023} (Panel (d)) write ``given the small samples, however, the difference between the two [CATE] estimates (the interaction) is not statistically significant. Still, the difference in magnitudes certainly suggests that [candidate's] voters are more sensitive to corruption-related information than supporters of other candidates.'' In sum, the combination of the asymmetry in the inferences about mechanisms with versus without HTEs combined with the limited ability to detect heterogeneity should give authors caution in the interpretation of HTEs as tests of mechanisms.

\subsection{Guidance for prospective research design}

Our framework posits several recommendations for the design of causal research that seeks to test mechanisms quantitatively using HTE. Our suggestions are premised on improvements in measurement. 

\textbf{Measure more candidate MDVs}: In terms of covariates, we are primarily concerned with which covariates are measured and the number of candidate MDVs (per mechanism) among those covariates. Following the exclusion assumptions, covariates are only useful for ascertaining mechanisms when (1) they are plausibly MDVs for a single mechanism; and (2) they do not moderate direct effects. This observation suggests that special care must be taken when positing candidate MDVs. When pre-treatment covariates are (largely) collected in baseline data collection, there is a need to posit MDVs and defend exclusion assumptions \emph{ex-ante}. Such considerations require more theory and justification than are typically conveyed in the specification of moderation or HTE analyses in pre-analysis plans. 

Further, it is very useful to have multiple candidate MDVs for a given mechanism. To see why, consider the case in which we have two candidate MDVs, $X_1$ and $X_2$ for mechanism $M_1$ and the exclusion assumptions hold for both candidate MDVs. Suppose that there do not exist HTEs in $X_1$ but there do exist HTEs in $X_2$. If we only measured HTEs with respect to $X_1$, following Proposition \ref{prop2}, we would not be able to ascertain whether the problem is with the theory ($X_1 \notin \textbf{X}^{MDV}$) or whether there simply exist no MDV for mechanism $M_1$. If there exist HTEs in $X_2$, we can eliminate the possibility that there do not exist MDV for mechanism $M_1$. This would suggest that the theory with respect to $X_1$ is misspecified. This is useful insofar as it allows us to make an inference that mechanism $M_1$ is active. Note, however, that in order to leverage multiple candidate MDVs, the exclusion assumption must hold for each candidate MDV, which can be quite demanding. 

The simplified presentation of the theories in Figure \ref{fig:dags} are unlikely to reflect the full set of candidate MDVs for a given mechanism. However, we can explore this logic with respect to the theory in Panel (c) where there are two candidate MDVs. Specifically, in this study, corruption aversion and partisanship are posited to be MDVs for the distaste for corruption mechanism. \citet{anduizaetal2013} measure only the partisanship indicator and find that respondent distaste for a (hypothetical) incumbent's corruption varies in partisanship. Suppose instead that the authors they had also measured corruption aversion---and detected HTEs in that variable---\emph{while} failing to detect HTEs in partisanship. Such a finding would suggest that the distaste mechanism is active (via the HTEs in corruption aversion). It would additionally indicate that partisanship may not be a MDV for the distaste mechanism, perhaps casting doubt on the role of motivated reasoning in shaping a voters' distaste. 

\textbf{Prioritize specific outcomes for mechanism tests}: Our focus on non-linear transformations of measured outcomes yields a further recommendations for research design. If a goal of a research design is to destinguish between mechanisms or detect a posited mechanism, outcomes for which Assumptions \ref{cond:irre1} and \ref{cond:irre2} are plausible should be prioritized in HTE analysis. For example, if researchers had survey measures of voter utility from the incumbent and vote choice (from survey responses of administrative vote returns at the precinct level), any inference about the mechanism should be made on the basis of the utility measure. The logic behind this choice helps to convey the structure of the theorized mechanism and its relationship to the outcome measures. Moreover, ex-ante specification of the outcomes for which these assumptions are likely to hold can guide choices about which outcome variables to invest in measuring. This can also guide pre-analysis plans.

To be clear, our recommendation is \emph{not} to avoid the estimation of HTEs for non-linearly transformed outcomes entirely. Indeed, in the study of elections, we typically care about vote choice more than voter utility, since this vote choice determines who wins office. But we should be clear about why HTEs on vote choice matter when they are not informative about mechanisms. They could be used to understand how to better target future corruption information interventions across the electorate \citep[e.g.,][]{atheywager2021,kitagawatetenov2018}, to extrapolate effects to other electorates under a model \citep[e.g.,][]{egamihartman2020}, or to describe about the distributional impacts of the treatment. These are all valid---and perhaps underutilized---uses of HTEs which do not depend on the relationship between mechanisms and measured outcomes. We simply advise researchers to more carefully articulate the goals of their analysis. 

\subsection{Imposing assumptions about to use HTEs as mechanism tests}\label{ss:assumptions}
 
Our results suggest that the presence of HTEs is not informative of mechanism activation when the exclusion assumptions do not hold. Model-based approaches may be useful when researchers only have access to transformed outcomes or have reasons to doubt the validity of the exclusion assumptions. For example, for the field experimental and observational studies in Figure \ref{fig:dags} (Panels (a), (b), and the field experiment in (d)), authors have vote share outcomes or self-reported vote choice, but recall that this is a transformed outcome. Furthermore, in Panel (b), it may also be reasonable to believe that corruption aversion conditions both the distaste for corruption and voter coordination mechanisms. Can we make progress in these settings by imposing stronger assumptions about the mapping from voter preferences (utility) to vote choice? 

We consider the possibility of invoking different models or sets of assumptions may aid in using HTEs to learn about mechanisms. Table \ref{tab:modelbased} summarizes three problems identified by our analysis. Each problem is paired with a statistical model or set of assumptions that we examine as a solution. The right column summarizes the conclusions of our analyses, which are detailed in greater depth in \ref{app:sol}. In sum, two of the three models/sets of added assumptions are strong enough (in isolation) to allow HTE to provide information about mechanism activation, where they do not in the absence of these additional assumptions.

\begin{table}\resizebox{\textwidth}{!}{
\begin{tabular}{lp{5cm}p{5cm}p{7cm}} \hline
&Problem & Model/assumptions & Conclusions\\ \hline\hline 
1 &Exclusion assumptions fail because utility is transformed into a discrete choice. & Random utlity model specifies a systematic and random component of utility that generates observed outcomes.& In settings in which mechanisms operate on utility but we observe choice, random-utility models can recover a model-based estimate of expected utility that can be used for analysis of mechanism activation through HTEs. \\  \hline 
2 & Exclusion assumptions fail because outcome is transformed. & Assumption about monotonicity of CATEs in a candidate MDV $X_k$. & Monotonicity is not strong enough to provide information about mechanism activation without ancillary functional form assumptions. \\ \hline 
3 & Exclusion assumptions fail because covariate may moderate multiple mechanisms. & Bayesian model of CATE magnitude under different mechanism activation profiles. & Plausible when we are willing to register priors about relative CATE magnitudes under different mechanisms. \\  \hline 
\end{tabular}}
\caption{Summary of model- or assumption-based alternatives to exclusion assumptions. }\label{tab:modelbased}
\end{table}

\textbf{Modelling non-transformed outcomes/random utility models}: In the context of transformations from utility to choice---like the transformation of voter utility from the incumbent into a vote for the incumbent---there exist widely-used random utility models that seek to recover preferences (utility) from choices. These models provide a functional mapping between an individual's utility (the outcome upon which the mechanism operates) and their choice by decomposing vote choice into an observed systematic and an unobserved random component. In our motivating model, the information treatment, the corruption aversion covariate, and a partisan alignment covariate are systematic and could, in principle, be observed. In contrast, the valence shock is random. In Panel (b) of Figure \ref{fig:dags}, corruption information (the treatment), corruption aversion, and the network structure of a precinct (or a sufficient statistic thereof) could be observed, but partisan alignment is random. By specifying the systematic component as a function of individual- or choice-specific covariates and assuming a distribution of the random component(s), researchers may be able to estimate the systematic component of utility. 

Appendix \ref{app:transform} analyzes the use of random-utility models in the context of HTE analysis. These models have two principal merits in the present context. First, the parameterization of the systematic component allows researchers to assess whether the exclusion assumptions are plausible under a given theoretical model. If these assumptions are plausible, the second benefit of a random utility model emerges. Specifically, it permits researchers to estimate (or ``back out'') an outcome for which HTEs can provide information about mechanism activation, specifically $\mathbb{E}[Y]$.  Importantly, the invocation of a random-utility model is not free: it makes strong parametric assumptions about utility and its relationship to choice outcomes. Researchers may not be well-positioned to invoke or assess these assumptions. However, these assumptions allow for more formal examination of exclusion assumptions and may yield information which may permit learning about mechanism activation from HTEs. 

\label{r2:3}While random utility models are the best established method for estimating actors' utility from choice, the broader approach could be useful for other types of transformed outcomes. Suppose that empirical researcher believed that she observed $\widetilde{Y} = h(Y)$ and that the exclusion assumptions held for $Y$ (but not $\widetilde{Y})$. If she were willing to impose some (invertible) functional form on $h(\cdot)$, it may be possible to evaluate (or estimate) $h^{-1}(\widetilde{Y})$. As in the case of a random utility model, she could then conduct analysis with $E[Y]$.

\textbf{Imposing monotonicity}: To this point, we analyze the current practice of examining the presence of HTEs as a test of mechanisms. However, examining the magnitudes of estimated CATEs may offer more information. We first consider whether the invocation of \emph{monotonicity of CATEs}---a common empirical assumption \citep{manski1997}---can provide evidence about mechanisms. In our context, monotonicity holds that for all $x'>x \in X_k$, $CATE(x') \geq (\leq) CATE(x)$. In the case of our model (and the applications in Panels a and b), corruption information should have a stronger (negative) effect on voter utility among voters with stronger corruption aversion (larger $\lambda$). Indeed, Remark \ref{remark1} shows that for $\lambda > \lambda'$, $|CATE(y_1 | X_1 = \lambda)| > |CATE(y_1 | X_1 = \lambda')|$ for the voter utility outcome. Is it possible that this monotonicity in $\lambda$ is maintained for the vote choice outcome? If this were the case, we could simply compare the magnitude of effects at different levels of the MDV ($\lambda)$ for some suggestive evidence about mechanism activation. Similarly, could a violation of monotonicity of the form $\text{sign}(CATE(y_2 | \lambda)) \neq \text{sign}(CATE(y_2 | \lambda'))$ provide evidence against mechanism activation? 

Unfortunately, in Appendix \ref{app:monotonicty}, we show that assuming monotonicity is not sufficient to provide information about mechanism activation through analysis of CATE magnitudes. Specifically, in Proposition \ref{prop:monotone}, we show that montonicity alone is not sufficient to ensure that HTEs take different signs when $X_k$ is not a MDV. Moreover, it does not ensure that CATEs are monotonic in $X_k$ even when $X_k$ is a MDV and montonicity is satisfied for the non-transformed outcome. These results show that we would need additional parametric assumptions for monotonicity to provide sufficient information to distinguish mechanism activation.


\textbf{Learning from the Magnitude of HTEs}: The magnitude of CATEs may provide additional information than the presence of HTE. For example, suppose that---in contrast to Panel (b) of Figure \ref{fig:dags}---theory suggested that the corruption aversion moderator was a candidate MDV for both the distaste and coordination mechanisms. If this were the case, the exclusion assumptions (Assumptions \ref{cond:irre1} and \ref{cond:irre2}) would not be satisfied. However, we may have reasons to believe that the HTEs coming from one mechanism are large (in magnitude) while those coming from the other mechanism are small (in magnitude). In Appendix \ref{sec:magnitude}, we propose a simple Bayesian model that allows for inference about the activation of a specified mechanism from the magnitude of estimated effects. This model requires specification of priors over: (1) the probability of activation of a given mechanism and (2) the magnitude of HTEs under each candidate mechanism. Specification of these priors constitutes the invocation of additional assumptions.

We note that most theories in the social sciences admit directional---rather than point---predictions. By moving from the existence of HTE to their magnitude in this Bayesian setting, the model that we propose requires researchers to specify priors about the \emph{size} of HTEs, not simply their existence or direction. While these priors allow us to glean some information about mechanisms when the exclusion assumption(s) are violated, more work is needed to guide researchers in specifying such priors from theories that are largely directional.

\section{Conclusion}

Social scientists routinely estimate HTEs with the goal of understanding which mechanisms generate treatment effects. By providing the first theoretical analysis of the relationship between HTEs and mechanisms, we show that detecting mechanisms with HTEs is far less straightforward than implied by current practice. Specifically, any link between a covariate (moderator) and a mechanism requires exclusion assumptions, so that covariate does not moderate the effects of other mechanisms (or the direct effect). Even when these assumptions hold, we can only use HTEs to affirm the activation of a mechanism when (1) HTEs exist and (2) for transformed outcomes that preserve the additive separability of a mechanism's effect. Outside this case, HTEs do not provide sufficient information to show that a mechanism is active or inactive. In this sense, HTEs analysis should not be used to rule out activation of a mechanism without stronger assumptions than those that we impose. 

At present, mechanism detection is the modal use of HTEs in political science (see Table \ref{tab:class}) and the modal method for mechanism detection \citep{blackwelletal2024}. However, mechanism detection is not the only use of HTE. Our results speak to contexts where mechanistic analysis is a goal. HTEs are also increasingly used for extrapolation of treatment effects to different populations/settings \citep{egamihartman2020,devauxegami2022} and the targeting of treatments \citep{atheyetal2019,kitagawatetenov2018}. Our work does not directly speak to these uses of HTEs, because these methods do not seek to attribute observed effects to mechanisms \citep{sloughtyson2023}. 

Our analysis raises a number of issues and opportunities for future research to build upon. In particular, we emphasize that choices about which outcomes we measure can complicate efforts to understand the substantive mechanisms at work. For example, even if the exclusion assumptions hold for one outcome, by imposing a common non-linear transformation on that outcome, the exclusion assumptions can be violated. This distinction has underappreciated implications for multiple quantitative methods to detect mechanism activation, including mediation analysis, analysis of treatment effects on intermediate outcomes, and efforts to link the sign of treatment effect to a (set of) mechanism(s). One potentially fruitful avenue for continued use of HTEs for mechanism detection would be to move from the presence of HTEs to their magnitude, as we outline in Appendix \ref{sec:magnitude}. Adoption of this approach will rely on the the development of closer links between theoretical mechanisms and the size of reduced-form treatment effects than is current practice. 

\clearpage

\putbib 
\end{bibunit}

\newpage


\clearpage
\setcounter{page}{1}

\appendix
\addcontentsline{toc}{section}{Appendix} 
\part{Appendix} 
\parttoc 

\setcounter{figure}{0}
\setcounter{table}{0}
\renewcommand\thefigure{A.\arabic{figure}}
\renewcommand\thetable{A.\arabic{table}}

\onehalfspacing
\setcounter{page}{1}

\clearpage

\section[Additional Classification of Articles]{Additional Classification of Articles}

Table \ref{tab:class2} provides an additional classification of the articles described in Table \ref{tab:class} but disaggregates by research design. Note that we collapse difference-in-difference and panel analyses into one category that includes two-way fixed-effects and other estimators of the average treatment effect on the treated (ATT). We also collapse IV and natural experimental analyses into a single category that includes studies with some claim of exogenous variation not created by researchers that is argued to facilitate identification of an average treatment effect (ATE); an intent-to-treat effect (ITT); or a local average treatment effect (LATE).\footnote{This LATE is often termed the complier average causal effect (CACE) in political science.} This table shows that using HTEs to detect mechanisms is not unique to any one research design in common usage; the proportions of articles  in these journals that uses HTEs as a mechanism test (given by the ``weighted average'' column) is quite similar across all of these designs. 

\begin{table}[ht]
\centering
		\begin{tabular}{l|ccc|ccc|c} \hline
		 & \multicolumn{3}{c|}{}&  \multicolumn{3}{c|}{Pr(Reports HTEs as} & \\
			
			 & \multicolumn{3}{c|}{Total }&  \multicolumn{3}{c|}{mechanism test)} & Weighted \\
				Research design & \emph{AJPS} &  \emph{APSR} & \emph{JoP} & \emph{AJPS} & \emph{APSR}  & \emph{JoP} & average\\ \hline
				Experiment & 14 & 21&32& 0.50&0.43&0.53 & 0.49\\
				Difference-in-differences or panel &9&10&14&0.44&0.40& 0.36 & 0.39\\
				Regression discontinuity &2 &2&5&0.50&1.00&0.40 & 0.55\\
				IV or natural experiment&2 &5&7&0&0.80&0.71 & 0.64 \\
				Selection on observables &7&26&41&0.71&0.65&0.44 & 0.54\\ \hline
	\end{tabular}
\caption{Authors' classification of articles published in three leading political science journals in 2021 by research design. Note that the probabilities reported are those implied by Pr(Reports HTE) $\times$ Pr(Mechanism Test$\mid$Reports HTE) in Table \ref{tab:class}. In this table, we do not include quantitative articles without an apparent mapping to a (reduced-form) causal estimand. These omitted articles employ empirical research designs including structural estimation, development of new measures, and claims to measurement of correlations alone.}\label{tab:class2}
\end{table}

\section[Examples of Current Practice]{Examples of Current Practice}

As reported in Table \ref{tab:class}, 82\% of the articles that report HTEs interpret these quantities as tests of a mechanism. Two interpretations of HTEs are common. First, the \emph{presence} of HTEs with respect to a specific covariate provides evidence that a mechanism is active. For example, \citet{haimetal2021} report the results of a field experiment in a conflict-affect region of the Philippines that randomized provision of a program that connected villages to state services and sought to increase village leaders' trust in the state. In the context of the COVID-19 pandemic, they show that the program increased village leaders' probability of reporting COVID risk information to a government task force by $\approx 10$ percentage points. \citet{haimetal2021} argue that the program increased reporting by increasing beliefs in government competence. They report HTEs that indicate the effect on reporting was significantly larger in communities in which village leaders initially believed that the government ``[did] not have capacity to meet needs.'' These HTEs with respect to perceived capacity suggest that leaders assigned to treatment updated their beliefs about government capacity (the mechanism), thereby increasing their willingness to report risk information to the government. \\

Second, the \emph{absence} of HTEs with respect to a given covariate is frequently used to ``rule out'' the activation of a possible mechanism (often called an alternative explanation). For example, \citet{moscowitz2021} argues that the proportion of in-state residents in one's local media market increases rates of voter political knowledge and split-ticket voting in the US, countering trends toward the nationalization of politics. The paper provides evidence that news coverage is the mechanism that drives this effect. However, it also seeks to rule out an alternative mechanism: campaign advertising. The logic for this alternative mechanism holds that more in-state residents leads to more campaign advertising about in-state candidates, which in turn increases voter knowledge. Since advertisements air when during election season (when incumbents contest re-election), if the advertising mechanism were operative, we would expect the effect of in-state residents on voter knowledge to be larger during election seasons than other times in the electoral cycle. \citet{moscowitz2021} finds no such heterogeneity. The absence of heterogeneity is used to provide evidence against the advertisement-based mechanism.  Both of the above examples are exceptionally clear in delineating the mechanisms under consideration using HTEs, and therefore serve as exemplars of current practice.

\section[Motivating Example]{Motivating Example}
\subsection{Data Generating Process}

Figure \ref{si:fig_dag} depicts the data generating process that is evaluated by the HTE analysis in Remarks \ref{remark1}-\ref{remark2}. In the model, only the distaste for corruption mechanism is active. This mechanism is evaluated by examining heterogeneity in treatment effects with respect to $\lambda$. It is clear that the partisan alignment does not affect the distaste for corruption mechanism. \\ 

\begin{figure}[h]
\begin{tikzpicture}
\node (z) at (1.5, .5) {$c$: Corruption information}; 
\node (lambda) at (5, 1.5) {$\lambda$: Corruption aversion};
\node[text width = 6em] (m) at(7, .5) {$M$: Distaste for corruption};
\node (a) at (5,-.5) {$a$: Partisan alignment};
\node (y1) at (10, 0) {$y_1$: Utility}; 
\node (y2) at (13, 0) {$y_2$: Vote choice};
\draw[->] (z)--(m);
\draw[->] (m)--(y1);
\draw[->] (y1)--(y2);
\draw[->] (a)--(y1);
\draw[->] (lambda)--(5, .6);
\end{tikzpicture}

\caption{Data generating process of the model of corruption revelation and behavior. The arrow from the $\lambda$ to the path between the information and the mediator indicates that the $\lambda$  moderates the distaste for corruption mechanism.}\label{si:fig_dag}
\end{figure}

\subsection{Proofs of Remarks \ref{remark1}-\ref{remark2}}

\subsubsection*{Remark 1(a) : }

\begin{proof}
	$$
	\begin{aligned}
		CATE(y_1, X_1= \lambda)&= E[y_{i1}|c_i =1,\lambda]-E[y_{i1}|c_i=0,\lambda]\\
		&=E[-\lambda + a_i -\varepsilon_i]-E[a_i -\varepsilon_i] \\ 
		&=-\lambda\\
	\end{aligned}
	$$
	and 
	$$
	\begin{aligned}
		CATE(y_1, X_1 = \lambda')&= E[y_{i1}|c_i=1,\lambda]-E[y_{i1}|c_i=0,\lambda]\\
		&=E[-\lambda' + a_i -\varepsilon_i]-E[a_i -\varepsilon_i] \\ 
		&=-\lambda'
	\end{aligned}
	$$
	Because $\lambda > \lambda'$, we conclude $|CATE(y_1, X_1=\lambda)| > |CATE(y_1, X_1=\lambda')|.$

\end{proof}

\subsubsection*{Remark 1(b) :} 

\begin{proof}
	
	$$
	\begin{aligned}
		CATE(y_1, a)&=
		E[y_{i1}|c_i=1,a]-E[y_{i1}|c_i=0,a]\\
		&=E[-\lambda_i + a -\varepsilon_i]-E[a -\varepsilon_i]\\
		&=-\mathbb{E}[\lambda_i] 
	\end{aligned} 
	$$
	
	$$
	\begin{aligned}
		CATE(y_1, a')&=
		E[y_{i1}|c_i=1,a']-E[y_{i1}|c_i=0,a']\\
		&=E[-\lambda_i + a' -\varepsilon_i]-E[a' -\varepsilon_i]\\
		&=-\mathbb{E}[\lambda_i] 
	\end{aligned} 
	$$
	As a result, we conclude $CATE(y_1, X_2=a) = CATE(y_1, X_2=a').$
	
\end{proof}

\subsubsection*{Remark 1(c) :} 

Follows directly from Remark 1(a) and 1(b).

\subsubsection*{Remark 2(a) :} 

\begin{proof}
	
	Recall that $y_2$ is given by:
	$$
	\begin{aligned}
		y_2 &= \begin{cases}
			1 &\text{ if } -\lambda_i c_i + a_i + \varepsilon_i \geq 0\\
			0 &\text{ else }
		\end{cases}
	\end{aligned}
	$$
	$CATE(y_2, X_1 = \lambda)$ is therefore given by:
	$$
	\begin{aligned}
		CATE(y_2, X_1 = \lambda)&=E_{a_i,\varepsilon_i}[y_{i2}|\lambda,c_i=1]-E_{a_i,\varepsilon_i}[y_{i2}|\lambda,c_i=0]\\
		&= \mathbb{E}_{a_i}[\Pr(\varepsilon_i \ge \lambda - a_i)]- \mathbb{E}_{a_i}[\Pr (\varepsilon_i \ge -a_i)]\\
		&= \sum_{a_i \in \{-1,0,1\}} \frac{1}{3} [\Phi(-a_i)- \Phi(\lambda -a_i)]
	\end{aligned}
	$$ the second line follows the independence of $\varepsilon$, and $\Phi$ is the standard normal CDF.
	
	Similarly,
	
	$$
	\begin{aligned}
		CATE(y_2, X_1 = \lambda')&=\sum_{a_i \in \{-1,0,1\}} \frac{1}{3} [\Phi(-a_i)- \Phi(\lambda' -a_i)]\\
	\end{aligned}
	$$
	
	Therefore, $CATE(y_2, X_1 = \lambda)-CATE(y_2, X_1 = \lambda') =  \frac{1}{3} \sum_{a_i \in \{-1,0,1\}}[\Phi(\lambda' -a_i)-\Phi(\lambda -a_i)] < 0$ when $\lambda>\lambda'$. Recall the treatment effect is negative (for example from the proof of Remark 1(b)). Therefore the result holds.
	
\end{proof}

\subsubsection*{Remark 2(b) :} 

\begin{proof}
	
	$$
	\begin{aligned}
		CATE(y_2, a)&=E_\lambda[y_{i2}|a,c_i=1]-E_\lambda[y_{i2}|a,c_i=0]\\
		&=\Phi(-a)-E_\lambda[\Phi(\lambda_i-a)]\\
	\end{aligned}
	$$
	
	Now, we compare $CATE(y_2, 0)$ and $CATE(y_2, 1)$. Firstly, we define $g(\lambda)= \Phi(\lambda)-\Phi(\lambda-1)+\Phi(-1)-\Phi(0)$. 
	
	Note that when $0<\lambda < 1$, $g(\lambda) > 0$. This can be seen from $g(0)=g(1)=0$, and it increases ($g'(\lambda)>0$) when $0<\lambda \leq 0.5$, and decrease (($g'(\lambda)<0$)) when $0.5 \leq \lambda<1$. 
	
	Therefore, 
	
	$$
	\begin{aligned}
		CATE(y_2,1)-CATE(y_2,0) = \mathbb{E}_\lambda [g(\lambda)] > 0.
	\end{aligned}
	$$
	
	Recall the treatment effect is negative (for example from the proof of Remark 1(b)). As a result,  $|CATE(y_2,1)|<|CATE(y_2,0)|$.
	
	
	Next, we compare $CATE(y_2,1)$ and $CATE(y_2,-1)$. Define $h(\lambda)=\Phi(\lambda+1)-\Phi(\lambda-1)+\Phi(-1)-\Phi(1)$. Note that $\frac{d}{d\lambda}[\Phi(\lambda+1)-\Phi(\lambda-1)]=\phi(\lambda+1)-\phi(\lambda-1)<0$ for $\lambda>0$. Therefore, $h(\lambda)<0$ for $\lambda>0$. Then, 
	$$
	\begin{aligned}
		CATE(y_2,1)-CATE(y_2,-1) = \mathbb{E}_\lambda [h(\lambda)] < 0.
	\end{aligned}
	$$
	
	Because the treatment effect is negative, we conclude$|CATE(y_2,-1)|<|CATE(y_2,1)|$.
	
\end{proof}

\subsubsection*{Remark 2(c) :} 

Follows directly from Remarks 2(a) and 2(b).

\subsection[Simulation]{Simulation}\label{app:sim}

We conduct a simulation based on data generating processes based on our motivating theoretical model. Specifically, we generate datasets with $n \in \{100, 500, 1000\}$ observations under the following parametric and distributional assumptions:
\begin{itemize}
\item $c_i \in \{0, 1\}$ and $\Pr(c_i = 1) = \frac{1}{2}$
\item $a_i \in \{-1, 0, 1\}$ where $\Pr(a_i = -1) = \Pr(a_i = 0) = \Pr(a_i = 1) = \frac{1}{3}$
\item $\lambda \sim U[0, 1]$
\item $\varepsilon \sim \mathcal{N}(\mu, 1)$ for $\mu = \Phi^{-1}(0.1,0.2,...0.9)+0.25$, where $\Phi^{-1}(\cdot)$ is the inverse CDF of the standard normal distribution. We add $-E[c_i\lambda_i + a_i] = 0.25$ to generate simulated electorates with 10\%, 20\%,...,90\% incumbent support (in expectation). By varying $\mu$, therefore, we vary the \emph{ex-post} vote-share for the incumbent that the researcher would observe \emph{after} running the experiment. 
\end{itemize}

We generate outcome variables as in the main text:
\begin{align*}
y_{i1} &= -c_i\lambda_i + a_i + \varepsilon_i\\
y_{i2} &= I[y_{i1}\geq 0]
\end{align*}

Given each simulated data set, we estimate the following OLS model: 
\begin{align}\label{eq:estimator}
y_{il} &= \beta_0 + \beta_1 c_i + \beta_2 \lambda_i + \beta_3 a_i + \beta_4 c_i \lambda_i + \beta_5 c_i a_i + \epsilon_i,
\end{align}
for $l \in \{1, 2\}$ (both outcomes). We are interested in estimates of $\beta_4$ and $\beta_5$. $\beta_4$ is the estimator of the difference in CATEs  moving from the minimum to the maximum of corruption aversion, $\lambda$. $\beta_5$ is the estimator of the difference in CATEs for a one category shift in partisanship (e.g., opposition-aligned to non-aligned or non-aligned to aligned). \\

Figure \ref{fig:sim1} reports the average estimates of the differences in CATEs ($\widehat{\beta_4}$ and $\widehat{\beta_5}$ from \eqref{eq:estimator}). Figure \ref{fig:sim2} reports the power of the estimators $\beta_4$ and $\beta_5$ at the $\alpha = 0.05$ level. Since inferences about mechanisms are typically made on the basis of whether we detect HTEs (not the magnitude of the difference), Figure \ref{fig:sim2} corresponds to more typical use of HTEs. Note that for all sample sizes, in highly imbalanced electorates (i.e., average incumbent support of 10\% or 90\%), the $\beta_5$ is a better powered estimator than $\beta_4$.\\

\begin{figure}
\resizebox{\textwidth}{!}{\includegraphics{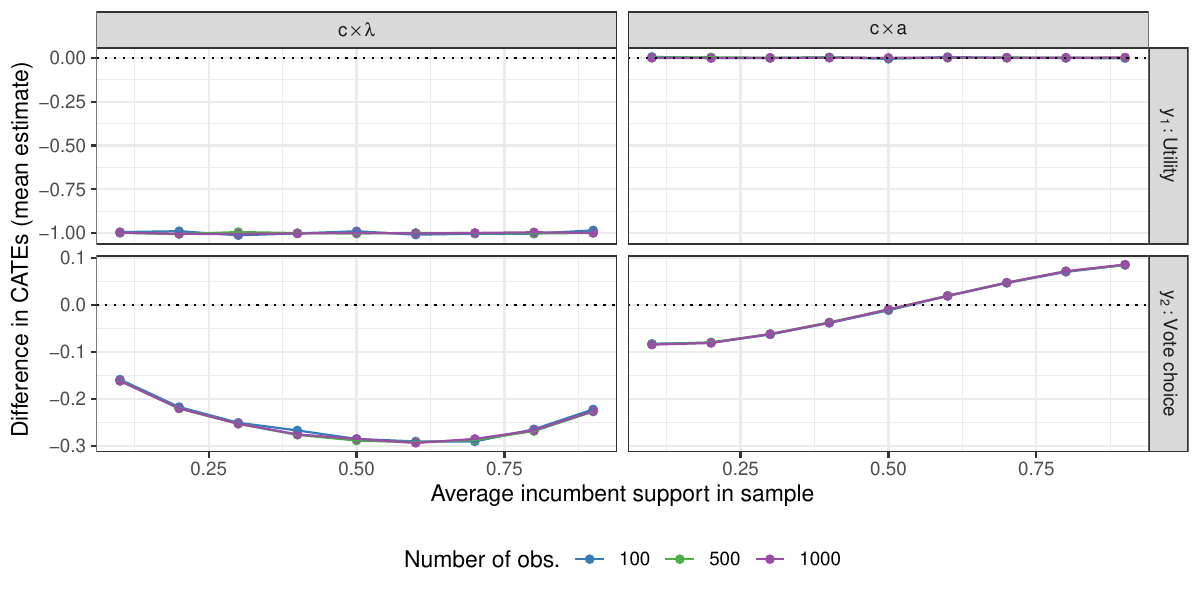}}
\caption{Difference in CATEs in each covariate ($\lambda_i$ and $a_i$) for outcomes $y_{i1}$ and $y_{i2}$. Each point reports the average over 5,000 simulated datasets.}\label{fig:sim1}
\end{figure}

\begin{figure}
\resizebox{\textwidth}{!}{\includegraphics{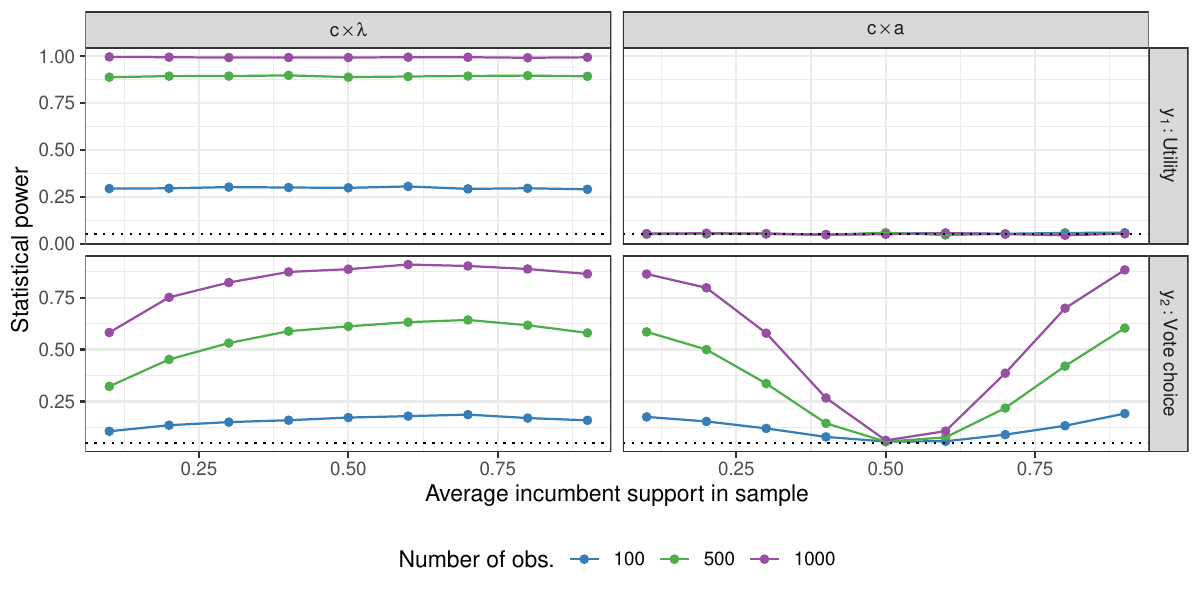}}
\caption{Power of the difference in CATEs in each covariate ($\lambda_i$ and $a_i$) for outcomes $y_{i1}$ and $y_{i2}$. Each point reports the average over 5,000 simulated datasets.}\label{fig:sim2}
\end{figure}

Given the discrete distribution of $a_i$, researchers may opt for a more flexible regression specification than \eqref{eq:estimator}. To this end, we also consider the following model that compares effects of corruption revelation among independent/neutral ($a_i = 0$) and aligned ($a_i = 1$) partisans relative to its effects among unaligned ($a_i = -1$) partisans:
\begin{align}\label{eq:estimator_factor}
y_{il} = \beta_0 & + \beta_1 c_i + \beta_2 \lambda_i + \beta_3 \mathbb{I}(a_i=0) + \beta_4 \mathbb{I}(a_i = 1) + \beta_5 c_i \lambda_i + \beta_6 c_i \mathbb{I}(a_i = 0) + \beta_7 c_i \mathbb{I}(a_i = 1) + \epsilon_i. \end{align}
Figure \ref{fig:sim1factor} plots average estimates of the differences in CATEs ($\widehat{\beta_5}$, $\widehat{\beta_6}$, and $\widehat{\beta_7}$ in \eqref{eq:estimator_factor}) while Figure \ref{fig:sim2factor} plots the power of these estimators at the $\alpha = 0.05$. We see that particularly in highly imbalanced electorates, researchers are as likely to detect differences in the CATEs between unaligned (base category, $a_i = -1$) and aligned ($a_i = 1$) voters as they are to detect differences in CATEs in corruption aversion ($\lambda_i$). 

\begin{figure}
\resizebox{\textwidth}{!}{\includegraphics{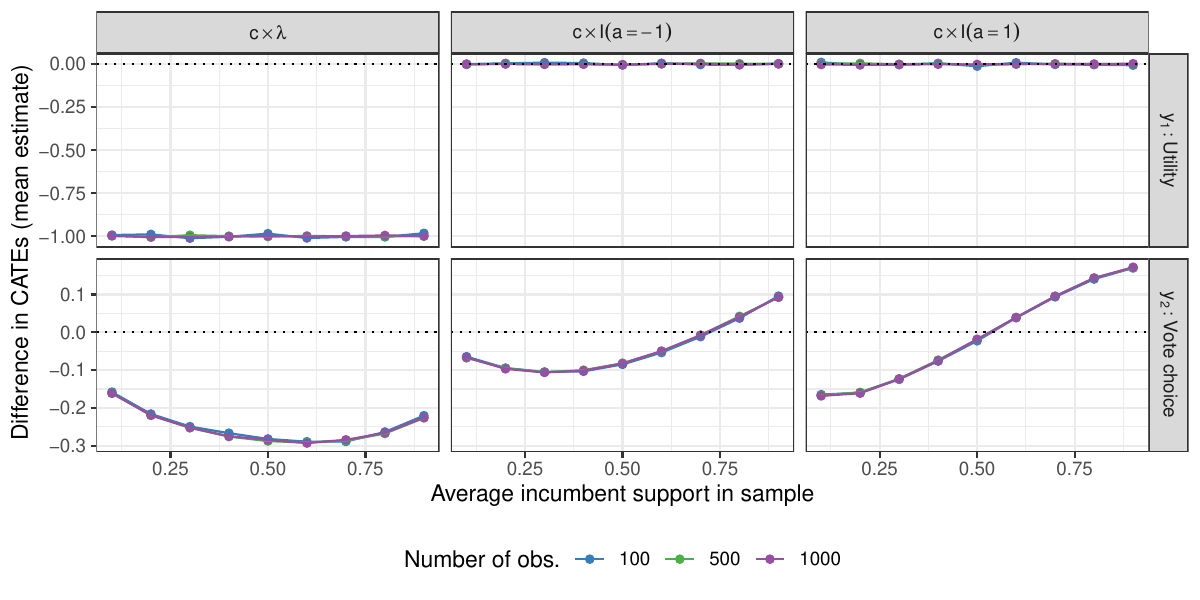}}
\caption{Difference in CATEs in each covariate ($\lambda_i$ and $a_i$) for outcomes $y_{i1}$ and $y_{i2}$. Each point reports the average over 5,000 simulated datasets.}\label{fig:sim1factor}
\end{figure}

\begin{figure}
\resizebox{\textwidth}{!}{\includegraphics{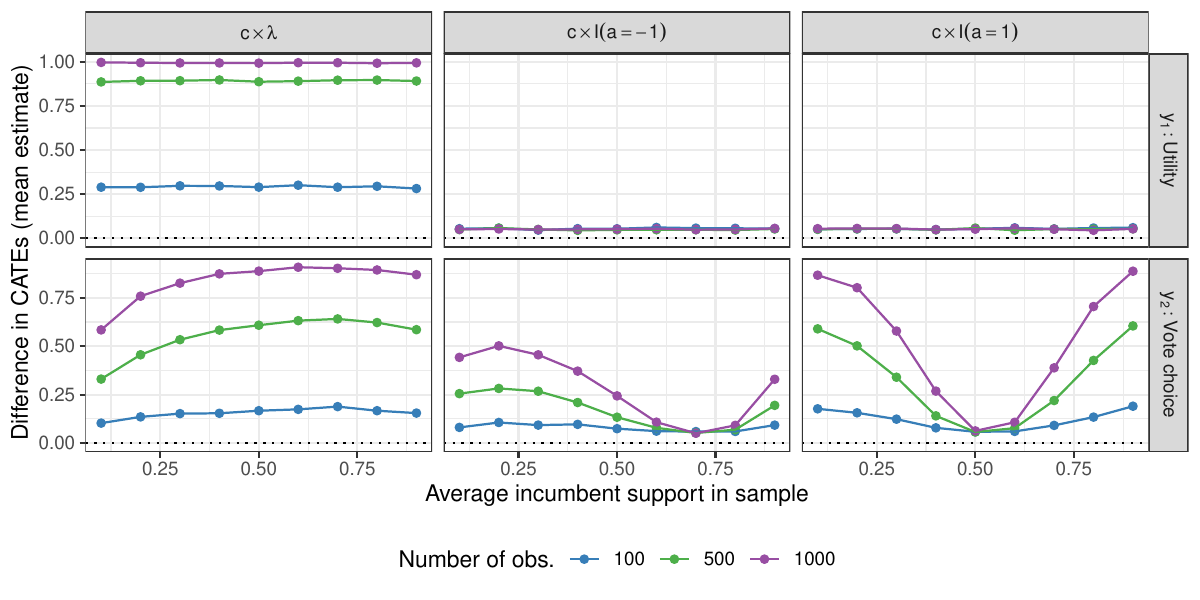}}
\caption{Power of the difference in CATEs in each covariate ($\lambda_i$ and $a_i$) for outcomes $y_{i1}$ and $y_{i2}$. Each point reports the average over 5,000 simulated datasets.}\label{fig:sim2factor}
\end{figure}

\section[  Generalized Framework with Multiple Mechanisms]{Generalized Framework with Multiple Mechanisms}\label{app:multi}

Suppose that there exist $J\geq1$ mediators (or mechanisms), indexed by $M_1,...,M_J$. Given two treatment values, $z, z' \in Z$, the total effect of $Z$ on $Y$ is:
\begin{align}
	TE^Y(z, z'; X) &= Y(z, M_1(z; X), ..., M_J(z; X); X)- Y(z', M_1(z'; X),..., M_J(z'; X); X)
\end{align}

Our notation varies slightly from conventional presentations of mediation that only consider two mechanisms (mediators) \citep{imai2010general}. To this end, $j-$ and $j+$ denote index $h \in J$ such that $h<j$ and  $h>j$ respectively. Treatment effects may consist of direct ($DE$) and indirect ($IE_j$) effects, as follows:\footnote{The intuition for our notation is as follows: define $IE_0(z, z'; X)= DE(z, z'; X)$. This implies that the first term of $IE_j(z, z'; X)$ and the second term of  $IE_{j-1}(z, z';X)$ cancel out.}
\begin{align}
	DE^Y(z, z'; X) = Y(z, M_1(z; X), ..., M_J(z; X); X)- Y(z', M_1(z; X),...,M_J(z; X); X)
\end{align}

\begin{equation}
	\begin{aligned}
		IE^Y_j(z, z'; X) = Y&(z', M_{j-}(z'; X) ,M_j(z; X),M_{j+}(z; X) ; X)- \\ 
		&Y(z', M_{j-}(z'; X) ,M_j(z'; X),M_{j+}(z; X) ; X)
	\end{aligned}
\end{equation}


The direct effect, $DE^Y(z, z';X)$ represents the direct effect of $Z$ on $Y$ holding mediators at potential outcomes  $M_j(z; X)$. It is not necessary to believe that treatments produce unmediated (direct) effects on outcomes to use this framework. We allow for direct effects in the interest of generality. The indirect effects, measures the effect on the outcome that operates by changing the potential outcome of the mediator. As is standard, we can then re-write the total effect as follows:

\begin{align}\label{eq:te1}
	TE^Y(z, z';X) &= DE^Y(z, z';X)+ \sum_{j= 1}^J IE^Y_j(z,z'; X), 
\end{align}
which is defined at the unit, or individual level. Although there exists multiple ways to decompose total effect and this decomposition holds mechanisms to be independent, all results in the paper hold if other mediators (other than mediator $j$ of interest) are arbitrarily correlated.

 If we evaluate expectations over $X$, we obtain:
\begin{align}
	ATE^Y(z, z') &= E_{X}[Y(z,M_j(z;X);X)- Y(z',M_j(z';X);X)] \\
	&= E_{X}[DE^Y(z, z'; X) + \sum_{j = 1}^J IE^Y_j(z,z'; X)] \label{equ:ade1}\\
	&=  ADE^Y(z, z')+\sum_{j = 1}^J AIE^Y_j(z,z')\label{equ:aie1}
\end{align}

Because there are multiple mediators, we generalize Assumption \ref{cond:irre2} from the main text while maintaining Assumption \ref{cond:irre1}

\begin{ass}[Exclusion II]\label{cond:irrea2}
	Given $z,z' \in Z$ and $x,x' \in X_k$, $X_k$ is excluded to the indirect effect of any other mechanism, $j' \neq j$, $AIE^Y_{j'}$: $AIE^Y_{j'}(X_k=x)=AIE^Y_{j'}(X_k=x')$.
\end{ass}

The only difference between Assumption \ref{cond:irre2} and Assumption \ref{cond:irrea2} is that Assumption \ref{cond:irrea2} is a generalization to Mechanisms $j \in 3,...,J$. 

\section[  Comparison to Mediation Analysis]{Comparison to Mediation Analysis}\label{app:mediation}

In this section, we compare our framework connecting HTEs and mechanisms to mediation analysis. It is important to note that the two frameworks are built on different principles and objects. The main purpose of mediation analysis is to identify and estimate various average causal mediation effects (ACMEs). Identification of these effects relies on the assumption of sequential ignorability \citep{imai2013identification}. On the other hand, our framework aims to infer the activation of a mechanism (for at least one unit in the sample) by using heterogeneous treatment effects, which instead, relies on exclusion assumptions that we propose (Assumptions \ref{cond:irre1}-\ref{cond:irre2}). The following DAGs facilitate our discussion of the differences in these approaches. \\

\begin{figure}[!h]
	\begin{center}
		\begin{tikzpicture}
		\node at (2, 2) {\textsc{Mediation framework}};
			\node at (0,0) (z) {$Z$};
			\node at (2,0) (m) {$M_1$};
			\node at (2,0.8) (m') {$M_2$};
			\node at (4,0) (y) {$Y$};
			\node at (3,-1) (u) {$U$};
			\draw[->] (z) -- (m);
			\draw[->] (m) -- (y);
			\draw[->,red] (u) -- (m);
			\draw[->,red] (u) -- (y);
			\draw[->] (z) -- (m');
			\draw[->] (m') -- (y);
			\draw[->,dashed,red] (z) -- (u);
			\draw[->]
			(z) edge [bend left=80] (y);         
			
	\node at (8, 2) {\textsc{Our framework}};
			\node at (6,0) (z) {$Z$};
			\node at (8,0) (m) {$M_1$};
			\node at (8,0.8) (m') {$M_2$};
			\node at (10,0) (y) {$Y$};
			\node at (9,-1) (u) {$U$};
			\node at (7,-1) (x) {$X$}; 
			\node at (6.6,0.1) (int1) {}; 
			\node at (7.4,0.6) (int2) {}; 
			\node at (7,1.3) (int3) {}; 
			\draw[->] (z) -- (m);
			\draw[->] (m) -- (y);
			\draw[->] (u) -- (m);
			\draw[->] (u) -- (y);
			\draw[->] (z) -- (m');
			\draw[->] (m') -- (y);
			\draw[->] (x) -- (int1);
			\draw[->,dashed,red] (x) -- (int2);
			\draw[->,dashed,red] (x) -- (int3);
			\draw[->]
			(z) edge [bend left=80] (y);    
			
		\end{tikzpicture}
	\end{center}
\caption{DAGs when mechanisms $M_1$ and $M_2$ are independent. Note that the red arrows are ruled out by assumption of each respective framework. The left DAG, representing the assumptions of the causal mediation framework, highlights that all variables $U$ must be in the adjustment set and also cannot be affected by the treatment $Z$. The right DAG, representing our framework, emphasizes that if covariate $X$ seeks to measure the activation of $M_1$, it must not moderate other channels.} \label{fig:seq1}
\end{figure}

Consider first the case with multiple independent causal mechanisms in Figure \ref{fig:seq1}. In the left DAG, treatment $Z$ indirectly affects outcome $Y$ through two channels $M_1$ and $M_2$, and may also directly affect $Y$. To non-parametrically identify average indirect effect mediated by $M_1$, the key part of the sequential ignorability is $Y_i(z',m_1,M_{i2}(z')) \perp M_{i1}|Z_i=z$. The assumption is challenging to interpret and cannot be guaranteed by most experimental designs because it involves cross-index independence, from $z$ to $z'$. Graphically, sequential ignorability requires that all variables $U$ should be observed and included in the adjustment set. Another important implication of the assumption is that $U$ cannot be affected by the treatment $Z$, i.e., the dashed line is not allowed. When these assumptions hold, mediator $M_1$ must be measured in order to estimate the ACME.\\

In the right DAG, treatment $Z$ again affects outcome $Y$ through two channels $M_1$ and $M_2$, and may also directly affect $Y$. Mechanism $M_1$ is activated if its average indirect effect is non-zero for some unit. The existence of HTEs provides a sufficient condition for this activation under both exclusion assumptions. In our framework, variables $U$ may or may not be measured or included in the adjustment set. Further, $U$ can be a child of treatment, $Z$ (though this introduces a third mechanism). For HTEs to (ever) be informative of mechanism activation, we need to observe another pre-treatment variable $X$. It is assumed not to moderate (average) direct effect and (average) indirect effect mediated by $M_2$. That is, two dashed lines are excluded in the right DAG in Figure \ref{fig:seq1}. From these DAGs, it is clear that there is no logical ordering of the strength of the two types of assumptions. \\

There are many other differences between the two methods. For example, in the mediation analysis, mediators must be measurable and measured while these measurements are not required by our framework. However, in our framework, researchers must have a measured candidate MDV $X$ that is believed to satisfy Assumptions \ref{cond:irre1}-\ref{cond:irre2}. Also, when using HTEs to detect mechanisms, researchers need to pay more attention to whether $Y$ is directly affected outcome. Even though we have emphasized their differences, two frameworks also have shared features. For example, both require that the total causal effect of $Z$ on $Y$ is identified. This is explicitly assumed by sequential ignorability $\{Y_i(z,m_{i1},m_{i2}), M_{i1} ,M_{i2} \} \perp  Z_i$ and implicitly assumed in our framework.

\subsection{Related Mechanisms and Correlated Mediators}

Because extension to related mechanisms (correlated mediators) is not our main focus, we only make some brief comments. Consider two different correlation structures. In the left DAG of Figure \ref{fig:seq2}, mediator $M_2$ directly affects $M_1$. As mentioned in the \citet{imai2013identification}, two assumptions are required to identify the ACME with respect to $M_1$. The first one is the modified sequential ignorability assumption. Unfortunately, with causally dependent multiple mediators, an assumption of no treatment-mediator interaction effects is also required. For the HTE-mechanism framework, we can simply treat correlated mechanisms as one (molar) mechanism. Then, as long as exclusion assumptions hold for the average direct effect and other indirect effects, our results in the main text still hold. One caveat is that if the mechanism represented by $M_1$ is inactive, for example, because the dashed line in the figure disappears, then the HTE-mechanism framework may yield misleading inferences about the influence of mechanism 1.  \\

\begin{figure}[!h]
	\begin{center}
		\begin{tikzpicture}
			\node at (0,0) (z) {$Z$};
			\node at (2,0) (m) {$M_1$};
			\node at (2,1) (m') {$M_2$};
			\node at (4,0) (y) {$Y$};
			\draw[->] (z) -- (m);
			\draw[->, dashed] (m) -- (y);
			\draw[->] (z) -- (m');
			\draw[->] (m') -- (y);
			\draw[->,blue] (m') -- (m);
			\draw[->]
			(z) edge [bend left=80] (y);

			\node at (6,0) (z) {$Z$};
			\node at (8,0) (m) {$M_1$};
			\node at (8,1) (m') {$M_2$};
			\node at (10,0) (y) {$Y$};
			\node at (7,-1) (x) {$X$}; 
			\node at (6.6,0.1) (int1) {}; 
			\node at (7.4,0.6) (int2) {}; 
			\node at (7,1.3) (int3) {}; 
			\draw[->] (z) -- (m);
			\draw[->] (m) -- (y);
			\draw[->] (z) -- (m');
			\draw[->] (m') -- (y);
			\draw[->,blue] (x) -- (int1);
			\draw[->,blue] (x) -- (m');
			\draw[->]
			(z) edge [bend left=80] (y);    
			
		\end{tikzpicture}
	\end{center}
	\caption{Two DAGs with correlated mediators. The blue arrows represent correlation structures that can be accommodated using HTE analysis within our framework.} \label{fig:seq2}
\end{figure}

Multiple mechanisms can also be correlated due to other common covariates. This correlated structure can be easily accommodated to the HTE-mechanism framework. For example, in the right DAG of Figure \ref{fig:seq2}, variable $X$ affects two indirect channels with respect to $M_1$ and $M_2$. However, $X$ does not moderate the $M_2$ channel and the direct channel, and thus exclusion assumptions hold. In this case, our results can be directly applied without any modification. Note that in the mediation analysis, this related structure is still classified as having independent causal mechanisms.

\section[  Proofs of Propositions 1-4]{Proofs of Propositions}

We prove propositions under general multi-mechanism cases, as stated in the \ref{app:multi}. 

\subsection{Proof of Proposition 1}
\begin{prop}\label{propid_app}
Assumptions \ref{cond:irre1}-\ref{cond:irre2} are sufficient and generically necessary for $CATE(X_k=x)-CATE(X_k = x')= AIE^Y_1(z,z';X_k = x) - AIE^Y_1(z,z';X_k = x')$.
\end{prop}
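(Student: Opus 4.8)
The plan is to argue directly from the additive decomposition of the difference in CATEs already derived in \eqref{eq:catedif} (and, in the general $J$-mechanism case of Appendix \ref{app:multi}, from the analogous decomposition obtained by taking differences in \eqref{equ:aie1}). Writing
\[
\Delta_{DE}:=ADE^Y(z,z';X_k=x) - ADE^Y(z,z';X_k=x'), \qquad \Delta_j:=AIE^Y_j(z,z';X_k = x)- AIE^Y_j(z,z';X_k = x'),
\]
linearity of expectation applied to the unit-level total-effect decomposition \eqref{eq:te1} yields the unconditional identity $CATE^Y(X_k=x)-CATE^Y(X_k = x') = \Delta_{DE} + \sum_{j}\Delta_j$. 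Sufficiency is then immediate: Assumption \ref{cond:irre1} is precisely $\Delta_{DE}=0$ and Assumption \ref{cond:irre2} (in the form \ref{cond:irrea2}) is precisely $\Delta_j=0$ for every $j\neq 1$, so the right-hand side collapses to $\Delta_1$, which is the claimed equality.

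For generic necessity I would observe that the claimed equality is equivalent to the single scalar condition $\Delta_{DE} + \sum_{j\neq 1}\Delta_j = 0$, whereas Assumptions \ref{cond:irre1}--\ref{cond:irre2} are the strictly stronger system $\Delta_{DE}=0$ and $\Delta_j=0$ for all $j\neq 1$. The system trivially implies the scalar condition; conversely, the scalar condition can hold without the system only when the (nonzero) direct-effect heterogeneity and the (nonzero) indirect-effect heterogeneity of the remaining mechanisms exactly offset. Viewing $(\Delta_{DE},\Delta_2,\dots,\Delta_J)$ as a point of $\mathbb{R}^{J}$ that moves freely as the underlying potential-outcome functions $Y(Z,M;X)$ and $M_j(Z;X)$ vary, every such offsetting configuration lies in the hyperplane $\{\Delta_{DE}+\sum_{j\neq1}\Delta_j=0\}$, which has Lebesgue measure zero in $\mathbb{R}^{J}$. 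Hence, for any parameterization under which the induced law of $(\Delta_{DE},\Delta_2,\dots,\Delta_J)$ is absolutely continuous --- the ``generic parameter values'' of the statement --- the equality implies Assumptions \ref{cond:irre1}--\ref{cond:irre2} with probability one, so outside a null set the two assumptions are necessary.

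The one genuinely delicate step, and the one I would treat most carefully, is pinning down the notion of genericity precisely enough that the hyperplane argument actually bites. Two routes are available. The stronger route posits an explicit parametric family for the potential outcomes and verifies that the map from parameters to $(\Delta_{DE},\Delta_2,\dots,\Delta_J)$ has full-dimensional image (e.g., is a submersion at generic points), so that preimages of measure-zero sets are again measure zero; this is necessarily model-specific. The weaker route --- which matches the paper's own phrasing that ``under generic parameter values, the probability of this knife-edge event is zero'' --- reads genericity qualitatively: the exact cancellation is non-generic because it is destroyed by an arbitrarily small perturbation of the model, so it may be set aside without loss. I would state explicitly which convention is in force; under the qualitative reading the hyperplane observation is the entire content of the necessity direction, and the remaining ingredients --- the decomposition identity and the sufficiency direction --- are routine.
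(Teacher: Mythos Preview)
Your proposal is correct and follows essentially the same route as the paper: both use the additive decomposition of $CATE^Y(X_k=x)-CATE^Y(X_k=x')$ into $\Delta_{DE}+\sum_j\Delta_j$, read off sufficiency directly from the exclusion assumptions, and argue generic necessity by noting that exact cancellation $\Delta_{DE}+\sum_{j\neq1}\Delta_j=0$ with not all terms zero is a measure-zero (knife-edge) event. The only notable differences are cosmetic---the paper collapses $\sum_{j\neq1}\Delta_j$ into a single scalar $\Delta AIE^Y$ and states the result for arbitrary nonzero linear transformations $L(Y)$, while your treatment of genericity (hyperplane of codimension one, absolute continuity of the induced law, the submersion remark) is considerably more careful than the paper's informal ``two real numbers exactly offset with probability zero.''
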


We prove a stronger version of Proposition \ref{propid_app} for any non-zero $L(Y)$ where $L$ is a non-zero linear transformation. By non-zero linear transformation, we mean that there exists a non-zero constant matrix $A$ such that $L(Y)=AY$.

\begin{proof}
By definition of $CATE$:

\begin{align}
	CATE^{L(Y)}(X_k = x) &= E_{X_{\neg k}} [L(Y)|Z = z, X_k= x]-E_{X_{\neg k}}[L(Y)|Z = z', X_k= x] \\
	&=E_{X_{\neg k}} [L(DE^Y(z,z';X_k=x) + \sum_{j=1}^J IE^Y_j(z,z';X_k=x))]\label{eq:1a}\\
	&= L \{E_{X_{\neg k}} [DE^Y(z,z';X_k=x) + \sum_{j=1}^J IE^Y_j(z,z';X_k=x)]\} \label{eq:1b} \\
	&= L [ADE^Y(z,z';X_k=x) + \sum_{j=1}^J AIE^Y_j(z,z';X_k=x)]
\end{align}

Equation \eqref{eq:1a} follows from the linearity of expectations and the decomposition of the total effect in \ref{eq:te1}. Equation \eqref{eq:1b} is guaranteed by the linearity of $L$. \\

Therefore, 
\begin{align}
&	CATE^{L(Y)}(X_k = x) - CATE^{L(Y)}(X_k = x') \\
=&L[ADE^Y(z,z';X_k=x)-ADE^Y(z,z';X_k=x')] \label{eq:idi1}\\
+&L[\sum_{j=2}^J AIE^Y_j(z,z';X_k=x)-AIE^Y_j(z,z';X_k=x')]\label{eq:idi2}\\
 +& L[AIE^Y_1(z,z';X_k=x)-AIE^Y_1(z,z';X_k=x')]
\end{align}

For sufficiency, under exclusion assumptions \ref{cond:irre1} and \ref{cond:irre2}, equation \eqref{eq:idi1} and \eqref{eq:idi2} are zero.\\

For generically necessity,   let $\Delta ADE^Y$ denote  $ADE^Y(z,z';X_k=x)-ADE^Y(z,z';X_k=x')$, and $\Delta AIE^Y$ denote $\sum_{j=2}^J AIE^Y_j(z,z';X_k=x)-AIE^Y_j(z,z';X_k=x')$. Then, $CATE^{L(Y)}(X_k = x) - CATE^{L(Y)}(X_k = x')=L[AIE^Y_1(z,z';X_k=x)-AIE^Y_1(z,z';X_k=x')]$ implies $L[\Delta ADE^Y + \Delta AIE^Y]=0$.  That is, (1)$\Delta ADE^Y =0$ and $\Delta AIE^Y=0$, and (2) $\Delta ADE^Y \neq 0$ or/and $\Delta AIE^Y\neq0$, and $\Delta ADE^Y = - \Delta AIE^Y$. \\

Case (1) is equivalent to assumptions \ref{cond:irre1}-\ref{cond:irre2}. This can be realized if covariate $X_k$ does not `modify' the average direct and indirect effects. \\

For case (2), covariate $X_k$ does modify the average direct or/and indirect effects. Generically, it can take any real numbers. However, the probability that two real numbers exactly offset, $\Delta ADE^Y = - \Delta AIE^Y$ is 0. \\

Therefore, we conclude that generically, case (1), i.e., assumptions \ref{cond:irre1}-\ref{cond:irre2}, hold.

\end{proof}

\subsection{Proof of Proposition 2}

\begin{prop}\label{prop1_app}Suppose Assumptions \ref{cond:irre1}-\ref{cond:irre2} hold with respect to $X_k$ for outcome $Y$. If HTEs exist with respect to $X_k$, then $X_k \in \textbf{X}^{MDV}$ for mechanism $M_j$.
\end{prop}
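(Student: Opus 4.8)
The plan is to obtain Proposition 2 as an essentially immediate corollary of the identification result (Proposition \ref{prop:ident} in the main text, proved in its general multi-mechanism form as Proposition \ref{propid_app} here), so very little new work is needed. First I would invoke that identification result: because Assumptions \ref{cond:irre1} and \ref{cond:irre2} (in the general case, Assumption \ref{cond:irrea2}) are assumed to hold with respect to $X_k$ for the outcome $Y$, for every pair $x, x' \in X_k$ we have
\[
CATE^Y(X_k = x) - CATE^Y(X_k = x') = AIE^Y_j(z,z'; X_k = x) - AIE^Y_j(z,z'; X_k = x').
\]
This is the only substantive ingredient; the rest is unpacking definitions.

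Next I would translate the hypothesis. By Definition \ref{def:hte}, the statement ``HTEs exist with respect to $X_k$'' means there is a pair $x \neq x' \in X_k$ (more precisely, a set of such pairs of positive probability measure) with $CATE^Y(X_k = x) \neq CATE^Y(X_k = x')$. Plugging such a pair into the displayed identity forces $AIE^Y_j(z,z'; X_k = x) \neq AIE^Y_j(z,z'; X_k = x')$. But this is exactly the defining condition in Definition \ref{def:MDV}: there exist $x, x' \in X_k$ at which the conditional average indirect effect of mechanism $M_j$ differs. Hence $X_k$ is a mechanism detector variable for $M_j$ with respect to $Y$, i.e., $X_k \in \textbf{X}^{MDV}$ for mechanism $M_j$, which is the claim.

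The ``hard part'' here is purely bookkeeping rather than analysis. One must be careful to (i) invoke the general, multi-mechanism version of the exclusion assumptions and the identification lemma, stated for the focal index $j$, so that \emph{all} other indirect effects and the direct effect drop out of the CATE difference; and (ii) carry the positive-measure caveat through — the witnessing set of pairs $(x,x')$ for heterogeneity has positive measure, and on that set the $AIE^Y_j$ difference is nonzero, so the witnessing set for the MDV condition also has positive measure. No estimation, distributional, or continuity assumptions enter: the proposition is a restatement of the identification lemma combined with Definitions \ref{def:hte} and \ref{def:MDV}.
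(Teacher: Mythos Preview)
Your proposal is correct and follows essentially the same approach as the paper: invoke the identification result (Proposition \ref{propid_app}) so that the CATE difference equals the difference in $AIE^Y_j$, then read off the MDV conclusion from Definition \ref{def:MDV}. The paper's only addition is to phrase the argument for an arbitrary non-zero linear transformation $L(Y)$ rather than $Y$ itself, which is a cosmetic strengthening that costs nothing; your careful remark about carrying the positive-measure caveat through is, if anything, more explicit than the paper's version.
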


We prove a stronger version of Proposition \ref{prop1_app} for any non-zero $L(Y)$ where $L$ is a non-zero linear transformation. By non-zero linear transformation, we mean that there exists a non-zero constant matrix $A$ such that $L(Y)=AY$.

\begin{proof}
Follow the same proof in the Proposition \ref{propid_app}, we have $CATE^{L(Y)}(X_k = x)= L [ADE^Y(z,z';X_k=x) + \sum_{j=1}^J AIE^Y_j(z,z';X_k=x)]$.

Then, under exclusion assumptions \ref{cond:irre1} and \ref{cond:irre2}, we can express:


\begin{align}\label{equ:hte}
	CATE^{L(Y)}(X_k = x) - CATE^{L(Y)}(X_j = x')=L[ AIE^Y_j(z,z';X_k=x)-AIE^Y_j(z,z';X_k=x')]
\end{align}

HTEs  exist with respect to $X_k$ if \eqref{equ:hte} is non-zero. In this case, then $X_k \in \textbf{X}_j^{MDV}$ by the definition of MDV.

\end{proof}

\subsection{Proof of Proposition 3}
  
\begin{prop}\label{prop2_app}Suppose Assumptions \ref{cond:irre1}-\ref{cond:irre2} hold with respect to $X_k$ for outcome $Y$. If no HTEs exist with respect to $X_k$, at least one of the following must be true: If no HTEs exist with respect to $X_k$, at least one of the following must be true:
	\begin{enumerate}
		\item $X_k \notin \textbf{X}^{MDV}$ for mechanism $j$.
		\item  No MDV exists.
	\end{enumerate}
\end{prop}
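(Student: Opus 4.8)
The plan is to obtain the statement as an immediate consequence of the identification result (Proposition~\ref{propid_app}, proved in the previous subsection). That result says that \emph{because} Assumptions~\ref{cond:irre1}--\ref{cond:irre2} are in force for $X_k$ and outcome $Y$, the difference in CATEs at any two levels of $X_k$ collapses to a difference in conditional average indirect effects of the focal mechanism $M_1$:
\[
CATE^Y(X_k=x)-CATE^Y(X_k=x') \;=\; AIE^Y_1(z,z';X_k=x)-AIE^Y_1(z,z';X_k=x').
\]
So the whole argument is a short, contrapositive-style reading of this identity together with the definitions of HTEs and of an MDV; there is no genericity to worry about here, since we are \emph{given} that Assumptions~\ref{cond:irre1}--\ref{cond:irre2} hold and only need the sufficiency direction of Proposition~\ref{propid_app}.

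First I would invoke the hypothesis ``no HTEs exist with respect to $X_k$.'' By Definition~\ref{def:hte} (and its measure-theoretic footnote), this means the left-hand side of the displayed identity is zero for all pairs $x\neq x'$ outside an $X_k$-null set. Substituting into the identity yields $AIE^Y_1(z,z';X_k=x)=AIE^Y_1(z,z';X_k=x')$ for (almost) every such pair, i.e.\ the conditional AIE of $M_1$ is constant in $X_k$. Comparing this with Definition~\ref{def:MDV}, $X_k$ fails the defining condition of a mechanism detector variable for $M_1$, so $X_k\notin\textbf{X}^{MDV}$. This already delivers conclusion~(1), hence the disjunction.

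It then remains only to explain why conclusion~(2) is also listed. Having shown $X_k\notin\textbf{X}^{MDV}$, there are exactly two mutually compatible situations: $\textbf{X}^{MDV}\neq\emptyset$ (some \emph{other} covariate is an MDV for $M_1$, so the postulated detector $X_k$ is simply the wrong one) or $\textbf{X}^{MDV}=\emptyset$, which is precisely conclusion~(2). Conclusion~(2) is therefore logically a special case of conclusion~(1); it is broken out separately only because it is the substantively distinct explanation that Corollary~\ref{cor1} goes on to decompose (an inert mechanism versus an active mechanism whose indirect effect is homogeneous across units). The argument has no genuine technical obstacle; the only points requiring care are applying the ``for some $x,x'$'' clauses of Definitions~\ref{def:hte} and~\ref{def:MDV} consistently up to null sets, and observing that the same reasoning runs verbatim in the general $J$-mechanism setting of Appendix~\ref{app:multi} once Assumption~\ref{cond:irre2} is replaced by Assumption~\ref{cond:irrea2} and the multi-mechanism form of Proposition~\ref{propid_app} is used.
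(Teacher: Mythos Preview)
Your proposal is correct and follows essentially the same approach as the paper: both hinge on the identity $CATE^Y(X_k=x)-CATE^Y(X_k=x')=AIE^Y_1(X_k=x)-AIE^Y_1(X_k=x')$ under the exclusion assumptions. The only cosmetic difference is that you argue directly (no HTEs $\Rightarrow$ constant conditional AIE $\Rightarrow$ $X_k\notin\textbf{X}^{MDV}$), whereas the paper re-derives the identity inline and frames it as a contrapositive (negate both (1) and (2), then show HTEs must exist); your observation that (2) is logically subsumed by (1) makes the direct route slightly cleaner.
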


We prove a stronger version of Proposition \ref{prop2_app} for any non-zero $L(Y)$ where $L$ is a non-zero linear transformation. By non-zero linear transformation, we mean that there exists a non-zero constant matrix $A$ such that $L(Y)=AY$.
 
 \begin{proof}
 Proof by contrapositive. Suppose not, which means  $\textbf{X}_j^{MDV}$ is non-empty and for some $x,x' \in \mathbb{R}$, $AIE^Y_j(X_k = x) \neq AIE^Y_j(X_k = x')$. Then: 

We then can reconstruct $CATE^{L(Y)}(X_k = x) \neq CATE^{L(Y)}(X_k = x')$.

\begin{align}
	L\{ AIE^Y_j(z,z';X_k=x) \} &\neq 	L\{ A IE^Y_j(z,z';X_k=x') \}\\
	 L\{ADE^Y(X_k=x)+ AIE^Y_j(X_k=x) +AIE^Y_{i \neq j}(X_k=x) \} & \neq  L\{ADE^Y(X_k=x')+ \sum_{j=1}^J AIE^Y_j(z,z';X_k=x) \} \label{eq2a}\\
	CATE^{L(Y)}(X_j = x) &\neq CATE^{L(Y)}(X_j = x')
\end{align}

Equation \eqref{eq2a} follows because $ADE^Y(X_k=x)=ADE^Y(X_k=x')$ and $AIE^Y_{i \neq j}(X_k=x)=AIE^Y_{i \neq j}(X_k=x')$ under exclusion assumptions \ref{cond:irre1} and \ref{cond:irre2}. We find HTEs with respect to $X_k$. \\
 
We have thus shown that if two conditions do not hold, then HTEs exist for $X_k=x$ and $X_k=x'$. By contrapositive, we prove that if no HTEs exist with respect to $X_k$, at least one of the two conditions must be true.

  \end{proof}

\subsection{Proof of Proposition 4}

\begin{prop}\label{prop3_app}
Suppose Assumptions \ref{cond:irre1}-\ref{cond:irre2} hold with respect to $X_k$ for outcome $Y$. Let observed outcome $h(Y)$ be a transformed outcome relative to $Y$.
If HTEs exist with respect to $X_k$ for outcome $h(Y)$, then $X_k \in \textbf{X}^{R}.$
\end{prop}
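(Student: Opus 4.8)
The plan is to prove the contrapositive: assuming $X_k\notin\textbf{X}^{R}$, I would show that no HTEs can arise with respect to $X_k$ for the transformed outcome $\widetilde{Y}=h(Y)$. The first step is to translate the hypothesis. By the definition of $\textbf{X}^{R}$, saying $X_k\notin\textbf{X}^{R}$ means exactly that the mediator-composed potential outcome $Y(Z,M(Z;X);X)$ is invariant to its $X_k$-argument: for every $z\in Z$, every value of $X_{\neg k}$, and all $x,x'\in X_k$, $Y(z,M(z;X_k=x,X_{\neg k});X_k=x,X_{\neg k})=Y(z,M(z;X_k=x',X_{\neg k});X_k=x',X_{\neg k})$. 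Thus, as a function, this composed potential outcome depends only on $(z,X_{\neg k})$.

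Next I would observe that this invariance passes through $h$. Since $\widetilde{Y}$ is produced from $Y$ by applying the fixed deterministic map $h$ pointwise, the composed potential outcome $h\!\big(Y(Z,M(Z;X);X)\big)$ is again a function of $(z,X_{\neg k})$ alone, with no dependence on $X_k$. Hence the unit-level treatment effect on $\widetilde{Y}$, $\widetilde{Y}(z,M;X)-\widetilde{Y}(z',M;X)$, does not involve $X_k$, and averaging it to form the CATE,
\[ CATE^{\widetilde{Y}}(X_k=x)=E_{X_{\neg k}}\!\big[\widetilde{Y}(z,M;X)-\widetilde{Y}(z',M;X)\mid X_k=x\big], \]
yields a quantity that is the same for every conditioning value $x$. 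Therefore $CATE^{\widetilde{Y}}(X_k=x)=CATE^{\widetilde{Y}}(X_k=x')$ for all $x,x'\in X_k$: no HTEs for $\widetilde{Y}$ with respect to $X_k$. By contraposition, the existence of such HTEs forces $X_k\in\textbf{X}^{R}$. It is worth remarking that neither Assumptions \ref{cond:irre1}--\ref{cond:irre2} for $Y$ nor the non-linearity of $h$ is used in this direction, which is why the conclusion is only $X_k\in\textbf{X}^{R}$ and not the stronger $X_k\in\textbf{X}^{MDV}$ of Proposition \ref{prop1}: a non-linear $h$ breaks the additive separability that Proposition \ref{prop:ident} requires, so the mediation-based argument is unavailable for $\widetilde{Y}$.

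The main point requiring care — essentially the only non-mechanical step — is the passage from ``the unit-level effect on $\widetilde{Y}$ carries no $X_k$'' to ``the CATE carries no $X_k$.'' This needs the expectation defining $CATE^{\widetilde{Y}}(X_k=x)$ to integrate that $X_k$-free unit-level effect against a distribution of $X_{\neg k}$ that is common across the compared levels of $X_k$; I would make this explicit at that step, noting that it is the convention under which CATEs are used throughout the paper and is automatic when the pre-treatment covariates are jointly independent (or when $E_{X_{\neg k}}$ is read as a marginal expectation). A second, purely expository, care point is to quote the definition of $\textbf{X}^{R}$ at the level of the composed potential outcome $Y(Z,M(Z;X);X)$, so that applying $h$ to it gives precisely $\widetilde{Y}$'s composed potential outcome and no separate bookkeeping for the mediator $M$ is needed.
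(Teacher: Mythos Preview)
Your proposal is correct and follows essentially the same approach as the paper: both argue by contrapositive, observing that $X_k\notin\textbf{X}^{R}$ makes the composed potential outcome $Y(Z,M(Z;X);X)$ invariant to $X_k$, and that this invariance is preserved under the deterministic map $h$, so no HTEs can arise for $\widetilde{Y}$. The paper phrases the key step distributionally (writing $\mathbb{P}(\tilde{y}\mid Z,X_k)=\mathbb{P}(\tilde{y}\mid Z)$) rather than at the unit level, but the content is the same, and, like you, the paper remarks afterward that neither the exclusion assumptions nor the non-linearity of $h$ is actually invoked in the argument.
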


\begin{proof}
Proof by contrapositive. By definition of $\textbf{X}^{R}$, we know $X \notin \textbf{X}^{R}$ implies that $X$ must be independent of $Y$.  Let  $\mathbb{P}(\tilde{y}|Z, X_k)$  be the conditional distribution of $h(Y)$. Suppose $X_k \notin \textbf{X}^{R}$, then:

\begin{align}
	CATE(X_k=x) &=\int \tilde{y} d[ \mathbb{P}(\tilde{y}|Z=z, X_k=x)-\mathbb{P}(\tilde{y}|Z=z',X_k=x)]\\
	&=\int \tilde{y} d[ \mathbb{P}(\tilde{y}|Z=z)-\mathbb{P}(\tilde{y}|Z=z')]\label{eq3a}\\
	&=\int \tilde{y} d[ \mathbb{P}(\tilde{y}|Z=z, X_k=x')-\mathbb{P}(\tilde{y}|Z=z', X_k=x')]\label{eq3b}\\
	&=CATE(X_k=x')
\end{align}

Equations \eqref{eq3a} and \eqref{eq3b} follow from the fact that $X_k$ is independent of $Y$ if $X_k \notin \textbf{X}^R$. \\

Therefore, equivalently, we have shown if HTEs exist with respect to $X_k$, then $X_k \in \textbf{X}^{R}$ by contrapositive.

\end{proof}

The preceding proof does not explicitly invoke the definition of \(h(\cdot)\) or rely on the assumptions concerning \(Y\). A more direct -- though considerably more tedious -- proof can be carried out by introducing \(\delta\) and performing the required integrations.\\

A more important question is whether \(X_k \in \mathbf{X}^R \setminus \mathbf{X}^{MDV}\), meaning that \(X_k\) belongs to \(\mathbf{X}^R\) but not to \(\mathbf{X}^{MDV}\). We can demonstrate that this is possible with a simple example. Suppose that we are interested in how a mobilization treatment, $Z_i \in \{0,1\}$, affects citizens' decision to vote. Consider two covariates that predict turnout: $X_1 \sim \mathcal{N}(0,1)$ and $X_2 \in \{0,1\}$. We will further assume that $X_1$ is a MDV for the unique mechanism, such that:
\begin{align*}
	M(Z; \textbf{X}) &= (1+Z)X_{1}
\end{align*}
Potential voters' utility from voting is given by:
\begin{align*}
	U(Z, X) &= M(Z; X_{1}) + X_{2} = (1+Z)X_{1} + X_{2}
\end{align*}
Our observed behavioral outcome---turnout---is a non-linear function of voter utility as follows:
\begin{align*}
	h(U(Z, X)) & = \begin{cases}
		1 & \text{ if }(1+Z)X_{1} + X_{2} \geq 0\\
		0 & \text{ else }
	\end{cases}
\end{align*}

In this case, Assumptions \ref{cond:irre1}-\ref{cond:irre2} hold by construction since there is only one mechanism. Now, suppose that a researcher mistakenly thought that $X_{2}$ was a MDV for a mechanism (either $M$ or an inert mechanism). Since we have constructed the data generating process, we know that it is not: $X_{2} \in \textbf{X}^R \setminus \textbf{X}^{MDV}$. Evaluating the CATE of $Z$ on turnout when $X_2 = 1$, we have:
\begin{align*}
	CATE(X_2 = 1) &= E[h(U(Z = 1, X))-h(U(Z=0, X))|X_2 = 1]\\
	&= \Pr(2 X_1 + 1 > 0) - \Pr(X_1 + 1 > 0)\\
	&= \Phi(-1)-\Phi(-\frac{1}{2})\\
	&\approx -0.15
\end{align*}
Note that $\Phi(\cdot)$ is the cdf of the standard normal distribution, which we invoke because $X_1 \sim \mathcal{N}(0,1)$. Using the same approach it is straightforward to see that $CATE(X_2 = 0) = \Phi(0)-\Phi(0) = 0$. So it is clear that we have HTEs in $X_2$ because $CATE(X_2 = 1) \neq CATE(X_2 = 0)$. Remember that, by construction, $X_2$ is \emph{not} a MDV, though it is relevant. This numerical example is analogous to the issue that arises with valence in the analysis of our motivating example. \\

For additional intuition about how non-linear transformations of $Y$ affect HTE,  consider the following decomposition: Given pre-treatment variables $\{X_1, X_2,...,X_n\} = \textbf{X}$. Let variables $\{X_{1},...,X_m \} \subset \textbf{X}$ are MDVs for mechanism j  (denote them as the set $\textbf{X}_j^{MDV}$ and the remaining as the set $\textbf{X}_j^{non-MDV}$). 
	
	\begin{equation}\label{equ:additive}
		Y=g_1( \textbf{X}_j^{non-MDV}, \textbf{X}_j^{MDV})+g_2( \textbf{X}_j^{MDV},Z)
	\end{equation} where $g_1(\cdot)$ is any function while  $g_2(\cdot)$ is non-additively separable. A function, $F(X_1,X_2)$, will be called additively separable if it can written as $f_1(X_1) + f_2(X_2)$ for some functions $f_1(X_1)$ and $f_2(X_2)$. 
	
Note  that:
\begin{enumerate}

\item \eqref{equ:additive} should be understood as $$Y=g_1(X_1, X_2,...,X_n)+g_2(X_1,...,X_m,Z).$$

\item The non-additively separable function $g_2( \textbf{X}_j^{MDV},Z)$ can take the form $g_3(Z)+g_4 (\textbf{X}_j^{MDV},Z)$ for some function $g_3(\cdot)$ and non-additively separable function $g_4(\cdot)$.
\end{enumerate}

For any non-zero linear transformation of $Y$, $h(Y)$, calculation of conditional expectations yields:
\begin{align}
	CATE(X=x)-CATE(X=x')=E[h_2( g_2(\textbf{X}_j^{MDV},Z))|X=x]-E [h_2(g_2( \textbf{X}_j^{MDV},Z'))|X=x'] \label{eq3c}
\end{align}
Equation \eqref{eq3c} is only a function of $\textbf{X}_j^{MDV}$ because $h_1(\cdot)$ cancels out.\\

However, for nonlinear transformations $h(Y)$, we cannot cancel $g_1(\cdot)$ in the absence of additional assumptions restricting the functional form of $h(Y)$.

\subsection{Extension of Proposition 5}





\begin{prop}\label{prop4_app}
Suppose Assumptions \ref{cond:irre1}-\ref{cond:irre2} hold with respect to $X_k$ for outcome $Y$. Let observed outcome $h(Y)$ be a transformed outcome relative to $Y$. If HTEs do not exist with respect to $X_k$ for outcome $h(Y)$, then $X_k \in \textbf{X}.$
\end{prop}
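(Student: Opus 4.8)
The plan is to observe that the stated conclusion carries no content beyond the definition of $\textbf{X}$, so that none of the hypotheses are actually needed. By construction $X_k$ is one coordinate of the pre-treatment covariate vector $X=(X_1,\dots,X_K)$, and $\textbf{X}$ is defined as the set of \emph{all} possible pre-treatment covariates; hence $X_k\in\textbf{X}$ immediately. The exclusion assumptions on $Y$, the transformation $h$, and the hypothesized absence of HTEs for $h(Y)$ are all inert for this conclusion. First I would record this one-line argument. The remainder of the write-up---which is the actual point of the proposition---explains why the conclusion cannot be sharpened: in contrast to Proposition \ref{prop2_app}, where a null finding on the \emph{untransformed} outcome still forces $X_k\notin\textbf{X}^{MDV}$ or the non-existence of any MDV, here we cannot even conclude $X_k\notin\textbf{X}^{R}$.

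To see that nothing sharper holds, I would write the analogue of the decomposition \eqref{eq:catedif} for the transformed outcome: $CATE^{h(Y)}(X_k=x)-CATE^{h(Y)}(X_k=x')$ equals the difference of the $ADE^{h(Y)}$ terms plus the sum over mechanisms of the differences of the $AIE^{h(Y)}_j$ terms, at the two levels of $X_k$. As discussed around Definition \ref{def:outcomes} and in the decomposition \eqref{equ:additive} following the proof of Proposition \ref{prop3_app}, a non-linear $h$ generically destroys the additive separability that made Assumptions \ref{cond:irre1}--\ref{cond:irre2} hold for $Y$, so the conditional $ADE^{h(Y)}$ and the conditional $AIE^{h(Y)}_{j\neq 1}$ need not be constant in $X_k$. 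Consequently a zero left-hand side can arise from exact cancellation among these terms \emph{even when} mechanism $M_1$ is active and $X_k$ is an MDV for $M_1$ with respect to $Y$. Thus the absence of HTEs for $h(Y)$ is consistent with $X_k\in\textbf{X}^{MDV}$, with $X_k\in\textbf{X}^{R}\setminus\textbf{X}^{MDV}$, and with $X_k\notin\textbf{X}^{R}$---every possibility remains open---which is precisely why the only valid conclusion is the vacuous $X_k\in\textbf{X}$.

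To make the impossibility concrete I would exhibit a small example in the spirit of the turnout construction following the proof of Proposition \ref{prop3_app}, but tuned in the opposite direction: take a single mechanism (so Assumptions \ref{cond:irre1}--\ref{cond:irre2} hold by construction), let $X_k$ genuinely moderate the effect of treatment on the mediator (so $X_k$ is an MDV for $Y$), and choose a thresholding $h$ together with a distribution for $X_{\neg k}$ calibrated so that the conditional expectation $\mathbb{E}_{X_{\neg k}}[\delta(\,\cdot\,;\,\cdot\,)]$ defining $ADE^{h(Y)}$ is the same at the two levels of $X_k$; then $CATE^{h(Y)}(X_k=x)=CATE^{h(Y)}(X_k=x')$ despite $X_k$ being an MDV. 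The argument carries over verbatim to the multi-mechanism setting of Appendix \ref{app:multi}, and it shows that doing better than $X_k\in\textbf{X}$ requires imposing structure on $h$ itself, as in the Likert coarsening \eqref{eq:likert}. The ``main obstacle'' is therefore not the proof---which is trivial---but the exposition: stating clearly that the weakness of the conclusion is intrinsic, being the combined price of a non-linear transformation (which voids the identifying assumptions) and a null finding (which, even under those assumptions, is uninformative about activation by Proposition \ref{prop2_app}).
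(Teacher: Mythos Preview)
Your core proof is correct and matches the paper's: both observe that $X_k \in \textbf{X}$ holds by definition since $\textbf{X}$ is the set of all pre-treatment covariates, so the hypotheses are inert. The paper's one-line version is the contrapositive (``if $X_k \notin \textbf{X}$ then $X_k = \emptyset$''), and like you it follows up by remarking that the vacuousness is the point and that examples show $X_k$ can lie inside or outside $\textbf{X}^{MDV}$.

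One small slip in your example sketch, which does not affect the proof itself: having a single mechanism makes Assumption~\ref{cond:irre2} vacuous but does \emph{not} automatically dispose of Assumption~\ref{cond:irre1}, which concerns the direct effect; you still need either no direct effect or one constant in $X_k$. Relatedly, if there is no direct effect on $Y$ at all, then $ADE^{h(Y)}\equiv 0$ as well and cannot supply the cancellation you describe---to get the offset you want, you need a nontrivial direct effect on $Y$ that is constant in $X_k$ (so Assumption~\ref{cond:irre1} holds for $Y$), which after transformation yields an $ADE^{h(Y)}$ that varies in $X_k$ and can cancel the variation in $AIE^{h(Y)}_1$.
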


\begin{proof}
	The result follows simply because if $X_k \notin \textbf{X}$, then $X_k = \emptyset$.
\end{proof}

Proposition \ref{prop4_app} indicates that if there exist no HTEs for the transfomed outcome, $X_k$ can be any relevant or non-relevant covariate. The proof again does not explicitly invoke the definition of \(h(\cdot)\) or rely on the assumptions concerning \(Y\). Simple examples can show that $X_k$ can belong to $\textbf{X}^{MDV}$ and $\textbf{X}^{non-MDV}$. Now we provide a stronger version of Proposition \ref{prop4_app} by imposing assumptions about the outcome, $Y$, and the form of the non-linear transformation $h(Y)$. These assumptions permit additional learning from the lack of HTEs in this case.\\

In practice, most transformed outcomes are discrete variables, such as voting behavior, survey responses, or choices. Let us consider the following non-linear transformation of the directed affected outcome $Y$:

\begin{equation}\label{equ:disy}
	h(Y) = 
	\begin{cases}
		y_1 & Y \in (-\infty,c_1]\\
		y_2& Y \in (c_1,c_2]\\
		...\\
		y_q & Y \in (c_{q-1},\infty)\\
	\end{cases}
\end{equation}

Here, will assume $y_i \in \mathbb R$ in \eqref{equ:disy} has no substantive interpretation. In practice, values of $y_i$ are typically normalizations, that are
arbitrarily determined by the researcher. As such, the value is independent of model parameters. \\

To simplify some notation, we define:
\begin{align}
p_i(x;z) &\equiv \Pr [y\in(c_{i-1},c_i]|X=x,Z=z]\\
p_i(x; z, z') &\equiv Pr[y\in(c_{i-1},c_i]|X=x, Z=z]-\Pr [y\in(c_{i-1},c_i]|X=x,Z=z'].
\end{align}

Note that in the interest of parsimony, we omit $M$ in the above equations even though $Y$ is defined as a function of $Z, X,$ and $M$. We maintain $Z$ and $X$ because in order to calculate CATEs, we need at least two possible values of the treatment $Z$ and two distinct values of the covariate $X_k$. We define a covariate as \emph{effective} as follows:

\begin{defn}
 $X_k \in \textbf{X}$ is effective if $\exists i \in \{1,2,...,q\}$ and $x,x' \in X_k$ such that $p_i(x;z,z') \neq p_i(x';z,z')$. 
\end{defn}

Effectiveness means that as $X_i$ changes, it can induce a different probability of $h(Y)=y_i$. It should be clear that if $X_k$ is effective, then it must the case that $X_k \in \textbf{X}^{R}$. In general, if $X_k$ is not effective, then $X_k \notin \textbf{X}^{R}$. 
 
 \begin{prop}\label{propa4}
 	Suppose that observed outcome $h(Y)$ is a discrete non-linear mapping of directly-affected outcome $Y$ in equation \eqref{equ:disy} and Assumptions 1 and 2 hold. Assume further that $Y$ has an absolutely continuous distribution. If HTEs do not exist with respect to $X_k$, then $X_k$ is almost surely not effective.
 \end{prop}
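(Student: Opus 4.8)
The plan is to reduce the statement to an elementary genericity fact about the arbitrary coding values $y_1,\dots,y_q$ in \eqref{equ:disy}. First I would write out the CATE of the transformed outcome explicitly. By \eqref{equ:disy} and the tower property, $E_{X_{\neg k}}[h(Y)\mid X_k=x,Z=z]=\sum_{i=1}^{q}y_i\,p_i(x;z)$, where absolute continuity of $Y$ guarantees that the bin probabilities $p_i(x;z)$ are unambiguously defined (no atoms at the cutpoints $c_i$, so open/closed conventions are immaterial). Differencing the two treatment arms gives
\begin{equation*}
CATE^{h(Y)}(X_k=x)=\sum_{i=1}^{q}y_i\,p_i(x;z,z'),
\end{equation*}
and since the bins $(-\infty,c_1],(c_1,c_2],\dots,(c_{q-1},\infty)$ partition $\mathbb{R}$ we have $\sum_{i=1}^{q}p_i(x;z)=1$ for every $x,z$, hence $\sum_{i=1}^{q}p_i(x;z,z')=0$ for every $x$.

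Next I would translate the hypothesis. By Definition \ref{def:hte}, "no HTEs with respect to $X_k$'' means $CATE^{h(Y)}(X_k=\cdot)$ is $X_k$-almost-everywhere constant; writing $w_i(x,x'):=p_i(x;z,z')-p_i(x';z,z')$ and $w(x,x')=(w_1,\dots,w_q)$, this is exactly the orthogonality relation $\langle y,\,w(x,x')\rangle=0$ for (almost) all pairs $x,x'$, where $y=(y_1,\dots,y_q)$. The partition identity forces $\sum_{i=1}^{q}w_i(x,x')=0$, so every difference vector $w(x,x')$ lies in the fixed hyperplane $H=\{v\in\mathbb{R}^{q}:\sum_i v_i=0\}$. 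On the other hand, by the definition of effectiveness, $X_k$ is effective if and only if $w(x,x')\neq 0$ for some pair $x,x'$.

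The genericity step then closes the argument. Suppose $X_k$ is effective and fix a witnessing pair with $w:=w(x_0,x_1)\neq 0$, $w\in H$. The "no HTEs'' hypothesis requires $y\in w^{\perp}$, which is a proper linear subspace of $\mathbb{R}^{q}$ and hence Lebesgue-null. Because the coding $(y_1,\dots,y_q)$ is, as the surrounding text stresses, chosen by the researcher independently of the data-generating process, the event that it lands in this null set has probability zero; thus for almost every coding an effective $X_k$ produces HTEs. Contrapositively, if no HTEs exist with respect to $X_k$, then almost surely $X_k$ is not effective (the case in which $X_k$ is already not effective being trivial, with the conclusion holding for every $y$). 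I would also remark that for $q=2$ no "almost surely'' is needed, since $H=\mathrm{span}\{(1,-1)\}$ and $\langle y,w\rangle=0$ with $w\neq0$ forces $y_1=y_2$, a degenerate coding; for $q\ge 3$, however, genuinely unlucky but natural codings — e.g. an equally-spaced Likert scale, against a discrete-second-difference pattern in $H$ — can annihilate a nonzero $w\in H$, which is exactly why the qualifier is unavoidable.

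The main obstacle is not any calculation but pinning down the quantifiers: making precise that "almost surely'' ranges over the researcher's DGP-independent choice of the coding $y_i$, and reconciling the "positive-measure set of $x,x'$'' in Definition \ref{def:hte} with the existential "$\exists x,x'$'' in the definition of effectiveness. I would handle the latter either by reading effectiveness up to $X_k$-null sets (as the paper implicitly does for HTEs) or, when available, by invoking continuity of $x\mapsto p_i(x;z)$ — which is the additional place absolute continuity of $Y$ earns its keep. I would note finally that, as with Proposition \ref{prop4_app}, Assumptions \ref{cond:irre1}--\ref{cond:irre2} on $Y$ are not actually used in this implication; they are retained only for continuity with the framework and for the downstream conclusions that $X_k\notin\textbf{X}^{R}$, hence $X_k\notin\textbf{X}^{MDV}$.
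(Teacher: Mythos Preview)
Your proposal is correct and follows essentially the same approach as the paper: write $CATE^{h(Y)}(X_k=x)=\sum_i y_i\,p_i(x;z,z')$, argue by contrapositive that if $X_k$ is effective then some difference vector $w$ is nonzero, and invoke the researcher-chosen, DGP-independent coding $y$ to conclude that $\langle y,w\rangle=0$ is a measure-zero event. Your treatment is in fact more careful than the paper's --- you make explicit the partition constraint $\sum_i w_i=0$ (which is why the paper's index set $D$ must have at least two elements), you spell out the $q=2$ versus $q\ge 3$ distinction, and you flag the quantifier mismatch between Definition~\ref{def:hte} and the definition of effectiveness --- but these are refinements of the same argument rather than a different route.
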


\begin{proof}
Given $x,x'\in \mathbb{R}$, CATEs are given by:

\begin{align}
CATE(X_i=x)=&\sum_{i=1}^q y_i [p_i(x;z)-p_i(x;z')]=\sum_{i=1}^q y_i p_i(x;z,z')\\ 
 CATE(X_i=x')=&\sum_{i=1}^q y_i [p_i(x';z)-p_i(x';z')]=\sum_{i=1}^q y_i p_i(x';z,z').
\end{align}

We now will prove the proposition by contrapositive. Suppose that $X_k$ is effective. If so, then there exists an index set, $D$, with at least two elements such that $CATE(x)-CATE(x')=\sum_{i \in D} y_j [p_i(x;z,z')-p_i(x';z,z')]$ and any $p_i(x;z,z')=0$ for all $i \notin D$. Because $y_i$ is arbitrarily set and is independent of $p_i$, and $Y$ has absolutely continuous distribution, the probability that $\sum_{j \in D} y_j [p_j(x;z,z')-p_j(x';z,z')]=0$ is zero.

\end{proof}

We use the following example to illustrate the above proposition.

\begin{example}
Suppose $h(Y)$ has the following form:

$$
h(Y) = 
\begin{cases}
	y_1 & Y \in (-\infty,c_1]\\
	y_2 & Y \in (c_1,\infty)\\
\end{cases}
$$ where $Y=h(X_1,X_2,Z)$.\\

Then, let us calculate the CATE $X_2$ , given $z,z' \in Z$:

$$
\begin{aligned}
	CATE(X_2=x) &= y_1 [p_1(x;z)-p_1(x;z')] + y_2[p_2(x;z)-p_2(x;z')]\\
	&= y_1 p_1(x;z,z') + y_2 p_2(x;z,z')
\end{aligned}
$$

and 

$$
\begin{aligned}
	CATE(X_2=x') &= y_1 [p_1(x';z)-p_1(x';z')] + y_2[p_2(x';z)-p_2(x';z')]\\
	&= y_1 p_1(x';z,z') + y_2 p_2(x';z,z')
\end{aligned}
$$

If $X_k$ is not effective, then $CATE(X_2=x)=CATE(X_2=x')$, therefore there exist no HTE.
If  $X_k$ is effective, then non-existence of HTEs requires that 

$$
y_1 p_1(x;t,t') + y_2 p_2(x;z,z') = y_1 p_1(x';z,z') + y_2 p_2(x';z,z')
$$

\begin{equation}\label{equ:exp3}
	\frac{y_1}{y_2}=\frac{p_2(x';z,z')-p_2(x;z,z')}{p_1(x;z,z')-p_1(x';z,z')}
\end{equation}

For arbitrarily chosen $y_1\in \mathbb R$ and $y_2 \in \mathbb R$, the above equality holds with probability zero if $p_1$ or $p_2$ can take value in a set with Lebesgue measure larger than 0.

\end{example}

\section[ Examples from the Literature]{Examples from the literature}\label{sec:examples}

\subsection{Description of study context, design, measures} \label{ss:context}
Table \ref{tab:studies} provides a summary of the context, design, and measures used in the empirical studies detailed in Section \ref{sec:guidance}. Note that the \citet{ariasetal2019} does not focus on heterogeneous treatment effects by partisan affiliation, though partisan affiliation (bias) is a parameter in their model. 

\begin{table}
\resizebox{\textwidth}{!}{
	\begin{tabular}{l|p{4cm}p{4cm}p{4cm}p{4cm}} \hline
	& \citet{eggers2014} & \citet{ariasetal2019} & \citet{anduizaetal2013} & \citet{defigueredoetal2023} \\ \hline \hline 
Context & UK parliamentary elections in 2010 & Mexican municpal elections in 2015 & Spain 2010 (survey) & Sao Paulo, Brazil mayoral runoff in 2008 \\ \\
Design & Observational study & Field experiment & Survey experiment & Field, survey experiments \\ \\
Treatment &  Binary indicator of sufficient implication in 2009 expenses scandal coded from media reports& Dissemination of information on municipal use of intergovernmental infrastructure transfer in municipalities with high or low malfeasant spending & Newspaper article-style vignette about corrupt mayor that randomizes partisan affiliation. & Flyer distribution about one of two candidates in runoff having convictions involving impropriety in office (field experiment). Same information distributed to untreated voters after election (survey experiment). \\

Outcome(s) & Incumbent vote share (constituency) and British Election Survey self-reported vote for incumbent & Precinct-level incumbent party vote share. & Perceived seriousness of  corruption allegation on 1-10 scale. & Precinct-level voting outcomes for each candidate (field experiment). Rating of support for each candidate on 0-10 scale (survey experiment).\\

Partisanship & Indicator for Labour-Conservative constituency.  & (Not measured directly) & Respondent closeness/affiliation to party in vignette.  & Pre-treatment support for runoff candidate \\ \hline

	\end{tabular}

}
\caption{Design of the four studies discussed in Section \ref{sec:guidance}. Note that some studies include additional outcome variables or different moderators (instead of or in addition to partisanship).}\label{tab:studies}
\end{table}
\subsection{Justification of theoretical representations} \label{ss:theory}
The four examples that we use in Section \ref{sec:guidance} develop and communicate their theories in different ways. We attempt to remain as faithful as possible to the underlying theories when translating these theoretical assumptions into the models we present in Figure \ref{fig:dags}. We describe these ``translations'' below (with documentation from the original studies):
\begin{enumerate}
\item \citet{eggers2014}: Presents a formal framework to illustrate the argument. In the framework, a voter ($i$) will vote for the incumbent from party $p_I$ over challenger from party $p_C$ if:
\begin{align*}
u^i(p_I)-u^i(p_C) \geq c_I^{(i)}, &&\text{Equation (1), p. 445}
\end{align*}
where $u^i(\cdot)$ is the utility of having a politician of a given party in power and $c_I$ is the incumbent ``candidate's perceived corruption level'' (p. 445). The individual-level superscript on $c_I^{(i)}$ is our addition.\\

Translating to our notation, $a_i \equiv u^i(p_I)-u^i(p_C)$ measures alignment with the incumbent party. Suppose further that $c_I^i$ may at the individual level as a function of individual exposure to information and an aversion to corruption. While we use the parameter $c_i$ slightly differently, in this case: 
$$\underbrace{c_I^i}_{\text{\citet{eggers2014}}} \equiv \underbrace{c_i \lambda_i}_{\text{Our notation}}$$
This framework most closely aligns with the model we present as a running example, though it omits a valence shock.  
\item \citet{ariasetal2019} present a formal model in which the voter's utility is given by their equations (1) and (3):

\begin{align*}
u_i(I \mid s) &= \underbrace{-(\pi^1(s) \theta^H + [1-\pi^1(s)]\theta^L)}_{\text{Expressive disutility from party }p} + \underbrace{\delta_i}_{\text{Partisan bias}} + \underbrace{I[s \in \{l, h\}]I[\Delta e(s)\leq 0]\sum_{j \in N_i}I[\Delta u_j(I\mid s)\geq 0]}_{\text{Coordination motive}}
\end{align*}
Here, the signal, $s$, is corruption information ($c_i$ in our notation). Given the assumption of common prior that the incumbent is a high type ($\theta^H)$, the expressive disutility on the basis of the posterior $\pi^1(s)$ is analogous to our distaste for incumbent corruption $\lambda_i c_i$. They include a second mechanism, the voter's coordination motives, which depends on the signal ($s$). This is absent from our motivating model. Partisan bias, $\delta_i$, is assumed to be a uniformly-distributed random variable. In our model, this is analogous to $a_i$. Importantly, partisan bias is not assumed to moderate either mechanism in our model or in theirs. Further, the coordination and distaste mechanisms are assumed to be additively separable with respect to utility.

\item \citet{anduizaetal2013} present a verbal argument about the possible channels through which partisanship might influence assessments of a candidate's corruption. They argue that:
\begin{quote}
``The \emph{limited information hypothesis} (or the ignorant voter in its earliest version) argues that citizens vote for corrupt politicians because they are not sufficiently informed about their misbehaviors. If people knew with certitude that an official was corrupt, they would evaluate that person negatively and vote against him or her.'' (p. 1677).
\end{quote}
This suggests that provision of information about corruption ($c_i$) should link to a citizens' assessment of incumbent corruption. But, they note that:
\begin{quote}
``The literature suggests two alternative explanations: citizens do not give \emph{credibility} to this information or they do not consider it \emph{important} enough compared with other elements when they think about politics.'' (p. 1677)
\end{quote}
The importance qualification for the effect of information provision connects to corruption aversion (i.e. $\lambda_i$)---how much does corruption matter to the voter? The credibility qualification is subsequently associated with motivated reasoning:
\begin{quote}
``Downplaying the importance of corruption when it affects the own party is to be expected as it reduces cognitive dissonance. It is well known that people tend to see the political world in a way that is consistent with their political predispositions$\ldots$This argument is consistent with the theory of motivated information processing'' (p. 1688).
\end{quote}
Motivated information processing/reasoning is subsequently connected to partisan affiliation, $a_i$ as a moderator of the effect of information on assessments of the incumbent's corruption.
\item \citet{defigueredoetal2023} present a verbal argument that first emphasizes the effect of information on voting behavior:
\begin{quote}
``An important precondition for electoral accountability is whether or not voters have access to information about the corrupt behavior of public officials, which may prompt them to vote
against such candidates on Election Day.'' (p. 729)
\end{quote}
This justifies the link between information disclosure ($c_i$) and assessments of incumbent corruption. They subsequently argue that partisan bias can condition the effects of information on voting behavior:
\begin{quote}
``Partisan bias may function as a buffer against corruption scandals and hinder electoral accountability for voters in democracies with established parties.'' (p. 729)
\end{quote}
They argue that partisan bias shapes corruption aversion, which in turn leads to variation in response to corruption information and subsequent voter behavior.
\begin{quote}
``Relatively young parties that built their reputation among voters on an anti-corruption platform may thus suffer harsher punishment from sympathizers when compared to other traditional parties.'' (p. 730)
\end{quote}

\end{enumerate}

\section[  Strengthening Assumptions]{Strengthening Assumptions}\label{app:sol}

In the main text, Propositions \ref{prop3} and \ref{prop4} indicate that the existence or non-existence of HTEs are not generally informative about mechanism activation. In this section, we explore the conditions under which invoking stronger assumptions can provide more information. \\

Recall the basic problem with transformed outcomes. In Figure \ref{fig:inddgp}, $h(Y)$ is the non-linear transformation of $Y$. Our main result shows that $X_{k+1}$ can also induce HTEs even though $X_{k+1}$ does not detect the mechanism. \\

\begin{figure}[!h]
	\begin{center}
		\begin{tikzpicture}
			\node at (0,0) (z) {$Z$};
			\node at (2,0) (m) {$M$};
			\node at (3.5, 0) (u) {$Y$};
			\node at (5, 0) (y) {$h(Y)$};
			\node at (1,1) (x1) {$X_{k}$};
			\node at (2,-1) (x2) {$X_{k+1}$};
			\draw[->] (z)--(m);
			\draw[->] (x1)--(1, 0.05);
			\draw[->] (m)--(u);
			\draw[->] (u)--(y);
			\draw[->] (x2)--(m);
		\end{tikzpicture}
	\end{center}
	\caption{DGP for a transformed outcome, $h(Y)$.} \label{fig:inddgp}
\end{figure}

\subsection[Transformed Outcomes under Monotonicity Assumptions]{Transformed Outcomes under Monotonicity Assumptions}\label{app:monotonicty}

In practice, people frequently and implicitly assume monotonicity of treatment effects. We ask whether this assumption permits inference about mechanism activation for transformed outcomes of interest. To be specific, consider the two DGPs in figure \ref{fig:monotone}. We will index these DGPs by $s \in \{1, 2\}$ where $s = 1$ corresponds to the left DAG and $s = 2$ corresponds to the right DAG.

\begin{figure}[!h]
	\begin{center}
		\begin{tikzpicture}
			\node at (0,0) (z) {$Z$};
			\node at (2,0) (m) {$M$};
			\node at (3.5, 0) (u) {$Y_1$};
			\node at (5, 0) (y) {$h(Y_1)$};
			\node at (1,1) (x1) {$X$};
			\draw[->] (z)--(m);
			\draw[->] (x1)--(1, 0.05);
			\draw[->] (m)--(u);
			\draw[->] (u)--(y);
			
			\node at (7,0) (z1) {$Z$};
			\node at (9,0) (m1) {$M$};
			\node at (10.5, 0) (u1) {$Y_2$};
			\node at (12, 0) (y1) {$h(Y_2)$};
			\node at (9,1) (x2) {$X$};
			\draw[->] (z1)--(m1);
			\draw[->] (x2)--(9, 0.2);
			\draw[->] (m1)--(u1);
			\draw[->] (u1)--(y1);
		\end{tikzpicture}
	\end{center}
	\caption{Two different DGPs. On the left, in DGP 1, $X$ is an MDV. On the right, in DGP 2, $X$ is not an MDV.} \label{fig:monotone}
\end{figure}

The left panel assumes $X_k$ is an MDV. $X_{k+1}$ is not an MDV in the right panel. In the figure, there are no other mediators. Therefore, both graphs satisfy exclusion Assumptions \ref{cond:irre1} and \ref{cond:irre2} by construction. \\

We will assume that $Y_s$ is a latent directly-affected outcome and $h(Y_s)$  is a binary variable given by:

\begin{align}\label{equ:dgpzo1}
	h(Y_s) &= \begin{cases}
		0 & Y_s \in (-\infty, c]\\
		1 & Y_s \in (c, \infty],
	\end{cases}
\end{align}

for some $c \in (-\infty, \infty)$. Propositions \ref{prop3} and \ref{prop4} show that we cannot differentiate between the DGPs 1 and 2 in Figure \ref{fig:monotone} on the basis of the existence or non-existence of HTEs for outcome $h(Y_s)$.\\

We will consider what can be gained by imposing a monotonicity assumption of the form: $\frac{\partial^2 Y}{\partial Z \partial X} > (<) 0$ (note that the inequalities are strict).\footnote{Writing the monotonicity assumption in this way assumes that this derivative exists.} Clearly montonicity can hold in the left panel (where $X$ is an MDV) but $\frac{\partial^2 Y}{\partial Z \partial X} = 0$ in the right panel in which $X$ is not an MDV. To explore the implications of monotonicity, we consider following DGPs for $s=\{1,2\}$:

\begin{align}\label{equ:dgpzo2}
Y_i= g_i(Z,X)+e_i
\end{align}

For $X \in X^{MDV}$, e.g., the DGP 1 in the left panel of Figure \ref{fig:monotone}, suppose that montonicity holds such that  $\frac{\partial^2 Y_1}{\partial X \partial Z}:=\beta(x,z)$, where $\beta(x,z)$ is either strictly positive or negative. For $X \notin X^{MDV}$, e.g., the right DGP in Figure \ref{fig:monotone}, by definition, we have $\frac{\partial^2 Y_2}{\partial x \partial z}=0$. \\

We ask whether researchers can differentiate these two cases when motonicity holds for the first DGP (e.g., an assumption of monotonicity).  First, given \eqref{equ:dgpzo1} and \eqref{equ:dgpzo2}, note that:

\begin{align}
E[h(Y_s)|Z, X] = \Pr(e_i \ge c-g_s(Z,X))
\end{align}

Let $f_e$ be the density of $e$ denote its derivative as $f'_e$.  We can then express HTEs for $h(Y_1)$ as: 

\begin{align}\label{equ:dgp1}
- f'_e(c-g_1)\frac{\partial g_1}{\partial X} \frac{\partial g_1}{\partial Z}+f_e(c-g_1)\beta(x,z)
\end{align}

and the HTEs for $h(Y_2)$ is 

\begin{equation}\label{equ:dgp2}
- f'_e(c-g_2)\frac{\partial g_2}{\partial X} \frac{\partial g_2}{\partial Z}
\end{equation}

The additional term $-f_e(c-g_1)\beta(x,z)$ in equation \eqref{equ:dgp1} may help us to differentiate two DGPs by generating a differently-signed HTE. 

~\\

\textbf{Sign Differences}

If, under certain $x$ and $z$, \eqref{equ:dgp2} and the first term of \eqref{equ:dgp1} have the same sign and the second term of \eqref{equ:dgp1} has the opposite sign and is sufficiently large, \eqref{equ:dgp2} and \eqref{equ:dgp1} will have different signs.  Moreover, If $e$ is uniformly distributed, then $f'_e=0$ and thus equation \eqref{equ:dgp2} is equal to 0 while equation \eqref{equ:dgp1} is non-zero.

We summarize the discussion in the following proposition.


\begin{prop}\label{prop:monotone}
	Consider the transformed outcome $h(Y_s)$ satisfying equation \eqref{equ:dgpzo1} in which moderation effect $\beta(x, z)$ is monotonic.  \\
	
	(1) Suppose that $e$ is uniformly distributed, then HTEs for $h(Y_2)$ is 0.\\
	
	(2)  Suppose $e$ is not uniformly distributed, then HTEs for $h(Y_1)$ and $h(Y_2)$ have different signs under two cases:\\
	\begin{itemize}
	\item[](2a)  $f'_e(c-g_2) \frac{\partial g}{\partial X}\frac{\partial g}{\partial Z}<0$ and 
$\beta(x,z) < \frac{f'_e(c-g_1)\frac{\partial g_1}{\partial X} \frac{\partial  g_1}{\partial Z}}{f_e(c-g_1)}$; or

	\item[](2b) $f'_e(c-g_2) \frac{\partial g}{\partial X}\frac{\partial g}{\partial Z}>0$ and 
	$\beta(x,z) >\frac{f'_e(c-g_1)\frac{\partial g_1}{\partial X} \frac{\partial  g_1}{\partial Z}}{f_e(c-g_1)}$
\end{itemize}
\end{prop}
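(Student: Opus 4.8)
The plan is to reduce the whole statement to the two closed-form expressions \eqref{equ:dgp1}--\eqref{equ:dgp2} for the second cross-derivative of the conditional mean of the transformed outcome, and then to read off each conclusion by elementary sign bookkeeping. First I would record that, under \eqref{equ:dgpzo1}--\eqref{equ:dgpzo2}, $E[h(Y_s)\mid Z,X]=\Pr\!\big(e\ge c-g_s(Z,X)\big)$. Since in this differentiable setting HTEs are governed by the cross-partial $\partial^2 E[h(Y_s)\mid Z,X]/\partial X\,\partial Z$, I would differentiate once in $Z$ (chain rule), obtaining $f_e(c-g_s)\,\partial g_s/\partial Z$, and then once in $X$ (product rule), using $\partial^2 g_2/\partial X\,\partial Z=0$ for $s=2$ and $\partial^2 g_1/\partial X\,\partial Z=\beta(x,z)$ for $s=1$. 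This yields exactly \eqref{equ:dgp1} for $h(Y_1)$ and \eqref{equ:dgp2} for $h(Y_2)$. Throughout I would restrict attention to points $(x,z)$ at which $c-g_1(z,x)$ and $c-g_2(z,x)$ lie in the interior of the support of $e$, so that $f_e>0$ there and $f_e'$ is well defined.

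Part (1) is then immediate: if $e$ is uniform, $f_e$ is constant on the interior of its support, hence $f_e'\equiv 0$ there, and \eqref{equ:dgp2} collapses to $0$.

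For part (2) I would verify the two inequality configurations directly. Write $T_1:=-f_e'(c-g_1)\,\frac{\partial g_1}{\partial X}\,\frac{\partial g_1}{\partial Z}$ for the first term of \eqref{equ:dgp1} and $T_2:=f_e(c-g_1)\,\beta(x,z)$ for the second, so that $\mathrm{HTE}_1=T_1+T_2$, while $\mathrm{HTE}_2=-f_e'(c-g_2)\,\frac{\partial g_2}{\partial X}\,\frac{\partial g_2}{\partial Z}$ by \eqref{equ:dgp2}. In case (2a), the hypothesis $f_e'(c-g_2)\,\frac{\partial g}{\partial X}\,\frac{\partial g}{\partial Z}<0$ says precisely that $\mathrm{HTE}_2>0$; multiplying the hypothesis $\beta(x,z)<f_e'(c-g_1)\,\frac{\partial g_1}{\partial X}\,\frac{\partial g_1}{\partial Z}\big/f_e(c-g_1)$ through by $f_e(c-g_1)>0$ gives $T_2<-T_1$, i.e.\ $\mathrm{HTE}_1=T_1+T_2<0$; hence the two HTEs have opposite signs. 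Case (2b) is the mirror image: the first hypothesis now forces $\mathrm{HTE}_2<0$, and multiplying the second hypothesis by $f_e(c-g_1)>0$ gives $\mathrm{HTE}_1>0$. I would close with a one-line remark on non-vacuity: whenever $e$ is non-uniform there are evaluation points with $f_e'\neq 0$, and since $\beta(x,z)$ is a free feature of the MDV channel it can be taken large enough (with the right sign) to dominate $T_1$ --- which is exactly the ``sufficiently large'' clause of the informal discussion preceding the statement.

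The only real care-point is the sign bookkeeping in part (2): one must keep track of the leading minus sign in $\mathrm{HTE}_2$ --- so that ``$f_e'(\cdot)\,\frac{\partial g}{\partial X}\,\frac{\partial g}{\partial Z}<0$'' reads as $\mathrm{HTE}_2>0$, not the reverse --- and one must note that multiplying an inequality through by $f_e(c-g_1)$ is sign-preserving precisely because we have restricted to the interior of the support, where the density is strictly positive. A minor notational wrinkle is that the case hypotheses are written with $\partial g/\partial X$, $\partial g/\partial Z$ rather than with an $s$-subscript; I would read these as the DGP-2 quantities $\partial g_2/\partial X$, $\partial g_2/\partial Z$, since that is the reading under which the displayed inequalities match \eqref{equ:dgp2}. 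No genuine obstacle arises beyond this.
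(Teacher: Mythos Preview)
Your proposal is correct and mirrors the paper's own argument: the paper derives the cross-partial expressions \eqref{equ:dgp1}--\eqref{equ:dgp2} in the text preceding the proposition, then states the proposition as a summary of the sign analysis you spell out (uniform $e\Rightarrow f_e'\equiv 0$ for part~(1); direct sign bookkeeping on the two terms for part~(2)). Your write-up is, if anything, more careful than the paper's---noting the interior-of-support restriction and disambiguating the unsubscripted $\partial g/\partial X$, $\partial g/\partial Z$ as the DGP-2 quantities---but the approach is identical.
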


	

In practice, however, it is difficult to verify conditions in (2). Corollary \ref{ap:corr} provides additional assumptions on $g(\cdot)$ and/or the tail behavior of $e$ that are sufficient to satisfy these conditions.

\begin{corollary}\label{ap:corr}
 Suppose conditions in proposition \ref{prop:monotone} holds. Assume that:  \\
 
 (a) $g$ is increasing in $X$ and $Z$,\\
 
 (b) the distribution of $e$ is unimodal,\\
 
 (c) $\beta$ is increasing in $X$ and $Z$,\\
 
 then \\
 
 (1)  small values of $X$ and $Z$ satisfy condition (2a), if any such $x,z$ exists;\\
 
 (2) large values of $X$ and $Z$ satisfy condition (2b), if any such $x,z$ exists.

\end{corollary}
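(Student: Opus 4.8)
The plan is to show that, once the regularity in (a)--(c) is imposed, each of the two sign conditions in Proposition~\ref{prop:monotone}(2) is automatically confined to one end of the range of $(X,Z)$. I would read off the conclusions from the HTE expressions \eqref{equ:dgp1} and \eqref{equ:dgp2} in three moves.

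First, substitute (a). Since $g_s$ is increasing in both $X$ and $Z$ for $s\in\{1,2\}$, we have $\frac{\partial g_s}{\partial X}\frac{\partial g_s}{\partial Z}>0$ throughout, and $f_e(c-g_1)>0$ on the interior of the support. Hence the product $f_e'(c-g_2)\frac{\partial g_2}{\partial X}\frac{\partial g_2}{\partial Z}$ has the sign of $f_e'(c-g_2)$, and the threshold $\frac{f_e'(c-g_1)\frac{\partial g_1}{\partial X}\frac{\partial g_1}{\partial Z}}{f_e(c-g_1)}$ appearing in the $\beta$-inequality has the sign of $f_e'(c-g_1)$. So (2a) reduces to ``$f_e'(c-g_2)<0$ and $\beta(x,z)$ lies below a threshold with the sign of $f_e'(c-g_1)$,'' and (2b) is the same with both inequalities reversed.

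Second, locate the sign change of $f_e'$ using (b). Let $m$ be the mode of $e$; unimodality gives $f_e'(t)>0$ for $t<m$ and $f_e'(t)<0$ for $t>m$. Composing with $t=c-g_s(z,x)$ and using that $g_s$ is increasing, $f_e'(c-g_s(z,x))<0 \iff g_s(z,x)<c-m$, a region consisting of small values of $X$ and $Z$; symmetrically $f_e'(c-g_s(z,x))>0$ corresponds to large $X,Z$. Thus the first clause of (2a) holds precisely on a set of small $(x,z)$ and the first clause of (2b) precisely on a set of large $(x,z)$, matching conclusions (1) and (2).

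Third, push the $\beta$-inequality to the relevant extreme using (c). On the region where the first clause of (2a) holds, restrict further to where $c-g_1$ also sits to the right of $m$, so $f_e'(c-g_1)<0$ and the threshold is negative; there the inequality demands that $\beta(x,z)$ be sufficiently negative, and since $\beta$ is increasing in $(X,Z)$ by (c), $\beta$ attains its smallest values at small $(x,z)$, so if (2a) holds anywhere it holds at all sufficiently small $(x,z)$ --- conclusion (1). The mirror argument on the up-set, where the threshold is positive and $\beta$ is largest at large $(x,z)$, gives conclusion (2). The main obstacle is this last step: because the threshold in the $\beta$-inequality itself moves with $(x,z)$ through $c-g_1$, monotonicity of $\beta$ alone does not immediately make the solution set a clean ``small-values'' (resp.\ ``large-values'') set; the fix is to work in the regime where $c-g_1$ and $c-g_2$ lie on the same side of $m$ --- consistent with (a), and exactly where both clauses of the condition can be simultaneously active --- where the threshold has a fixed sign and monotonicity of $\beta$ closes the argument. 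The qualifier ``if any such $x,z$ exists'' in the statement is precisely what spares us from having to verify that this regime is nonempty.
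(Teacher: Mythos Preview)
Your proposal is correct and follows essentially the same approach as the paper: use (a) to fix the sign of $\frac{\partial g}{\partial X}\frac{\partial g}{\partial Z}$, use unimodality (b) to pin the sign of $f_e'(c-g)$ to one end of the $(X,Z)$ range, and use (c) to drive $\beta$ to the requisite extreme. The paper's own proof is a one-sentence sketch of exactly this; your version is simply more explicit, and in particular you flag the moving-threshold issue in the $\beta$-inequality that the paper's proof passes over silently---the ``if any such $x,z$ exists'' clause absorbs it in both treatments.
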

\begin{proof}
It is straightforward to prove the corollary. If we pick small values of $x$ and $z$ in the data, then by condition (1) $g$ is small and $\frac{\partial g}{\partial X}\frac{\partial g}{\partial Z}>0$, and by (2) $f'_e(c-g)<0$, by (3) $\beta$ is small enough as well. These together imply 2(a) in proposition \ref{prop:monotone} is satisfied. The same logic holds for (2).
\end{proof}

Proposition \ref{prop:monotone} reveals and Corollary \ref{ap:corr} illustrates that Assumptions \ref{cond:irre1}-\ref{cond:irre2} and monotonicity are not, in general, jointly sufficient to use HTEs to assess mechanism activation on tansformed outcomes.



\subsection[Assumptions on the Latent Utility Distribution]{ Assumptions on the Latent Utility Distributions}\label{app:transform}

In practice, one of the most common cases is that $Y$ is the latent utility, and $h(Y)$ is the observed action or discrete choice. If we can re-construct the utility from the observed data, then we can use it as the new outcome and use HTEs to assess mechanism activation under Assumptions 1-2. From Propositions \ref{prop1} and \ref{prop2}, we know more information on mechanism activation can be ascertained from HTEs on the latent outcome.\\

To re-construct the utility, one popular solution is to apply random utility models (RUM). In such a model, a decision maker, $i$, faces a choice among $M$ alternatives. The utility that decision maker $i$ obtains from alternative $m$ is $U^i_{m}$. The decision maker $i$ is assumed to choose alternative $m$ if and only if $U^i_{m} > U^i_{m'} \forall m' \neq m$. The researcher does not observe the decision maker's utility. We instead observe only attributes of the alternatives and decision makers. A function $V$ can be specified with those observed attributes to relate to the decision maker's utility. Therefore, the utility is decomposed as $U^i_m = V^i_m + \varepsilon^i_m$, where $\varepsilon^i_m$ captures the unobserved factors that affect utility.\\

The most widely used assumption in RUMs is that $\varepsilon$ is independently, identically distributed according to an extreme value density function. The density function is $f(\varepsilon^i_m) = e^{-\varepsilon^i_m}e^{-e^{-\varepsilon^i_m}}$ and the CDF is $F(\varepsilon^i_m)=e^{-e^{\varepsilon^i_m}}$. The difference between two extreme value variables is distributed logistic: let $\varepsilon^i = \varepsilon^i_m-\varepsilon^i_{m'}$, then

$$
F(\varepsilon^i)=\frac{e^{\varepsilon^i}}{1+e^{\varepsilon^i}}
$$

The extreme value distribution (and thus logistic distribution) is similar to normal but has fatter tails. Accordingly, we get the familiar logit choice probability that the decision maker chooses alternative $m$:

$$
\frac{e^{V^i_m}}{\sum_{m}e^{V^i_m}}
$$

In practice, utility is usually specified as a linear function $V^i_m=X \beta$, where $X$ is the vector of observed variables and $\beta$, are parameters. We can estimate $\beta$ from the data. Then, we could treat $V_m^i$ as the new outcome to explore the mechanism using HTE. It should be clear that Propositions \ref{prop1} and \ref{prop2} hold for $V_i^m$ even if the utility is $U$ by linearity.\\

Now, we use our motivating example to illustrate how to apply the above RUM and distribution assumptions in practice. In the main text, we provide a simple model of the effect of corruption information on pro-incumbent voting. The model specifies the systematic component of utility, $V_m^i$. A RUM requires us to specify a random component of utility, $\varepsilon$ which also affects individual behavior and is additively separable from $V_m^i$ (i.e. not relevant to the theory of interest). Therefore,  given the observed voting data we have, if we assume the distribution of $\varepsilon$ is type-I extreme value, then the probability individual votes for the incumbent is 

$$
\mathbb{P}(Y_{i2}=1) = \frac{e^{V_m^i}}{1+e^{V_m^i}}
$$

In practice, researchers tend to specify a linear model to approximate $V_m^i$, though other functional forms are also possible. If $V_m^i$ is correctly specified, we then treat $V_m^i$ as an outcome (utility) and use this measure when estimating HTE. 

\subsection{Assumptions about the Magnitude of HTEs}\label{sec:magnitude}

In practice, the exclusion assumptions may be too restrictive (strong). Consider the scenario depicted in the figure below, where our focus is on the activation of mechanism $M_1$. However, the covariate $X_k$ may influence mechanism $M_2$ and/or the direct effect. Our results indicate that we cannot learn about the activation of $M_1$ from the existence of HTE in $X_k$ since the exclusion assumption is violated. Here, we consider what might be learned from observing a large (in magnitude) HTE. This intuition can be formalized using Bayesian inference.\\

\begin{figure}[!h]
	\begin{center}
		\begin{tikzpicture}
			\node at (0,0) (z) {$Z$};
			\node at (3,1.1) (m) {$M_1$};
			\node at (3,0) (m2) {$M_2$};
			\node at (6, 0) (y) {$Y_1$};
			\node at (8, 0) (y2) {$Y_2$};
			\node at (1,2.5) (x1) {$X_k$};
			\draw[->] 
			(z) edge [bend left = 22.5] (m)
			(m) edge [bend left = 22.5] (y)
			(z) edge [bend right = 35] (y);
			\draw[->, dashed, color = blue]
			(x1) edge [bend right = 20] (1, -.5);
			\draw[->, dash dot, color = red]
			(x1) edge [bend left = 20] (2, 0.1);
			\draw[->, dash dot, color = red]
			(x1) edge [bend left = 20] (4, 0.1);
			
			\draw[->] 
			(x1) edge [bend left = 30] (y);
			\draw[->] 
			(x1) edge [bend left = 10] (m2);
			\draw[->, color= orange, thick] (x1)--(1.25, .95);
			\draw[->, color = orange, thick] 
			(x1) edge [bend left = 20] (4, 1.15);
			\draw[->] (z)--(m2);
			\draw[->] (m2)--(y);
			\draw[->] (y)--(y2);
		\end{tikzpicture}
	\end{center}
	\caption{The blue and yellow dashed arrows represent potential exclusion assumption violations.}
\end{figure}

Suppose that we have a prior $p$ over the activation of mechanism $M_1$. We characterize the violation of the exclusion assumptions---e.g., HTEs (the magnitude of the difference of CATEs) in $X_k$ that are generated by other mechanisms---with a normal distribution, $\mathcal{N}(\mu_2, \sigma_2^2)$. The magnitude  $|\mu_2|$ captures the severity of the violation. We assume that the covariate $X_k$ generates greater heterogeneity via the mechanism of interest, $M_1$, which we capture with the normal distribution $\mathcal{N}(\mu_1, \sigma_1^2)$, where $|\mu_1|>|\mu_2|$. Then, the posterior after we observe such a large HTE is given by 
\begin{align*}
	\mathbb{P}[M_1 \text{ is active}\mid\text{as extreme as  HTE}] & = \frac{p\mathbb{P}[\text{HTE}|\text{active}]}{p\mathbb{P}[\text{HTE}|\text{active}] + (1-p)\mathbb{P}[\text{HTE}|\text{not active}]}\\
	&=\frac{p\Phi( -2|\frac{HTE-\mu_1-\mu_2}{\sigma_1+\sigma_2}| )}{p \Phi( -2|\frac{HTE-\mu_1-\mu_2}{\sigma_1+\sigma_2}| ) + (1-p)\Phi( -2|\frac{HTE-\mu_2}{\sigma_2}| )}
\end{align*} where $\Phi(\cdot)$ is the CDF of standard normal distribution. \\

A Bayesian analyst might specify a prior distribution for the activation of a mechanism, denoted as $p \in [0,1]$, using a Beta distribution, $Beta(a,b)$. Subsequently, based on observed data, the posterior density of $p$ is \begin{align*}
	\mathbb{P}[p|\text{HTE}]  = \frac{dBeta(a,b) [\theta \Phi( -2|\frac{HTE-\mu_1-\mu_2}{\sigma_1+\sigma_2}| ) + (1-\theta) \Phi( -2|\frac{HTE-\mu_2}{\sigma_2}| ) ]}{\int_0^1dBeta(a,b) [\theta \Phi( -2|\frac{HTE-\mu_1-\mu_2}{\sigma_1+\sigma_2}| ) + (1-\theta) \Phi( -2|\frac{HTE-\mu_2}{\sigma_2}| ) ]dp.}
\end{align*}  

For readers well-versed in Bayesian statistics, the posterior distribution of $p$ can be characterized using a variety of sampling techniques. For less familiar readers, we have provided a straightforward \texttt{R} script, which requires only the \texttt{ggplot2} package and is designed to be user-friendly. It automatically generates figures like Figure \ref{fig:bayes} to illustrate both the prior and posterior distributions, where we set $HTE=5$, $\mu_1=1$, $\mu_2=0.5$, $Beta(2,3)$, and $\sigma^2_1=\sigma^2_2=1$ below.\\

\begin{figure}[!h]
	\centering
	\includegraphics[width=0.5\linewidth]{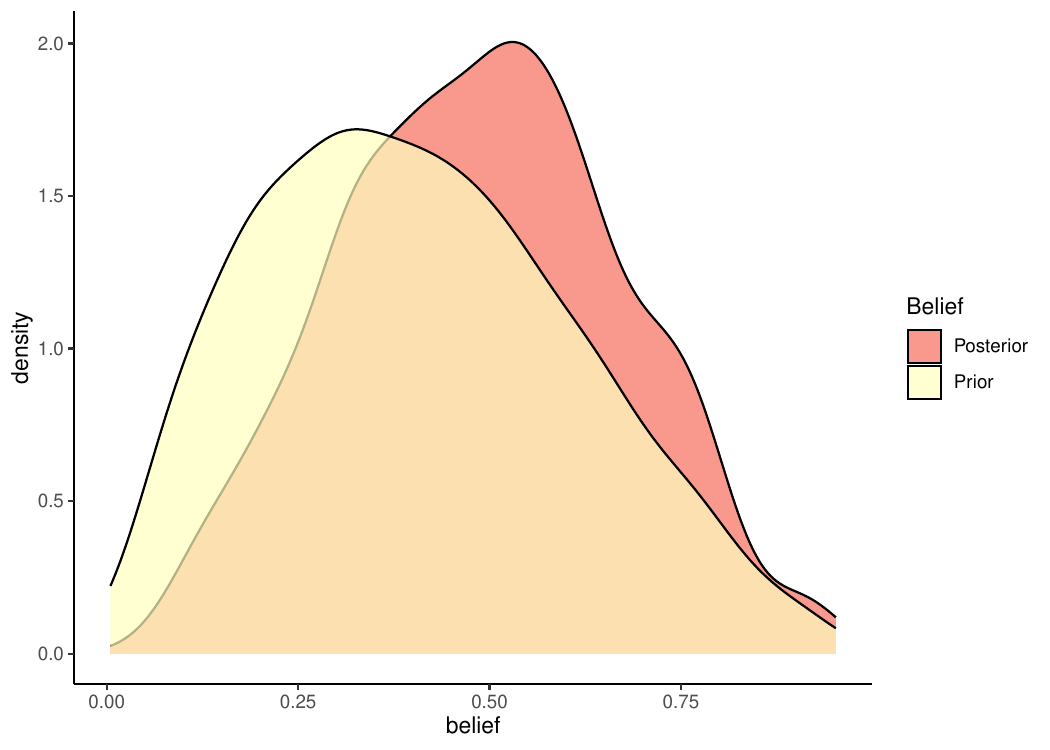}
	\caption{Illustration of prior and posterior beliefs under a model in which $HTE=5$,$\mu_1=1$, $\mu_2=0.5$, $Beta(2,3)$, and $\sigma^2_1=\sigma^2_2=1$. The densities reflect the prior and posterior distributions of $p$, the probability that mechanism $1$ is active. The shift in the distribution to the right from prior to posterior suggests that the large HTE strengthens our belief that mechanism $1$ is active.}
	\label{fig:bayes}
\end{figure}

The strategy described above can also be applied to conduct sensitivity analysis. For instance, starting with a non-informative prior, one can determine how large $\mu_2$ needs to be---measured as a proportion of the total Average Treatment Effect (ATE)---to result in a posterior probability of the mechanism being active that is less than $50\%$.

\clearpage
\singlespacing
\bibliographystyle{apsr2}
\bibliography{literature}

\end{document}